\pgfplotsset{compat=1.18}
\renewenvironment{abstract}
{\small\begin{quote}\noindent \par{\sc \abstractname.}}
{\noindent\end{quote}}
\newrobustcmd{\MakeTitleCase}[1]{%
  \ifthenelse{\ifcurrentfield{booktitle}\OR\ifcurrentfield{booksubtitle}%
    \OR\ifcurrentfield{maintitle}\OR\ifcurrentfield{mainsubtitle}%
    \OR\ifcurrentfield{journaltitle}\OR\ifcurrentfield{journalsubtitle}%
    \OR\ifcurrentfield{issuetitle}\OR\ifcurrentfield{issuesubtitle}%
    \OR\ifentrytype{book}\OR\ifentrytype{mvbook}\OR\ifentrytype{bookinbook}%
    \OR\ifentrytype{booklet}\OR\ifentrytype{suppbook}%
    \OR\ifentrytype{collection}\OR\ifentrytype{mvcollection}%
    \OR\ifentrytype{suppcollection}\OR\ifentrytype{manual}%
    \OR\ifentrytype{periodical}\OR\ifentrytype{suppperiodical}%
    \OR\ifentrytype{proceedings}\OR\ifentrytype{mvproceedings}%
    \OR\ifentrytype{reference}\OR\ifentrytype{mvreference}%
    \OR\ifentrytype{report}\OR\ifentrytype{thesis}}
    {#1}
    {\MakeSentenceCase{#1}}}
\newtheorem{theorem}{Theorem}
\newtheorem{definition}[theorem]{Definition}
\newtheorem{lemma}[theorem]{Lemma}
\newtheorem{remark}[theorem]{Remark}
\newtheorem*{example*}{Example}
\newtheorem{proposition}[theorem]{Proposition}
\newtheorem{corollary}[theorem]{Corollary}
\newtheoremstyle{named}{}{}{\itshape}{}{\bfseries}{.}{.5em}{\thmnote{#3}}
\theoremstyle{named}
\def\namedlabel#1#2{\begingroup
   \def\@currentlabel{#2}%
   \label{#1}\endgroup
}
\crefname{assumption}{assumption}{assumptions}
\crefname{equation}{}{}
\Crefname{equation}{}{}
\newcommand{\yes}{\textcolor{ForestGreen}{\checkmark}}
\newcommand{\no}{\textcolor{BrickRed}{\cross}}
\newcommand{\EK}[2]{E_K^{#1 \to #2}}
\newcommand{\SK}[1]{\smash{\sigma_K^{(#1)}}}
\newcommand{\EL}[2]{E_L^{#1 \to #2}}
\newcommand{\SL}[1]{\smash{\sigma_L^{(#1)}}}
\newcommand{\EKL}[2]{E_{KL}^{#1 \to #2}}
\def\thmt@storefile{thmstore}
\newcommand{\probPlot}{
    \hspace{-2cm}
    \begin{axis}[
    axis lines=none,
    xlabel=$x$, ylabel=$y$,
    xmin=1.25, xmax=8,
    ymin=-1, ymax=8,
    domain=0.34:7,
    samples=200
    ]
    \addplot[Blue!20, fill] 
    ({x}, -{1/x}+2.9) 
    \closedcycle;
    \end{axis}
}
\newcommand{\probFig}{
  \begin{tikzpicture}[scale=1.2]


    \node[] at (0,0) {\probPlot};
    \node[yscale = -1] at (0,4.05) {\probPlot};
    \node[Blue] at (4.3,1) {$\Sigma^{\mathrm{rand}}$};
    
    \foreach [count=\i] \x in {0.2,0.8,1.4,2,2.6,3.2,3.8} {
        \pgfmathsetmacro{\yval}{{-1/(2.8*(\x+1))+2.025}}  
        \draw[BrickRed] (\x, 2.025) -- (\x,\yval); 
        \node[draw, fill=black, black, circle, minimum size=0.1cm, inner sep=0pt] (sigmaM) at (\x,2.025) {};
        \node[draw, fill=Blue, Blue, circle, minimum size=0.1cm, inner sep=0pt] 
            at (\x,\yval) {};
    }
    \node[] at (4.4,2.025) {$\dots$};

    \foreach [count=\i] \x in {0.2,0.8,1.4} {
        \pgfmathsetmacro{\yval}{{-1/(2.7*(\x+0.8))+2.025}}  
        \node[font = \tiny] at (\x+0.1,2.225) {$\rho^{(\i)}$};
    }

    \pgfmathsetmacro{\yval}{{-1/(2.8*(1.7+0.3))+2.025}} 
    \node[font = \tiny] at (2,2.225) {$\dots$};

  \end{tikzpicture}

}
\newcommand{\QFig}{
  \begin{tikzpicture}[scale=1.2]

    \draw[fill = Blue!20!white, Blue!20!white] (0,-0) rectangle (5,1.1);
    \draw[fill = Blue!20!white, Blue!20!white] (0,1.9) rectangle (5,3);

    \foreach \x in{0.2,0.8,1.4,2,2.6,3.2,3.8,4.4} {
        \pgfmathsetmacro{\yval}{1.09}   
        \draw[BrickRed] (\x, 1.5) -- (\x,\yval);
        \node[draw, fill=black, black, circle, minimum size=0.1cm, inner sep=0pt] at (\x,1.5) {};
        \node[draw, fill=Blue, Blue, circle, minimum size=0.1cm, inner sep=0pt] 
            at (\x,\yval) {};
    }

    \pgfmathsetmacro{\yval}{1.09}
    \foreach [count=\i] \x in {0.2,0.8,1.4} {
        \node[font = \tiny] at (\x+0.1,1.7) {$\rho^{(\i)}$};
    }

    \node[] at (4.9,1.5) {$\dots$};
    \node[Blue] at (4.5,0.5) {$\Sigma^{\mathrm{rand}}$};
    \node[font = \tiny] at (2.03,1.7) {$\dots$};

  \end{tikzpicture}
}
\newcommand{\lockandkey}[4]{
    \begin{tikzpicture}[baseline=-0.5ex]
        \draw[thick] (-0.7,-0.4) rectangle (0.7,0.4);
        \node[scale = 0.7] at (-0.5,-0.2) {#4};
        \node[scale = 1] at (0,0) {#1};
        \draw[-stealth, double] (0,0.8) node[above, align = center] {#2} -- (0,0.4);
        \draw[-stealth, double] (0,-0.4) -- (0,-0.8) node[below, align = center] {#3};
    \end{tikzpicture}
}
\newcommand{\lockandkeynoinput}[4]{
    \begin{tikzpicture}[baseline=-0.5ex]
        \draw[-stealth, double, transparent] (0,0.8) node[above, align = center] {#2} -- (0,0.4);
        \draw[thick] (-0.7,-0.4) rectangle (0.7,0.4);
        \node[scale = 0.7] at (-0.5,-0.2) {#4};
        \node[scale = 1] at (0,0) {#1};
        \draw[-stealth, double] (0,-0.4) -- (0,-0.8) node[below, align = center] {#3};
    \end{tikzpicture}
}
\newcommand{\lockandkeynooutput}[4]{
    \begin{tikzpicture}[baseline=-0.5ex]
        \draw[-stealth, double] (0,0.8) node[above, align = center] {#2} -- (0,0.4);
        \draw[thick] (-0.7,-0.4) rectangle (0.7,0.4);
        \node[scale = 0.7] at (-0.5,-0.2) {#4};
        \node[scale = 1] at (0,0) {#1};
        \draw[-stealth, double, transparent] (0,-0.4) -- (0,-0.8) node[below, align = center] {#3};
    \end{tikzpicture}
}
\newcommand{\key}[3]{
    \lockandkey{$#1$}{$#2$}{$#3$}{$K$}
}
\newcommand{\lock}[3]{
    \lockandkey{$#1$}{$#2$}{$#3$}{$L$}
}
\newcommand{\wireKtoL}[5]{
    \begin{tikzpicture}[baseline=-0.5ex]
       \draw[double,-stealth]
            plot[smooth] coordinates {(-1.2,-0.4)(-1.2,-0.5)(-1.1,-0.6) (-0.8,-0.8)(-0.5,-0.7) (0,0.6) (0.8,1) (1,0.6)}
            -- (1,0.46); 
        \node[] at (-1.2,0) {\lockandkeynooutput{$#1$}{$#2$}{$#3$}{$K$}};
        \node[] at (0,0) {$\otimes$};
        \node[] at (1,0) {\lockandkeynoinput{$#4$}{\phantom{$#2$}}{$#5$}{$L$}};
    \end{tikzpicture}   
}
\newcommand{\NSbox}[6]{
    \begin{tikzpicture}[baseline=-0.5ex]
        \draw[-stealth, double] (-1.2,0.8) node[above, align = center] {$#1$} -- (-1.2,0.4);
        \draw[-stealth, double] (0,0.8) node[above, align = center] {$#2$} -- (0,0.4);
        \draw[-stealth, double] (1.2,0.8) node[above, align = center] {$#3$} -- (1.2,0.4);
        \draw[thick] (-1.6,-0.4) rectangle (1.6,0.4);
        \draw[dotted] (-0.6,-0.4) -- (-0.6,0.4);
        \draw[dotted] (0.6,-0.4) -- (0.6,0.4);
        \draw[-stealth, double] (-1.2,-0.4) -- (-1.2,-0.8) node[below, align = center] {$#4$};
        \draw[-stealth, double] (0,-0.4) -- (0,-0.8) node[below, align = center] {$#5$};
        \draw[-stealth, double] (1.2,-0.4) -- (1.2,-0.8) node[below, align = center] {$#6$};
    \end{tikzpicture}
}
\title{\LARGE{Against probability:} \\
\Large{A quantum state is more than a list of probability distributions}}
\date{}
\author{Ladina Hausmann and Renato Renner}
\affil{\small \it Institute for Theoretical Physics, ETH Zürich, 8093 Z\"urich, Switzerland}
\begin{document}
\maketitle
\vspace{-1.6cm}

\begin{abstract}
  The state of a quantum system can be represented by listing the outcome probabilities for a tomographically complete set of measurements. Such representations appear throughout physics, for example, in quantum field theory via correlation functions and in quantum foundations within generalized probabilistic frameworks. In this paper, we show a no-go result: To enable useful statements, the probability representation must be topologically robust---preserving the notion of closeness between states. Yet, a topologically robust probability representation cannot simultaneously retain other essential structure, such as the subsystem structure.
\end{abstract}

\section{Introduction}
Let $P_M(\rho)$ denote the probability distribution obtained by applying a measurement $M$ to a quantum state  $\rho$ on a separable Hilbert space $\mathcal{H}$. If one collects these probability distributions for all $M$ from a tomographically complete set $\mathcal{M}$ of measurements, the resulting tuple 
\begin{equation}
    \mathbf{P}_{\mathcal{M}}(\rho) \coloneq \bigl(P_M(\rho)\bigr)_{M \in \mathcal{M}}
\end{equation}
uniquely specifies the quantum state $\rho$. Such \emph{probability representations} arise, for instance, when considering correlation functions in quantum field theories \cite{Peskin1995}, and are widespread in quantum information theory and quantum foundations~\cite{Wootters1986,Hardy_2001,Mana2004,Barrett2005a,Barrett_2007,Dakic2009,Wilce2009,Chiribella2010,Chiribella2011,Masanes_2011,Appleby2011,Zapo2012,delaTorre2012,Janotta2013,Masanes2013,Masanes2014b,Barnum2014,Hardy2015,Muller2016,Muller2016,Hohn2017,Hohn2017tool,Hohn_2017rules,DAriano_2017,Selby2021,Plavala_2023,Fuchs_2013}. 

This raises a fundamental question: does $\mathbf{P}_{\mathcal{M}}(\rho)$ represent $\rho$ faithfully? If faithfulness merely meant injectivity, the answer would be yes, as this is precisely the notion of tomographic completeness. However, a faithful representation should also ensure that statements derived from $P_{\mathcal{M}}(\rho)$ are robust under small perturbations. In this work, we explain why robustness is necessary, and why it nevertheless cannot be achieved for probability representations without losing essential structure, such as the subsystem structure.

\section{Why robustness is necessary}
We begin with an example where the use of probability representations appears natural: generating random bits $R_1, \ldots, R_\ell$ from a quantum process. For concreteness, imagine a protocol that produces each bit $R_i$ as follows: prepare $n$ unstable atoms and count the number of decays within a fixed time interval (e.g., one second). Set $R_i=0$ if this number is even and $R_i=1$ otherwise. 

An $\ell$-bit string $R = R_1 \cdots R_\ell$ is said to be \emph{perfectly random} if it cannot be predicted with probability greater than~$2^{-\ell}$, even by an all-powerful agent with access to any side information $E$ available prior to the process, such as the internal state of the source supplying the atoms. Formally, this means that the joint state $\smash{\rho^{(n)} = \rho^{(n)}_{RE}}$ of~$R$ and~$E$, when each $R_i$ was generated using $n$ atoms, satisfies
\begin{equation} \label{eq:perfectrandomness}
    \rho^{(n)} \in \Sigma^{\mathrm{rand}} \coloneq \left\{\bar\Pi^{(\ell)}_R \otimes \sigma_E \Big| \ell \in \mathbb{N}, \sigma_E  \in \mathcal{D}(E)\right\}
\end{equation}
where $\bar \Pi^{(\ell)}_R$ is the uniform state on bit strings of length $\ell$, and $\mathcal{D}(E)$ is the set of density operators on $E$. 

In any practical randomness-generation protocol, the output $R$ inevitably has residual correlations with other systems. Moreover, since $R$ is produced by a quantum process, it is not automatically classical: in general, the joint state $\smash{\rho^{(n)}_{R E}}$ is not a cq-state. For these reasons, condition \eqref{eq:perfectrandomness} cannot be achieved exactly, but only approached asymptotically by investing additional resources~\cite{Vazirani2012,Pironio2013,Frauchiger2013}, for instance by increasing the number $n$ of atoms in our example protocol. Therefore, it is standard to adopt an approximate condition of the form 
\begin{equation}\label{eq:state_conv}
    \lim_{n \to \infty} \delta(\rho^{(n)}, \Sigma^{\mathrm{rand}}) = 0, 
\end{equation}
where $\delta(\rho^{(n)}, \Sigma^{\mathrm{rand}}) \coloneq \inf_{\sigma \in \Sigma^{\mathrm{rand}}} \delta(\rho^{(n)}, \sigma)$, with $\delta(\cdot, \cdot)$ denoting the trace distance. Operationally, this quantifies the maximal probability with which $\rho^{(n)}$ can be distinguished from the ideal behaviour defined by $\Sigma^{\mathrm{rand}}$ when one has access to both $R$ and $E$~\cite{Nielsen2010,Ferradini_2025}. 
We illustrate  the definition of randomness in \eqref{eq:state_conv} with an example that does \emph{not} satisfy it.
\begin{example*}\label{ex:anti_symmetric}
    Let $n = \ell$, $R$ and $E$ be $2^n$-dimensional systems, and the joint state of~$R$ and $E$ after the randomness generation protocol 
    \begin{align}
    \rho^{(n)}_{RE} \propto \sum_{1\leq u < v \leq 2^n} \pi_{u,v}
    \end{align}
    where $\pi_{u,v}$ denotes the projector on the subspace of $R E$ spanned by ${\ket{u}_R \ket{v}_E - \ket{v}_R \ket{u}_E}$, for an arbitrary choice of orthonormal bases.  This state is entangled and thus distinct from the states in $\Sigma^{\mathrm{rand}}$, which are separable. Concretely, as shown in \cite[Example II.9.]{Christandl_2007}, 
        \begin{equation}\label{eq:not_conv}
            \forall n \in \mathbb{N}: \delta(\rho^{(n)}, \Sigma^{\mathrm{rand}}) \geq \frac{1}{4}.
        \end{equation}
    Therefore, $R$ is not approximately random according to criterion \eqref{eq:state_conv}.
\end{example*}

To work at the level of probability representations, two conditions must be met: (i)~Approximate relations like~\eqref{eq:state_conv} must be expressible via a distance measure defined directly on probability distributions, such as the statistical distance. (ii)~The representation $\mathbf{P}_\mathcal{M}$ must have enough structure to support definitions like~\eqref{eq:perfectrandomness}.  In particular, the subsystem structure of $RE$ must be preserved, i.e., $\mathcal{M}$ only contains measurements that act locally on~$R$ and $E$. Satisfying both conditions is non-trivial~\cite{Koenig2007}, as we now illustrate by continuing the above example.

\begin{figure}[tbp]
    \centering
    \begin{subfigure}{0.35\textwidth}
         \begin{tikzpicture}[scale=1]
            \node[] at (0,0) {\QFig};
        \end{tikzpicture}
        \caption{}
        \label{fig:state}
    \end{subfigure}
    \hspace{2cm}
    \begin{subfigure}{0.35\textwidth}
        \hspace{-0.5cm}
        \begin{tikzpicture}[scale=1]
            \node[] at (0,0) {\probFig};
        \end{tikzpicture}
        \vspace{-1.25cm}
        \caption{}
        \label{fig:prob}
    \end{subfigure}

    \caption{{\bf Non-robustness of $\mathbf{P}_{\mathcal{M}_{\otimes}}$.} The figure illustrates how the example violates \cref{def:preservation}.  A region of the quantum state space is shown, containing the sequence of states $\rho^{(n)}$ (black) and the corresponding closest points in $\Sigma^{\mathrm{rand}}$ (blue).  Panel~(a) uses the trace distance~$\delta$, which stays constant for all $n$. Panel~(b) uses the metric~$d_{\mathcal{M}_{\otimes}}$ induced by the local representation $\mathbf{P}_{\mathcal{M}_{\otimes}}$, for which the distance to $\Sigma^{\mathrm{rand}}$ shrinks with increasing~$n$.
    }\label{fig:prod_conv}
\end{figure} 

\begin{example*}[continued]
    While the following holds for any local measurement,\footnote{To see this, it suffices to observe that the state on $E$ conditioned on any outcome of a rank-$1$ measurement applied to $R$ is maximally mixed on a subspace of dimension $\smash{2^n-1}$, and thus is $\smash{2^{-n}}$-close to~$\bar\Pi^{(n)}_E$.} we focus, for illustrative purposes, on the case where the same rank-$1$ measurement~$\bar{M}$ is applied to both $R$ and $E$. Exploiting the antisymmetry of $\smash{\rho^{(n)}_{RE}}$, one finds that the joint probability distribution $P_{\bar{M} \otimes \bar{M}} = \smash{P_{\bar{M} \otimes \bar{M}}(\rho^{(n)}_{RE})}$  of the two outcomes is
    \begin{equation}
        P_{\bar{M} \otimes \bar{M}}(x, y) = \begin{cases}
            \quad 0 & \text{if } x = y \\
            \frac{1}{2^n(2^n-1)} & \text{else}.
        \end{cases}
    \end{equation}
    This implies that $\smash{\frac{1}{2}\|P_{\bar{M} \otimes \bar{M}}(\rho^{(n)}_{RE}) - P^{(n)} \times P^{(n)} \|_1 \leq 2^{-n}}$,
    where $P^{(n)}$ is the uniform distribution on an alphabet of size $2^n$. Consequently, at the level of such probability representations, $\smash{\rho^{(n)}_{RE}}$ approaches, as $n \to \infty$, the set $\Sigma^{\mathrm{rand}}$ defining perfect randomness, see \cref{fig:prod_conv}.
\end{example*}

The conclusion of this example can be phrased concisely by introducing the metric
\begin{align} \label{eq:representationmetric}
    d_{\mathcal{M}}(\rho, \sigma) \coloneq \frac{1}{2} \sup_{M \in \mathcal{M}} \bigl\| P_M(\rho) - P_M(\sigma)\bigr\|_1 \, .
\end{align}
Namely, choosing $\mathcal{M} = \mathcal{M}_{\otimes}$ to be the set of local measurements, the sequence $\smash{(\rho_{RE}^{(n)})_{n \in \mathbb{N}}}$ converges to $\Sigma^{\mathrm{rand}}$ with respect to $d_{\mathcal{M}_{\otimes}}$. Yet the same is not true for the metric $\delta$; see \eqref{eq:not_conv}. This motivates the following definition. 

\begin{definition}\label{def:preservation}
    We say that the representation $\mathbf{P}_{\mathcal{M}}$ is \emph{topologically robust} (or simply \emph{robust}) if the induced metric~$d_{\mathcal{M}}$ satisfies 
    \begin{equation}\label{eq:implication}
       \lim_{n \to \infty} d_{\mathcal{M}}(\rho^{(n)},\Sigma) = 0 \ \implies \ \lim_{n \to \infty} \delta(\rho^{(n)}, \Sigma) = 0
    \end{equation}
    for all subsets~$\Sigma$ and sequences $(\rho^{(n)})_{n \in \mathbb{N}}$ of states.
\end{definition}

Note that the opposite implication is always true. Thus, robustness implies that it does not matter whether the metric $d_{\mathcal{M}}$ or $\delta$ is chosen for a definition like \cref{eq:state_conv}. 
However, the above example shows that robustness does not always hold, in particular, it fails for~$\mathcal{M} = \mathcal{M}_{\otimes}$.

To quantify the success of an information processing protocol, the relevant metric should characterize how well the protocol output can be distinguished from the ideal behaviour. Both metrics $\delta$ and $d_{\mathcal{M}}$ have this property. The difference is that the first metric quantifies distinguishability under arbitrary measurements and the second under the restricted set~$\mathcal{M}$. 

The non-robustness of $d_{\mathcal{M}_{\otimes}}$ raises the question: which set of measurements is more operationally relevant? To decide this, we use the \emph{principle of composability}~\cite{MaurerRenner2011}. The principle demands that the underlying distance measure~$d$ be stable under the addition of auxiliary systems in any state $\Psi$, i.e.,
\begin{equation} \label{eq:stability} 
    d(\rho_{RE}, \Sigma_{RE}) = d(\rho_{RE} \otimes \Psi_{R' E'}, \Sigma_{RE} \otimes \Psi_{R' E'}).
\end{equation}  
While the trace distance $\delta$ satisfies this principle, the metric $d_{\mathcal{M}_{\otimes}}$ defined with respect to product measurements acting separately on $RR'$ and $EE'$, does not. The former is a consequence of the monotonicity of the trace distance under data-processing \cite[Theorem~9.2]{Nielsen2010}. For the latter, note that our example implies $d_{\mathcal{M}_{\otimes}}(\rho,\Sigma) \ll \delta(\smash{\rho}, \smash{\Sigma})$; yet, it was proven in~\cite{Vaidman2003} that ${d_{\mathcal{M}_{\otimes}}(\rho \otimes \Psi, \Sigma \otimes \Psi)} \approx {\delta(\rho \otimes \Psi, \Sigma \otimes \Psi)}$ when $\Psi $ contains sufficient entanglement.

To see the operational significance of the composability principle~\eqref{eq:stability}, consider an agent, Alice, with access to systems~$RR'$, and another agent, Eve, with access to side information~$EE'$. Suppose that $R$ is approximately random when ignoring~$R'E'$, i.e., there is a state~$\sigma_{RE} \in \Sigma^{\mathrm{rand}}$ indistinguishable from~$\rho_{RE}$ by any measurement. Then criterion~\eqref{eq:stability} applied to $\delta$ ensures that this indistinguishability is preserved in the full description including~$R'E'$. However, no such guarantee would hold if randomness solely required indistinguishability under product measurements, as $d_{\mathcal{M}_{\otimes}}$ violates the composability principle~\eqref{eq:stability}.

We have thus answered the question posed at the outset: injectivity alone does not suffice to ensure that a representation $\mathbf{P}_{\mathcal{M}}$ is faithful; robustness in the sense of \cref{def:preservation} is also required. Without this property, approximate statements established at the level of the representation cannot, in general, be ``pulled back'' to the level of density operators.\footnote{This problem does not affect exact statements. These can be proved at the level of the probability representation and then directly pulled back to density operators. A beautiful example is the proof of the quantum de Finetti theorem for infinitely exchangeable states proposed in~\cite{Caves_2002}.} This is illustrated by the commuting diagram in \cref{fig:commuting_diag}.

\begin{figure}[tbp]
  \centering
  \begin{tikzpicture}[
    every node/.style={font=\large, text centered},
    scale=0.65
]

  \node (rho) at (-3,1.5) {$\rho$};
  \node[] (app) at (7.5,1.5){$\delta(\rho,\Sigma) \approx 0$};
  \node[] (pm) at (-3,-1.5) {$\mathbf{P}_{\mathcal{M}}(\rho)$};
  \node[] (app2) at (7.5,-1.5){$d_{\mathcal{M}}(\rho,\Sigma) \approx 0$};
  \node[BrickRed, thick] (q) at (7.5,0) {\Large{\textbf{?}}};

  \draw[-Implies, double, draw=Plum, thick,line width=0.4mm, double distance=1.5pt] (-2,1.5) -- node[above, text=Plum, font=\normalsize]{QIT arguments} (5.6,1.5);
  \draw[-Implies, double, draw=RedOrange, thick,line width=0.4mm, double distance=1.5pt] (-2,-1.5) -- node[above, text=RedOrange, font=\normalsize]{Probability arguments} (5.6,-1.5);

  \draw[line width=0.4mm, stealth-stealth] (pm) -- (rho);
  \draw[line width=0.4mm, Implies-, double,  double distance=1.5pt] (app) -- (q);
  \draw[line width=0.4mm, double,  double distance=1.5pt] (app2) -- (q);

\end{tikzpicture}
\caption{{\bf State-space vs.~representation-space approximations.} The diagram illustrates the requirement that approximate statements established by applying probability arguments to the representation remain valid when ``pulled back'' to density operators. Topological robustness (see \cref{def:preservation}) guarantees this.}\label{fig:commuting_diag}
\end{figure}

\section{A topological characterization of robustness}\label{sec:topo_problem}
\Cref{def:preservation} concerns the convergence of sequences and, therefore, has a topological character. One might expect that a failure of robustness thus means that the topologies induced by the metrics $d_{\mathcal{M}}$ and $\delta$ are inequivalent; in other words, that $\mathbf{P}_{\mathcal{M}}$ is not a homeomorphism. 
This intuition is only partially correct: In fact, the topologies induced by $d_{\mathcal{M}}$ and~$\delta$ are identical on $\mathcal{D}(\mathcal{H})$, even when~$\mathbf{P}_{\mathcal{M}}$ is not robust, except in pathological cases. The pathological cases are those where $\mathcal{M}$ is fragile, i.e., the topology induced by $d_{\mathcal{M}}$ changes under the addition of a single additional measurement.\footnote{See \cref{def:fragile} in the Appendix for the precise definition.} The set of local measurements, in particular, is not fragile. We refer to the Appendix for the proofs of all statements.
\begin{restatable}[]{proposition}{proptopo}\label{prop:topo}
    The topologies induced by $d_{\mathcal{M}}$ and~$\delta$ are identical on the space of density matrices $\mathcal{D}(\mathcal{H})$, except when $\mathcal{M}$ is fragile.
\end{restatable}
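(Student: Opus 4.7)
The plan is to show that $d_{\mathcal{M}}$ and $\delta$ induce the same topology on $\mathcal{D}(\mathcal{H})$ by proving that the identity map $(\mathcal{D}(\mathcal{H}),\delta) \to (\mathcal{D}(\mathcal{H}), d_{\mathcal{M}})$ is a homeomorphism. The easy direction, $d_{\mathcal{M}}(\rho,\sigma) \leq \delta(\rho,\sigma)$, follows from the data-processing inequality for the trace distance applied to the measurement channel $\rho \mapsto \sum_x P_M(x)\,\ket{x}\!\bra{x}$ associated with each $M \in \mathcal{M}$. This already makes the $d_{\mathcal{M}}$-topology coarser than the $\delta$-topology and the above identity continuous. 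Moreover, tomographic completeness ensures that $d_{\mathcal{M}}$ is a genuine metric (not just a pseudometric), so both topologies are Hausdorff and metrizable; proving their equality therefore reduces to showing that every $d_{\mathcal{M}}$-convergent sequence in $\mathcal{D}(\mathcal{H})$ is $\delta$-convergent, since uniqueness of the limit is automatic by tomographic completeness.

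For finite-dimensional $\mathcal{H}$, no stability hypothesis is needed. Indeed, $\mathcal{D}(\mathcal{H})$ is compact in the trace norm, so the continuous bijection $(\mathcal{D}(\mathcal{H}),\delta) \to (\mathcal{D}(\mathcal{H}), d_{\mathcal{M}})$ runs from a compact Hausdorff space to a Hausdorff space and is therefore automatically a homeomorphism (any closed subset of the domain is compact, hence its image is compact, hence closed). Consequently $\delta$-convergence and $d_{\mathcal{M}}$-convergence coincide.

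The genuinely interesting case is infinite-dimensional $\mathcal{H}$, where $\mathcal{D}(\mathcal{H})$ is not trace-norm compact and the previous argument breaks. This is where the stability assumption on $\mathcal{M}$ must enter. The route I would follow is: given $\rho^{(n)} \to \rho$ in $d_{\mathcal{M}}$, extract a subsequence that converges in some weak topology (for instance via Banach--Alaoglu applied to the unit ball of the trace class, viewed as the predual of $\mathcal{B}(\mathcal{H})$) to a limit $\tilde\rho$, then invoke stability of $\mathcal{M}$ to promote weak convergence to trace-norm convergence. Tomographic completeness forces $\tilde\rho = \rho$, and the usual subsequence-of-subsequences argument then yields $\delta(\rho^{(n)}, \rho) \to 0$ for the full sequence.

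The main obstacle is this promotion step, and equivalently the question of exactly which property of $\mathcal{M}$ the term \emph{stability} must encode. The natural guess is a form of uniform tightness: stability should ensure that the POVM elements accessible through $\mathcal{M}$ approximate enough compact effects to rule out the standard infinite-dimensional pathologies, such as sequences of pure states whose probability mass escapes to infinity, which converge weakly to zero yet stay at trace distance one from any density operator. This matches the exception clause in the proposition: without such a condition on $\mathcal{M}$ one can engineer precisely these pathologies, so stability should amount to the minimal assumption that excludes them.
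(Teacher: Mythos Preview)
Your sketch identifies the right skeleton---$d_{\mathcal{M}}\le\delta$ gives one inclusion, and the hard direction is to show that $d_{\mathcal{M}}$-convergence implies $\delta$-convergence in infinite dimensions---but it stops precisely where the work begins. You explicitly leave the promotion step open and speculate about what ``stable'' should mean; in the paper, stability is a concrete definition (\cref{def:stable}): $\mathcal{M}$ is stable if for every POVM element $E$ with finite-dimensional support, adjoining the two-outcome measurement $\{E,\mathbbm{1}-E\}$ to $\mathcal{M}$ does not change the induced topology. This is exactly the input you need, and once you have it the argument is not a Banach--Alaoglu subsequence extraction but a direct tightness bound: stability guarantees that $\tr(E(\rho^{(n)}-\rho))\to 0$ for every finite-rank $E$, hence the pinching maps $\rho\mapsto P_N\rho P_N$ are $\|\cdot\|_{\mathcal{M}}$-continuous, and a short telescoping estimate then shows $\lim_N\lim_n\bigl(1-\tr(P_N\rho^{(n)})\bigr)=0$. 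The gentle measurement lemma converts this into $\|\rho^{(n)}-\rho\|_1\to 0$. Your weak-$*$ route could in principle be completed along similar lines (weak convergence against all finite-rank operators plus unit trace does force trace-norm convergence), but as written you have not carried it out, and you have not connected it to the actual hypothesis.

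You also omit the converse direction entirely. The proposition is an ``if and only if'': one must show that when the two topologies coincide, $\mathcal{M}$ is necessarily stable. This is short once the definition is in hand---if $d_{\mathcal{M}}$ and $\delta$ induce the same topology, then $d_{\mathcal{M}\cup\{E,\mathbbm{1}-E\}}$ is sandwiched between them and hence induces that same topology as well---but it is part of the statement and should appear.
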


Despite this result, robustness can be characterized topologically. Crucially, one has to consider the vector space $\smash{\mathrm{span}(\mathcal{D}(\mathcal{H}))}$ spanned by the density operators, which we equip with the norm $\|\cdot\|_{\mathcal{M}} \coloneq \sup_{M \in \mathcal{M}}\|P_M(\cdot)\|_{1}$; see~\cref{fig:topoproblem}. 

\begin{restatable}[]{proposition}{topology}\label{prop:topology}
    The following statements are equivalent:
    \begin{enumerate}[label = (\arabic*), itemsep=0pt, topsep=3pt]
        \item\label{item:robust} The representation $\mathbf{P}_{\mathcal{M}}$ is robust.
        \item\label{item:continuous} The topologies induced by $\|\cdot\|_{\mathcal{M}}$ and~$\|\cdot\|_{1}$ are identical on $\mathrm{span}(\mathcal{D}(\mathcal{H}))$.\footnote{This is equivalent to requiring that the inverse of the linear extension of the map $\mathbf{P}_{\mathcal{M}}$ (corresponding to the arrow labelled with ``?'' in \cref{fig:commuting_diag}) is continuous with respect to the norm on the probability representation induced by $\| \cdot \|_{\mathcal{M}}$ and the  trace norm $\| \cdot \|_1$ on  $\mathrm{span}(\mathcal{D}(\mathcal{H}))$.}
        \item\label{item:complete} $\mathrm{span}(\mathcal{D}(\mathcal{H}))$ is complete with respect to $\|\cdot\|_{\mathcal{M}}$.
    \end{enumerate}
\end{restatable}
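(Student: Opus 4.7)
The plan is a cycle $(3)\Rightarrow(2)\Rightarrow(1)\Rightarrow(3)$. Let $V := \mathrm{span}(\mathcal{D}(\mathcal{H}))$, which coincides with the real Banach space of self-adjoint trace-class operators under $\|\cdot\|_1$. The data-processing inequality gives $\|x\|_{\mathcal{M}} \leq \|x\|_1$ for every $x \in V$, so the identity map $\mathrm{id}: (V, \|\cdot\|_1) \to (V, \|\cdot\|_{\mathcal{M}})$ is a continuous linear bijection. The two easy steps are then essentially one-liners: for $(3)\Rightarrow(2)$, the open mapping theorem applied to $\mathrm{id}$ yields a continuous inverse once the codomain is Banach, so both norms induce the same topology; for $(2)\Rightarrow(1)$, given $d_{\mathcal{M}}(\rho^{(n)},\Sigma) \to 0$, I would pick $\sigma^{(n)} \in \Sigma$ realising the infimum up to $1/n$ and apply the norm equivalence to obtain $\delta(\rho^{(n)},\sigma^{(n)}) \to 0$, hence $\delta(\rho^{(n)},\Sigma) \to 0$.

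The substantive implication is $(1)\Rightarrow(3)$, which I plan to prove by contrapositive. If (3) fails, then since $(V, \|\cdot\|_1)$ is complete the two norms cannot be equivalent (otherwise $(V, \|\cdot\|_{\mathcal{M}})$ would be complete as well), so there exist $x_n \in V$ with $\|x_n\|_1 = 1$ and $\|x_n\|_{\mathcal{M}} \to 0$. A Jordan decomposition writes $x_n = \alpha_n \rho_n - \beta_n \sigma_n$ with states $\rho_n, \sigma_n$ on orthogonal supports and $\alpha_n + \beta_n = 1$; since the outcome probabilities of any POVM sum to the trace, $|\mathrm{tr}(x_n)| \leq \|x_n\|_{\mathcal{M}} \to 0$, forcing $\alpha_n, \beta_n \to 1/2$ and, by a short triangle-inequality computation, $\|\rho_n - \sigma_n\|_{\mathcal{M}} \to 0$. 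I then test (1) against $\Sigma := \{\sigma_n : n \in \mathbb{N}\}$ and the state sequence $(\rho_n)$, for which clearly $d_{\mathcal{M}}(\rho_n, \Sigma) \to 0$; the remaining task is to show $\delta(\rho_n, \Sigma) \not\to 0$.

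The clean case is when some subsequence $\rho_{n_k} \to \rho^*$ converges in $\delta$: the triangle inequality then gives $\sigma_{n_k} \to \rho^*$ in $\|\cdot\|_{\mathcal{M}}$, and \cref{prop:topo} (equality of the two topologies on $\mathcal{D}(\mathcal{H})$) upgrades this to $\delta$-convergence, which is incompatible with $\delta(\rho_{n_k}, \sigma_{n_k}) = 1$. So neither $(\rho_n)$ nor, by symmetry, $(\sigma_n)$ may have a $\delta$-cluster point, and after passing to a further subsequence both sequences can be taken $\epsilon$-separated in $\delta$.

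The hard part will be closing the argument in this no-cluster regime. A candidate witness $l(n)$ with $\delta(\rho_n, \sigma_{l(n)}) \to 0$ must satisfy $l(n) \neq n$ eventually, and the $\epsilon$-separation forces $l$ to be eventually injective; the pair $(\sigma_n, \sigma_{l(n)})$ then inherits the same $\|\cdot\|_{\mathcal{M}}$-close-but-$\|\cdot\|_1$-far pathology as $(\rho_n, \sigma_n)$. I plan to iterate the cluster-point argument on this new pair and use \cref{prop:topo} together with the $\epsilon$-separation to force either a $\delta$-cluster point (reducing to the previous case) or the collapse of an infinite fixed-point-free permutation of indices. Making this iteration fully rigorous—in particular, tracking the bookkeeping of which indices witness whom and ruling out the permutation case—is the technical heart of the proof.
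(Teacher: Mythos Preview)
Your $(3)\Rightarrow(2)$ and $(2)\Rightarrow(1)$ match the paper. The issue is $(1)\Rightarrow(3)$, where there is a real gap.

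A minor point first: you invoke \cref{prop:topo} to upgrade $\|\cdot\|_{\mathcal{M}}$-convergence on $\mathcal{D}(\mathcal{H})$ to $\delta$-convergence, but that proposition requires $\mathcal{M}$ to be stable, which \cref{prop:topology} does not assume. This is patchable (if the two topologies disagree on $\mathcal{D}(\mathcal{H})$, a singleton $\Sigma$ already kills robustness, so you may branch on whether they agree), but it signals that the argument is more fragile than it looks.

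The substantive gap is your ``no-cluster'' regime. You take $\Sigma=\{\sigma_n\}$ with $\rho_n,\sigma_n$ the normalised Jordan parts of $x_n$ and need $\delta(\rho_n,\Sigma)\not\to 0$. Nothing in your construction controls the cross-distances $\delta(\rho_n,\sigma_m)$ for $m\neq n$. Your iteration merely transports the pathology from $(\rho_n,\sigma_n)$ to $(\sigma_n,\sigma_{l(n)})$ and onwards along orbits of an eventually fixed-point-free injection $l$; there is no termination mechanism and no contradiction is produced. Configurations in which every $\rho_n$ is $\delta$-close to some $\sigma_{l(n)}$ with $l(n)\neq n$ are fully compatible with both sequences being $\epsilon$-separated, so the bookkeeping cannot close by itself.

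The paper avoids this entirely by modifying the states rather than analysing the raw Jordan parts. After normalising so that $\mathrm{tr}(A^{(n)})=0$ and $\mathrm{tr}(A^{(n)}_+)\le\frac{1}{11}$, it fixes an orthonormal basis $\{\ket{n}\}$ and anchors both states on $\ket{n}\!\bra{n}$:
\[
\rho^{(n)}:=\bigl(1-\mathrm{tr}(A^{(n)}_+)\bigr)\ket{n}\!\bra{n}+A^{(n)}_+,\qquad
\sigma^{(n)}:=\bigl(1+\mathrm{tr}(A^{(n)}_-)\bigr)\ket{n}\!\bra{n}-A^{(n)}_-.
\]
One still has $\rho^{(n)}-\sigma^{(n)}=A^{(n)}$, so $d_{\mathcal{M}}(\rho^{(n)},\sigma^{(n)})\to 0$ while $\delta(\rho^{(n)},\sigma^{(n)})\ge\mu/2$. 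But now, for $n\neq m$, at least $\tfrac{10}{11}$ of the mass of $\rho^{(n)}$ sits on $\ket{n}$ and of $\sigma^{(m)}$ on $\ket{m}$, forcing $\delta(\rho^{(n)},\sigma^{(m)})\ge\tfrac{1}{2}\bigl(1-\tfrac{2}{11}\bigr)$ automatically. The cross-distance problem disappears in one stroke, with no appeal to \cref{prop:topo}, no cluster-point dichotomy, and no iteration.
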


\begin{figure}
    \centering
    \begin{tikzpicture}[scale=1.5]

        \draw[thick, ->] (0,0) -- (0,2) node[left] {};
        \draw[thick, ->] (0,0) -- (4.5,0) node[below] {};
        
        \draw[line width=8pt, OliveGreen, line cap=round,opacity=0.4] (-0.1,1.8) -- (3.8,-0.1);
        \draw[latex-, thick, OliveGreen] (1.65,1) -- (2,1.5);
        \node[OliveGreen,align=center] at (2.5,2) {Open set containing \\ the state space};

        \draw[very thick, Blue] (-0.1,1.8) -- (3.8,-0.1);
        \draw[latex-, thick, Blue] (1.55,1) -- (1,0.6);
        \node[Blue,align=center] at (1,0.5) {State space};

    \end{tikzpicture}
    \caption{{\bf Topological characterization of robustness.} Even if a representation $\mathbf{P}_{\mathcal{M}}$ fails to be robust, the topologies induced by the metrics $d_{\mathcal{M}}$ and $\delta$ are generally identical on the state space $\mathcal{D}(\mathcal{H})$ (blue line). However, if we extend the consideration to an open set around $\mathcal{D}(\mathcal{H})$ (green) within $\mathrm{span}(\mathcal{D}(\mathcal{H}))$, then the equivalence between the topologies induced by the norms $\| \cdot\|_{\mathcal{M}}$ and $\|\cdot\|_1$ serves as a criterion for the robustness of $\mathbf{P}_{\mathcal{M}}$ (see \cref{prop:topology}).} \label{fig:topoproblem}
\end{figure}

\section{Robustness is incompatible with structure}
The example discussed above implies that representations based on the set of local measurements $\mathcal{M}_{\otimes}$ are not robust. This raises the question of whether the issue can be avoided by relaxing the locality condition. As we show below, the answer is negative: non-robustness is in fact a generic feature of any representation that has structure. 

To make this precise, we quantify the ``structure'' of a representation $\mathbf{P}_\mathcal{M}$ using tools from information theory, in particular an appropriate notion of entropy. Entropy characterizes the minimum size to which data can be compressed. This requires an encoder, which compresses the data, and a decoder, which retrieves it. We require these to be selected from a set of ``physically allowed'' operations. A standard result in quantum information is that these must be completely positive maps. Therefore, the concatenation of $\mathbf{P}_{\mathcal{M}}$ and the encoder takes every $\rho$ to a compressed list $(P_E(\rho))_{E \in \mathcal{O}}$ indexed by a set $\mathcal{O}$. Each entry is the output of a linear map and, thus, of the form $P_E(\rho) = \tr(E \rho)$, for an appropriately chosen positive operator $E$, which we use as the index. Similarly, the decoder for a measurement~$M$ is a convex map of the form
\begin{equation}
    \mathcal{D}_M: (P_{E})_{E \in \mathcal{O}} \mapsto \big(\sum_{E} p^{(M_i)}_E P_{E} \big)_{M_i \in M},
\end{equation}
where $p_E^{(M_i)} \geq 0$ for any effect $M_i$ of $M$. For any $M \in \mathcal{M}$, one wants the output of $\mathcal{D}_M$ to reproduce (up to a small error $\varepsilon$) the probability distribution~$P_M(\rho)$ of the outcomes when measuring~$\rho$. This leads to the following definition.\footnote{See the Appendix for the formal definition, which also applies to continuous measurements.}

\begin{definition}[Informal]\label{def:epsilon_decoded}
    Let $M$ be a measurement and~$\mathcal{O}$ a set of positive operators. We say that $P_M$ can be $\varepsilon$-decoded from~$\mathcal{O}$ if there exists a decoding operation~$\mathcal{D}_M$ such that 
    \begin{equation}
        \forall \rho \in \mathcal{D}(\mathcal{H}): \ \bigl\|\mathcal{D}_M\bigl(\bigl(\tr(E \rho)\bigr)_{E \in \mathcal{O}}\bigr) - P_M(\rho)\bigr\|_1 < \varepsilon.
    \end{equation}
\end{definition}

The length of the compressed list grows linearly in $|\mathcal{O}|$. Therefore, we define the \emph{entropy of a representation $\mathbf{P}_{\mathcal{M}}$} as $H^{\varepsilon}(\mathbf{P}_{\mathcal{M}}) \coloneq |\mathcal{O}|$, where we take the minimum set $\mathcal{O}$ required to $\varepsilon$-decode all $P_M$ with $M \in \mathcal{M}$. For infinite-dimensional Hilbert spaces this quantity is infinite. In this case, we consider its scaling for restricted measurement sets $\mathcal{M}|_{\Pi}$, obtained by preceding each measurement in $\mathcal{M}$ by an orthogonal finite-rank projector~$\Pi$.
\begin{definition}\label{def:asymptotic_entropy}
    Let $\{\Pi_{2^n}\}_{n \in \mathbb{N}}$ be a nested family of projectors with $\mathrm{rank}(\Pi_{2^n}) \geq 2^n$, and~$(\varepsilon_n)_{n \in \mathbb{N}}$ a zero-sequence. We say the \emph{asym\-ptotic entropy of $\mathbf{P}_\mathcal{M}$ is at most as large as the sequence $(H_n)_{n \in \mathbb{N}}$}, if $H^{\varepsilon_n}(\mathbf{P}_{\mathcal{M}|_{\Pi_{2^n}}}) \leq H_n$. 
\end{definition}

A calculation shows that the asymptotic entropy of any representation $\mathbf{P}_\mathcal{M}$ is upper bounded by approximately $\smash{(2^{2^{n}})_{n \in \mathbb{N}}}$. Non-negligible structure in $\mathbf{P}_\mathcal{M}$ manifests itself in an entropy below this upper bound. This characterization of structure allows us to state our no-go result. 

\begin{restatable}{theorem}{minentropy}\label{thm:entropy}
    If $\mathbf{P}_\mathcal{M}$ has asymptotic entropy at most $(2^{\delta_n 2^n})_{n \in \mathbb{N}}$ for a zero-sequence~$(\delta_n)_{n \in \mathbb{N}}$, then the representation $\mathbf{P}_{\mathcal{M}}$ is not robust. 
\end{restatable}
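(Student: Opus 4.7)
The plan is to argue by contradiction, extracting from robustness a quantitative Banach-space embedding of the Hermitian trace class into $\ell^1$ that eventually conflicts with a distortion lower bound forced by the low entropy of $\mathcal{M}$.

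Suppose $\mathbf{P}_\mathcal{M}$ is robust. By \cref{prop:topology}, there is a uniform constant $C<\infty$ with $\|\Delta\|_1\leq C\|\Delta\|_\mathcal{M}$ for every $\Delta\in\mathrm{span}(\mathcal{D}(\mathcal{H}))$, the reverse inequality $\|\cdot\|_\mathcal{M}\leq\|\cdot\|_1$ being automatic. Let $\mathcal{E}_n,\Pi_{2^n},\eta_n$ witness the asymptotic entropy bound, so $\log|\mathcal{E}_n|\leq\varepsilon_n 2^n$ and every $M\in\mathcal{M}|_{\Pi_{2^n}}$ is $\eta_n$-decodable from $\mathcal{E}_n$ with $\eta_n\to 0$. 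Set $\Phi_n(\Delta)\coloneq(\tr(E\Delta))_{E\in\mathcal{E}_n}\in\mathbb{R}^{|\mathcal{E}_n|}$ and let $V_n$ denote the Hermitian operators supported on $\Pi_{2^n}\mathcal{H}$, equipped with the trace norm. Lifting the state-level decoding inequality to $V_n$ via the Jordan decomposition $\Delta=\Delta^+-\Delta^-$ and linearity of $\mathcal{D}_M$ yields $\|P_M(\Delta)-\mathcal{D}_M(\Phi_n(\Delta))\|_1\leq\eta_n\|\Delta\|_1$. Since $\mathcal{D}_M$ is column-stochastic and hence $\ell^1$-contractive, this gives $\|\Delta\|_\mathcal{M}\leq\eta_n\|\Delta\|_1+\|\Phi_n(\Delta)\|_1$. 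Combining with $\|\Delta\|_1\leq C\|\Delta\|_\mathcal{M}$ and restricting to $n$ with $C\eta_n\leq 1/2$ rearranges to $\|\Delta\|_1\leq 2C\|\Phi_n(\Delta)\|_1$; paired with the trivial $\|\Phi_n(\Delta)\|_1\leq\|\Delta\|_1$ (from $\sum_E E\leq I$, extending $\mathcal{E}_n$ to a POVM if necessary at negligible cost), the map $\Phi_n$ becomes a Banach-space isomorphism from $V_n$ onto its image in $\ell^1_{|\mathcal{E}_n|}$, of distortion at most $2C$ uniformly in $n$.

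To close the argument, I would invoke a quantitative Banach-space non-embedding bound. The space $V_n$ is the Hermitian part of the non-commutative $\ell^1$-space $S^1_{2^n}$; by a Pisier-type or Kashin–Garnaev–Gluskin-type lower bound, exploiting the isometric copy of $\ell^2_{2^n-1}$ sitting inside $V_n$ via $v\mapsto|1\rangle\langle v|+|v\rangle\langle 1|$ (which satisfies $\|\iota(v)\|_1=2\|v\|_2$ for $v$ orthogonal to $|1\rangle$), any isomorphic embedding of $V_n$ into $\ell^1_N$ with $\log N=o(D)$, where $D=2^n$, must have Banach-Mazur distortion diverging in $D$. Substituting $\log|\mathcal{E}_n|\leq\varepsilon_n D=o(D)$ therefore forces $2C\to\infty$, contradicting the fixed $C$, and hence $\mathbf{P}_\mathcal{M}$ cannot be robust. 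The hardest step will be pinning down the precise quantitative form of the non-embedding inequality in exactly the regime $\log N=\Omega(D)$ but $o(D\log D)$; as a soft safety net, whenever $\varepsilon_n<2n/2^n$ one has $|\mathcal{E}_n|<\dim V_n=4^n$, so $\ker\Phi_n\neq 0$ by pure linear algebra, and any unit-trace-norm $\Delta\in\ker\Phi_n$ immediately witnesses non-robustness via $\|\Delta\|_\mathcal{M}\leq\eta_n\to 0$.
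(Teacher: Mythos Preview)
Your approach is genuinely different from the paper's---you try to extract an abstract Banach-space obstruction, whereas the paper constructs explicit witness states via concentration of measure on the unitary group---but the argument has real gaps that I do not see how to close.

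First, the claim that $\mathcal{D}_M$ is column-stochastic is unfounded. Looking at \cref{def:epsilon_decoded} (or the formal \cref{def:decoded_technical}), the only constraint on the decoder coefficients is $p_E^{(M_i)}\geq 0$; nothing forces $\sum_{M_i} p_E^{(M_i)}\leq 1$. If some $E\in\mathcal{E}_n$ is small in operator norm, the coefficients needed to reconstruct a full-size POVM element from it can be arbitrarily large, so $\mathcal{D}_M$ need not be $\ell^1$-contractive and the key inequality $\|\Delta\|_\mathcal{M}\leq\eta_n\|\Delta\|_1+\|\Phi_n(\Delta)\|_1$ does not follow. Relatedly, the ``trivial'' bound $\|\Phi_n(\Delta)\|_1\leq\|\Delta\|_1$ also fails: $\mathcal{E}_n$ is a set of POVM \emph{elements}, not a POVM, and there is no reason for $\sum_{E\in\mathcal{E}_n}E\leq\mathbbm{1}$.

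Second, even granting the embedding, your proposed obstruction does not bite. The isometric copy of $\ell^2_{2^n-1}$ inside $V_n$ is no obstacle: by Kashin's theorem, $\ell^2_d$ embeds into $\ell^1_{O(d)}$ with \emph{constant} distortion, so $\log N=o(D)$ is more than enough room. You would need a genuinely non-commutative lower bound for $S^1_{2^n}\hookrightarrow\ell^1_N$ in the regime $\log N=o(2^n)$, and none is cited or sketched. Your safety net covers only $\varepsilon_n<2n/2^n$, which is far stronger than the hypothesis $\varepsilon_n\to 0$; for, say, $\varepsilon_n=1/\log n$ one has $|\mathcal{E}_n|\gg 4^n$ and $\ker\Phi_n$ may well be trivial.

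The paper instead works directly at the state level. It fixes a concrete sequence $\sigma^{(n)}$ with a prescribed spectrum (so that $\delta(U\sigma^{(n)}U^\dagger,\Sigma)$ stays bounded away from zero for a specific $\Sigma$ built from maximally mixed states), and then uses L\'evy-type concentration on $U(2^n)$: for each $E\in\mathcal{E}_n$ the map $U\mapsto|\tr(EU\sigma^{(n)}U^\dagger)-\tr(E\,\tau_n)|$ is Lipschitz with a small constant (controlled by $H_{\min}(\sigma^{(n)})$), so a union bound over the $\exp(\varepsilon_n 2^n)$ elements of $\mathcal{E}_n$ produces a single $U^{(n)}$ on which \emph{all} decoders are fooled simultaneously. \Cref{lem:recovery} then transfers this back to $d_\mathcal{M}$. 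The smallness of $|\mathcal{E}_n|$ enters only through the union bound, which is exactly why $\varepsilon_n\to 0$ suffices.
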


The theorem tells us that a probability representation $\mathbf{P}_{\mathcal{M}}$ cannot be robust when it has non-negligible structure. This is the case for~$\mathbf{P}_\mathcal{M_{\otimes}}$, where a straightforward calculation shows that the asymptotic entropy approximately scales only as $\smash{(2^{2^{n/2}})_{n\in\mathbb{N}}}$. This confirms the intuition that the representation based on local measurements has significant structure and, thus, reproduces the conclusion of the example.\footnote{Alternatively, the statement can also be derived from well-known data hiding results \cite{Terhal2001,DiVincenzo2004,Eggeling2002,Hayden_2004,Koenig2007,Matthews2009,Lancien2013,Chitambar2014,Aubrun2015,Lami2021,Correa2022,Cheng_2023,Mele2025,Ha2025}.}

\begin{restatable}[]{corollary}{prodConv}\label{cor:prod_conv}
    The representation $\mathbf{P}_{\mathcal{M_{\otimes}}}$ is not robust.
\end{restatable}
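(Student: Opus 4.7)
The plan is to apply \Cref{thm:entropy}: it suffices to exhibit, for each $n$, a finite set $\mathcal{E}_n$ of POVM elements on $\mathcal{H}$ from which every $M \in \mathcal{M}_\otimes|_{\Pi_{2^n}}$ is $\varepsilon_n$-decodable, with $\varepsilon_n \to 0$ and $\log|\mathcal{E}_n| = o(2^n)$. Writing $\mathcal{H} = \mathcal{H}_A \otimes \mathcal{H}_B$ for the bipartition underlying $\mathcal{M}_\otimes$, I would take the nested projector $\Pi_{2^n} = \Pi_A^{(n)} \otimes \Pi_B^{(n)}$ with each local factor projecting onto the first $2^{\lceil n/2 \rceil}$ vectors of a fixed orthonormal basis, so that $\mathrm{rank}(\Pi_{2^n}) \geq 2^n$ and the family is nested. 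For a parameter $\delta_n > 0$ to be chosen, let $\mathcal{N}_A^{(n)}$ and $\mathcal{N}_B^{(n)}$ be $\delta_n$-nets in trace distance on the rank-one pure-state projectors supported in the respective local subspaces. Standard volume estimates on $\mathbb{CP}^{d-1}$ give $|\mathcal{N}_A^{(n)}|,|\mathcal{N}_B^{(n)}| \leq (C/\delta_n)^{O(2^{n/2})}$, and setting $\mathcal{E}_n \coloneq \{|\phi\rangle\langle\phi| \otimes |\psi\rangle\langle\psi| : |\phi\rangle\langle\phi| \in \mathcal{N}_A^{(n)},\ |\psi\rangle\langle\psi| \in \mathcal{N}_B^{(n)}\}$ yields $\log|\mathcal{E}_n| = O(2^{n/2}\log(1/\delta_n))$.

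For the decoder, I spectrally decompose each outcome operator of a fixed $M_A \otimes M_B \in \mathcal{M}_\otimes|_{\Pi_{2^n}}$ as $M_A^{(a)}\otimes M_B^{(b)} = \sum_{k,l}\lambda_k^{(a)}\mu_l^{(b)}|\phi_k^{(a)}\rangle\langle\phi_k^{(a)}|\otimes|\psi_l^{(b)}\rangle\langle\psi_l^{(b)}|$ with $\lambda_k^{(a)},\mu_l^{(b)} \geq 0$, and replace each pure product projector by its nearest element in $\mathcal{E}_n$. The resulting approximant $\widetilde{M_A^{(a)}\otimes M_B^{(b)}} = \sum_{E \in \mathcal{E}_n} p_E^{(a,b)} E$ has $p_E^{(a,b)} \geq 0$, which is exactly the convex form required by \Cref{def:epsilon_decoded}. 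A triangle-inequality bound using $\|P\otimes Q - \tilde P\otimes\tilde Q\|_1 \leq \|P-\tilde P\|_1 + \|Q-\tilde Q\|_1$ gives a per-outcome trace-norm error of at most $2\delta_n \tr(M_A^{(a)}\otimes M_B^{(b)})$, and summing over outcomes while using $\sum_{a,b}\tr(M_A^{(a)}\otimes M_B^{(b)}) = \tr(\Pi_{2^n}) \leq 2 \cdot 2^n$ yields total $\ell_1$ decoding error $O(\delta_n \cdot 2^n)$.

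Choosing $\delta_n = 1/(n\cdot 2^n)$ then makes $\varepsilon_n = O(1/n) \to 0$ and $\log|\mathcal{E}_n| = O(n \cdot 2^{n/2})$, which equals $\varepsilon_n'\cdot 2^n$ with the zero-sequence $\varepsilon_n' = O(n/2^{n/2})$. \Cref{thm:entropy} then delivers non-robustness of $\mathbf{P}_{\mathcal{M}_\otimes}$. The step requiring the most care, rather than a genuine obstacle, is the balancing of $\delta_n$: making the decoding error small forces $\delta_n \lesssim 2^{-n}$, but the covering exponent in $\log|\mathcal{E}_n|$ stays at $2^{n/2}$ (the local dimension) rather than $2^n$ (the global dimension), and this quadratic gap is precisely what allows the hypothesis of \Cref{thm:entropy} to be met.
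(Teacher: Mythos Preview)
Your proposal is correct and takes essentially the same approach as the paper: product $\varepsilon$-nets on local subspaces of dimension $\sim 2^{n/2}$, spectral replacement of each rank-one tensor factor by its nearest net point, and a balance of the net parameter so that the decoding error vanishes while $\log|\mathcal{E}_n| = O(n\,2^{n/2}) = o(2^n)$. The only cosmetic differences are that the paper fixes $\delta_n = 2^{-2n}$ and phrases the decoder on the semi-algebra of product rectangles (so as to cover continuous product POVMs via the formal \Cref{def:decoded_technical}), whereas you work with discrete outcomes per the informal \Cref{def:epsilon_decoded}.
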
 
The same approximate scaling of  $\smash{(2^{2^{n/2}})_{n\in\mathbb{N}}}$ holds for the entropy of representations based on a larger set of measurements than $\mathcal{M}_{\otimes}$, like the set of measurements that can be implemented via local operations and classical communication. Furthermore, because this scaling lies far below the threshold in \cref{thm:entropy}, the corollary is stable: non-robustness persists even if one adds a large number of extra measurements to $\mathcal{M_{\otimes}}$. 

We now present another example of structure in a representation $\mathbf{P}_{\mathcal{M}}$, arising from the principle of minimality applied to $\mathcal{M}$. Notable instances include minimal tomographically complete measurement sets such as SIC-POVMs~\cite{Renes_2004} or fiducial measurements in the framework of generalized probabilistic theories~\cite{Wootters1986,Hardy_2001,Barrett_2007,Masanes_2011,Janotta2013,Hardy2015,Muller2016,Fuchs_2010,Appleby2011}. For a minimal description of a system, one also chooses the smallest Hilbert space that still contains all relevant degrees of freedom. For example, if one uses an atom as a qubit in a quantum computer, then one works with the corresponding $2$-dimensional subspace rather than the high-dimensional Hilbert space of the atom. 

While this minimization of the dimension is often implicit, it involves a non-trivial consistency condition: restricting from a $D$-dimensional to a $d$-dimensional Hilbert space should not weaken the representation. This consistency condition may be expressed formally as follows: using \cref{def:epsilon_decoded} with $\varepsilon = 0$, we require that $\mathbf{P}_{\mathcal{M}_D|_{\Pi_d}}$ is decodable from the effects of the measurements in $\mathcal{M}_d$, where~$\Pi_d$ is the projector onto the $d$-dimensional subspace. 

We say that a representation $\mathbf{P}_{\mathcal{M}}$ is \emph{efficient} if $\mathcal{M}$ admits a decomposition $\mathcal{M} = \bigcup_{d} \mathcal{M}_d$ into tomographically complete measurement sets $\mathcal{M}_d$ satisfying the above consistency condition, such that the number of distinct effects contained in $\mathcal{M}_d$ grows at most polynomially in~$d$. This choice is motivated by the fact that, for each $d$, the dimension of the state space is at most quadratic, and hence polynomial, in~$d$. 
The asymptotic entropy of any efficient representation, thus, grows at most polynomially in $2^n$. This yields another corollary to \cref{thm:entropy}, which is also stable in the same sense as \cref{cor:prod_conv}.

\begin{corollary}\label{cor:non-redundant}
   If the representation $\mathbf{P}_{\mathcal{M}}$ is efficient, then $\mathbf{P}_{\mathcal{M}}$ is not robust. 
\end{corollary}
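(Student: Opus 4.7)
The plan is to reduce the corollary directly to \cref{thm:entropy}, exploiting the hint provided in the text: the main work is to verify that efficiency of $\mathcal{M}$ translates into an asymptotic entropy bound of the form required by the theorem, namely at most $(\varepsilon_n 2^n)_{n\in\mathbb{N}}$ for some zero sequence $(\varepsilon_n)_{n\in\mathbb{N}}$. Once this is in place, the corollary is immediate.

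Concretely, I would first unpack efficiency: fix the witnessing nested family of rank-$d$ projectors $\{\Pi_d\}_{d\in\mathbb{N}}$ and the witnessing sequence $(\mathcal{M}_d)_{d\in\mathbb{N}}$ with $|\bigcup_{N\in\mathcal{M}_d}N|\leq p(d)$ for some fixed polynomial $p$. Then, in order to match the setup of \cref{def:asymptotic_entropy}, I would index by $n$ via $D=2^n$ and pass to the subfamily $\{\Pi_{2^n}\}_{n\in\mathbb{N}}$ (which remains nested and of rank at least $2^n$). As candidate POVM-element sets I take
\begin{equation}
\mathcal{E}_n \coloneq \bigcup_{d=1}^{2^n} \bigcup_{N\in\mathcal{M}_d} N.
\end{equation}
By efficiency, $\mathcal{M}|_{\Pi_{2^n}}$ is decodable from $\mathcal{E}_n$ with error $0$, so in particular $\varepsilon_n$-decodable for the zero sequence $\varepsilon_n=0$ (or any chosen zero sequence).

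Next I would bound the cardinality: $|\mathcal{E}_n|\leq \sum_{d=1}^{2^n} p(d) \leq 2^n\,p(2^n)$, hence
\begin{equation}
\log|\mathcal{E}_n| \leq n + \log p(2^n) = O(n).
\end{equation}
So the asymptotic entropy of $\mathcal{M}$ is at most $(C n)_{n\in\mathbb{N}}$ for some constant $C$. Setting $\varepsilon_n \coloneq Cn/2^n$ gives a zero sequence with $Cn = \varepsilon_n 2^n$, so the hypothesis of \cref{thm:entropy} is satisfied. Applying the theorem yields that $\mathbf{P}_{\mathcal{M}}$ is not robust, completing the proof.

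There is really no hard step here; the argument is a bookkeeping reduction. The only thing to be slightly careful about is matching conventions between \cref{def:asymptotic_entropy} (which fixes the projector ranks along the dyadic scale $2^n$) and the definition of efficiency (which indexes projectors by arbitrary $d$): this is resolved by restricting to the dyadic subsequence of projectors, which inherits nestedness and the required rank lower bound. All the substantive content---namely that structure of this quantitative type forces non-robustness---lives in \cref{thm:entropy} itself.
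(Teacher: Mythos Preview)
The proposal is correct and follows exactly the route the paper sketches in the single sentence preceding the corollary: efficiency gives $\log|\mathcal{E}_n|=O(n)$, hence asymptotic entropy at most $(\mathrm{const}\cdot n)_{n\in\mathbb{N}}$, and \cref{thm:entropy} applies. Your write-up simply spells out the bookkeeping (the dyadic restriction of the projector family and the cardinality bound $|\mathcal{E}_n|\leq 2^n p(2^n)$) that the paper leaves implicit.
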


Taken together, these results show that the use of probability representations leads to a fundamental dilemma, which is summarized in \cref{tab:result}. 

\begin{table}
    \centering
    \renewcommand{\arraystretch}{0.8}

    \begin{tabular}{@{}lcccccc@{}}
    \toprule
    \hspace*{3.8cm} & & & & \multicolumn{3}{c}{Structure} \\
    & & {Robust}
    & & {Subsystems} 
    & & {Efficient} \\ 
    \midrule

    \multicolumn{3}{l}{{Density operator}} & & \hspace*{2.9cm} & & \hspace*{2.9cm} \\ 
    \hspace{3em} $\mathcal{D}(\mathcal{H})$ && $\yes$ && $\yes$ && $\yes$ \\ 

    \multicolumn{3}{l}{{Probability representations}}  \vspace{1pt}  && &&  \\ 
    \hspace{3em} $\mathbf{P}_{\mathcal{M}_{\mathrm{all}}}$ && $\yes$ && $\no$ && $\no$ \\ 

    \hspace{3em} $\mathbf{P}_{\mathcal{M}_{\otimes}}$ && $\no$ && $\yes$ && $\no$ \\ 

    \hspace{3em} $\mathbf{P}_{\mathcal{M}_{\mathrm{efficient}}}$  && $\no$ && \makecell[c]{depends on $\mathcal{M}_{\mathrm{efficient}}$} && $\yes$ \\ 
    \bottomrule
    \end{tabular}
    \caption{
    {\bf Density operator vs.~probability representations.} The density operator representation of a quantum state is robust in the sense that small deviations in the representation are physically insignificant. Furthermore, it respects the subsystem structure, and it is efficient, requiring only a few real numbers. In contrast, probability representations that are topologically robust cannot have any non-negligible structure. \label{tab:result} }
\end{table}

\section{Beyond quantum}
In quantum foundations, one often studies generalizations of quantum theory where states need not be representable by density operators. Probability representations are well suited for this task, as they allow one to directly modify the constraints that quantum theory imposes on the admissible lists of outcome probabilities. A widely used framework based on this idea is that of generalized probabilistic theories (GPTs)~\cite{Hardy_2001,Barrett2005,Barrett2005a,Barrett_2007,Spekkens2005,Chiribella2010,Chiribella2011,Hardy2011,Janotta_2014,DAriano_2017,Plavala_2023}, where each GPT is specified by the set of probability assignments corresponding to its valid states. 

Since probability representations are fundamental to the GPT framework, the issues summarized in \cref{tab:result} also pose a challenge for GPTs. In fact, the situation is even more severe, as \cref{prop:topo} does not generalize to GPTs beyond quantum theory. 
To state this result, we extend the trace distance to arbitrary GPTs by defining $\delta(\rho, \sigma) \coloneq \frac{1}{2} \sup_{M} \|P_M(\rho) - P_{M}(\sigma)\|_1$ where the supremum ranges over all measurements $M$. This metric generalizes the quantum trace distance, in the sense that it quantifies the probability with which two states can be distinguished. Note that, as in the quantum case, we allow for state spaces of unbounded dimension.

\begin{theorem}\label{thm:GPT} 
    There exists a GPT for which the topologies induced by $d_{\mathcal{M}_{\otimes}}$ and $\delta$ are different on state space, despite $\mathcal{M}_{\otimes}$ being not fragile.
\end{theorem}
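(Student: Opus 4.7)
The plan is to lift the antisymmetric-state example from the introduction to a single common state space: construct a GPT in which both the antisymmetric states $\rho^{(n)}$ and the maximally mixed product states $\sigma^{(n)}$ converge to a common $d_{\mathcal{M}_\otimes}$-limit $\omega$, while $\delta(\rho^{(n)},\omega)$ remains bounded below through a non-product effect. The key idea is that the quantum example already furnishes, for every $n$, a pair of states separated by $\delta$ but almost identified by $d_{\mathcal{M}_\otimes}$; the missing ingredient is a single GPT state space in which these pairs accumulate at a common point.

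Concretely, take $\mathcal{H}$ separable infinite-dimensional with orthonormal basis $\{\ket{k}\}_{k \geq 1}$ and let $\Pi_n$ denote the projector onto $\mathrm{span}\{\ket{1},\ldots,\ket{2^n}\}$. For each $n$ define $\rho^{(n)}$ as the antisymmetric state on $\Pi_n\mathcal{H} \otimes \Pi_n\mathcal{H}$ and $\sigma^{(n)} \coloneq (\Pi_n/2^n)\otimes(\Pi_n/2^n)$. Introduce a new "state at infinity" $\omega$, specified on product effects by
\[
\omega(A \otimes B) \coloneq \mathrm{Lim}_n \, \tfrac{\tr(\Pi_n A)}{2^n}\cdot\tfrac{\tr(\Pi_n B)}{2^n},
\]
with $\mathrm{Lim}$ a fixed Banach limit, and extended in a product-respecting way to all bounded effects. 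The GPT state space $\Omega$ is the convex hull of $\{\rho^{(n)}, \sigma^{(n)} : n \geq 1\} \cup \{\omega\}$, and the effect space consists of all bounded effects that take values in $[0,1]$ on every element of $\Omega$ — in particular, all product effects and all antisymmetric projectors $\Pi^{\mathrm{anti}}_n$ on $\Pi_n\mathcal{H} \otimes \Pi_n\mathcal{H}$.

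Two estimates will witness the topology disparity. First, $d_{\mathcal{M}_\otimes}(\rho^{(n)},\omega) \to 0$: by the triangle inequality it suffices to combine $d_{\mathcal{M}_\otimes}(\rho^{(n)}, \sigma^{(n)}) \leq 2^{-n}$ from the introductory example with $d_{\mathcal{M}_\otimes}(\sigma^{(n)}, \omega) \to 0$, which holds because both $\sigma^{(n)}$ and $\omega$ put $1 - O(2^{-n})$ of their mass on the "rest" outcome of every product measurement whose non-trivial outcomes are supported on a fixed finite-dimensional subspace. Second, $\delta(\rho^{(n)},\omega) \geq 1/2$: the antisymmetric projector $\Pi^{\mathrm{anti}}_n$ has finite rank, so $\omega(\Pi^{\mathrm{anti}}_n) = 0$ by construction, while $\tr(\Pi^{\mathrm{anti}}_n \rho^{(n)}) = 1$. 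Stability of $\mathcal{M}_\otimes$ is then preserved by the factorising definition of $\omega$, which ensures that product effects compose consistently across subsystems.

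The main obstacle will be verifying that $\omega$ is a bona fide GPT state. A naive "delta at infinity" fails on countably-outcome PVMs such as the basis measurement $\{\ket{k}\bra{k}\}_{k \in \mathbb{N}}$, because every individual probability vanishes while the total should equal $1$. Addressing this requires either restricting the admissible measurements to those with finitely many non-trivial outcomes plus a complementary "rest" effect — a physically natural restriction, since only finitely many outcomes can ever be recorded — or invoking a Banach-limit extension compatible with the tensor structure. The delicate technical point is to simultaneously (i) keep the effect space rich enough to contain the antisymmetric projectors that carry the $\delta$-separation, (ii) ensure that every effect yields a valid probability on $\omega$ and its convex mixtures with the quantum $\rho^{(n)}, \sigma^{(n)}$, and (iii) preserve enough composition structure to make $\mathcal{M}_\otimes$ stable; once this bookkeeping is discharged, the preceding estimates exhibit the required inequivalence of topologies on the GPT state space.
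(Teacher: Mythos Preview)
Your approach is genuinely different from the paper's, and the core idea---compactifying the quantum antisymmetric example by adjoining a limit point $\omega$---is natural. But the proposal is a sketch with a structural gap that you flag but do not close, and it is precisely where the difficulty lies.

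The paper does \emph{not} extend quantum theory. It builds an explicit ``key--lock'' GPT from scratch: the local systems are function spaces on $[0,1]$, the joint state space is the minimal tensor product (convex hull of products), and---crucially---the joint effect space is \emph{defined} to be finite positive combinations of product effects and their complements. This last choice makes stability automatic: any effect $E=\sum_i p_i\,E_{K,i}\otimes E_{L,i}$ with $p_i\ge 0$ satisfies $|E(\rho_n-\rho)|\le(\sum_i p_i)\,d_{\mathcal{M}_\otimes}(\rho_n,\rho)$. The separating measurement $E_{KL}^{n\to\yes}=\sum_{s}E_K^{n\to s}\otimes E_L^{s\to\yes}$ is itself such a positive sum, so it lives in the effect space while still witnessing $\delta=1$ because the number of terms grows with $n$.

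Your construction cannot imitate this trick. The antisymmetric projector $\Pi_n^{\mathrm{anti}}$ is \emph{not} a positive combination of product effects (indeed, $\mathbb{I}-2\Pi_n^{\mathrm{anti}}$ is a swap, an entanglement witness). So your effect space must be strictly larger than the positive cone of product effects, and then stability is no longer free. Your statement that stability ``is preserved by the factorising definition of $\omega$'' conflates two things: the behaviour of the single state $\omega$ and the continuity of \emph{every} effect with respect to $d_{\mathcal{M}_\otimes}$ on the \emph{whole} state space. If you take the maximal effect space (all functionals mapping $\Omega$ into $[0,1]$), you immediately get effects such as ``$1$ on every $\rho^{(n)}$, $0$ on every $\sigma^{(n)}$ and on $\omega$,'' which are linear on $\mathrm{span}(\Omega)$ but discontinuous for $d_{\mathcal{M}_\otimes}$, so stability fails. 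If instead you restrict to finite-rank operators plus complements, you must still specify the local state spaces and the composition map $\imath$ so that $\mathcal{M}_\otimes$ is well defined, check that $\omega$ extends consistently and linearly to this effect space, and verify that $d_{\mathcal{M}_\otimes}(\sigma^{(n)},\omega)\to 0$ \emph{uniformly} over product measurements (not just those supported on a fixed finite block, as your argument currently assumes).

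None of this is obviously impossible, but it is exactly the ``bookkeeping'' you defer, and it is the substance of the proof. The paper's key--lock construction is designed so that this tension between a rich-enough effect space and stability never arises.
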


\section{Conclusion}

We conclude with a discussion of the implications of our results for various areas of research that use probability representations, starting with quantum foundations. A prominent example is QBism, which regards a quantum state as a catalogue of an agent's personal degrees of belief, corresponding to a probability representation $\mathbf{P}_{\mathcal{M}}$~\cite{Fuchs_2013,Appleby2011}. To formulate quantum theory---and specifically the state-update rule---entirely in probabilistic terms, QBism usually considers efficient measurement sets, implemented, for instance, by SIC-POVMs~\cite{Fuchs_2010}. However, \cref{cor:non-redundant} implies that, for systems of unbounded dimension, such probability representations fail to be robust. Our results thus point to a fundamental obstacle to extending the QBist programme to prototypical systems like a harmonic oscillator or the spatial degrees of freedom of a single particle. 

A second example from quantum foundations is the reconstruction program, which aims to retrieve quantum theory from postulates with a clear physical interpretation \cite{Hardy_2001,Wilce2009,Dakic2009,Chiribella2010,Chiribella2011,Hardy2011,Zapo2012,delaTorre2012,Masanes2013,Barnum2014,Masanes2014b,Hardy2015,Muller2016,DAriano_2017,Hohn2017,Hohn2017tool,Hohn_2017rules,Selby2021}. The idea is to consider a broader class of theories---each specified by a state space and an effect space---in which quantum theory appears as a special case. The postulates then correspond to constraints on these spaces. A typical approach is to focus on a small set of measurements, for example local ones~\cite{Pawlowski_2009}, and then constrain the state space by imposing conditions on the probabilities of the measurement outcomes. However, our results show that the quantum state space cannot be robustly retrieved in this manner, unless additional assumptions, such as bounds on the dimension, are imposed.\footnote{One way to avoid this issue is to consider all possible effects rather than a restricted class of measurements. In quantum theory, however, the set of all effects is the dual of the state space, and is therefore just as difficult to characterize as the state space itself.}

Our results also impact quantum information theory. We illustrate this with an example from quantum cryptography, namely generating a key between two parties Alice and Bob, which is secret relative to an adversary Eve \cite{Bennett1984,Ekert1991}. So-called ``post-quantum'' cryptographic schemes achieve this without assuming the validity of quantum theory, relying only on the non-signalling principle~\cite{Ekert1991,Barrett2005}. To analyse such schemes, one represents the measurement choices and outcomes of Alice, Bob and Eve as inputs and outputs to a non-signaling box of the following form\footnote{The non-signalling property ensures that neither party can communicate through their choice of measurement. For example, Alice's marginal distribution $\smash{P_{X|\alpha \gamma \beta}}$ is independent of $\beta$ and $\gamma$.} 
\begin{equation}\label{eq:box}
    \NSbox{\alpha}{\beta}{\gamma}{X}{Y}{Z} \, . 
\end{equation}
The behaviour of this box is fully characterized by the probability distribution~$P_{XYZ|\alpha\beta\gamma}$. An important special case is a non-signaling box that obeys the laws of quantum theory. In this case, the conditional probability distribution~$P_{XYZ|\alpha\beta\gamma}$ corresponds to the probability representation $\mathbf{P}_{\mathcal{M}_{\otimes}}(\rho_{A B E})$ of a quantum state $\rho_{A B E}$ shared by Alice, Bob, and Eve.

\Cref{cor:prod_conv} reveals a problem when one attempts to define the security of a key in this model~\cite{Barrett2005,Masanes2014}: Due to the non-robustness of $\mathbf{P}_{\mathcal{M}_{\otimes}}(\rho_{A B E})$, it is not clear that the security of the box representation implies security according to the standard quantum-cryptographic security definition~\cite{Ferradini_2025}. In other words, post-quantum cryptography is potentially less secure than quantum cryptography.

Quantum field theory (QFT) is another prominent domain in which probability representations are employed. Instead of wave functions, states are typically characterized by their $n$-point correlation functions.\footnote{See also~\cite{Arkanihamed2025} for another discussion of the differences between quantum states and correlation functions.} The injectivity of this representation follows from the Wightman reconstruction theorem \cite{Wightman1956}. Our results, however, suggest that such representations are not robust.

While our treatment assumes that states can be represented as density operators and is therefore not directly applicable to general QFTs, theories satisfying the split property can be approximated by discretized models in which degrees of freedom are localized on a lattice and the state space factorizes accordingly. In this regime, \cref{cor:prod_conv} shows that correlation functions fail to provide a robust state representation. This issue becomes particularly pronounced when considering genuinely non-local observables. A notable example is the set of observables required to test the athermality of Hawking radiation, which manifests in complex non-local correlations. Our results imply that such correlations are not reliably captured by correlation functions.

One might assume that non-robustness is no concern in systems of bounded dimension. There, all norms---and hence their induced topologies---are equivalent. However, non-robust\-ness still appears via a dimen\-sion-dependent scaling factor between the distances~$d_{\mathcal{M}}$ and $\delta$. An application of this phenomenon is data hiding \cite{Terhal2001,DiVincenzo2004,Eggeling2002,Hayden_2004,Koenig2007,Matthews2009,Lancien2013,Chitambar2014,Aubrun2015,Lami2021,Correa2022,Cheng_2023,Mele2025,Ha2025}. Our results can be understood as an extrapolation of this phenomenon to the unbounded-dimensional setting.

In summary, we have argued against probability-based representations: although ubiquitous in physics, they do not share the favourable properties of density operators (see \cref{tab:result}). Probability representations, however, have the advantage of not being tied to the Hilbert space formalism, allowing the exploration of more general theories. This motivates the search for an alternative approach that unifies the theory-independence of probability representations with the robustness of density operators.

\section*{Acknowledgments}

We thank Giulio Chiribella, Thomas Galley, Lucien Hardy, Lluis Masanes, Markus Müller, and Robert Spek\-kens for discussions. We also thank Mile Gu, Blake Stacey, and Matt Weiss for comments on a previous version. 
This work was funded by the Swiss National Science Foundation via project No.\ \mbox{20QU-1\_225171}. We are also grateful for the support from the NCCR SwissMAP, the ETH Zurich Quantum Center, and the Simons Center for Geometry and Physics, where part of this work was carried out. 


\printbibliography

\numberwithin{theorem}{section}
\numberwithin{definition}{section}
\numberwithin{lemma}{section}
\numberwithin{remark}{section}
\numberwithin{proposition}{section}
\numberwithin{corollary}{section}
\numberwithin{equation}{section}

\appendix
\newpage 

\section*{\LARGE{Appendix}}

\section{Topological properties}\label{app:topo}
\begin{definition}\label{def:fragile}
    A set $\mathcal{M}$ of measurements is \emph{fragile} there exist an effect $E$ with finite-dimensional support such that the topologies induced by $d_{\mathcal{M}}$ and $d_{\mathcal{M} \cup \{E, \mathbbm{1}-E\}}$ are different. 
\end{definition}

\begin{remark} \label{rem:fragile}
  Let $\mathcal{N}$ be the set of effects that appear in a measurement set $\mathcal{M}$. Then $\mathcal{M}$ is not fragile if any possible effect with finite support is in the closure of $\mathrm{span}(\mathcal{N})$ with respect to $\|\cdot\|_{\infty}$. This is, in particular, the case for $\mathcal{M}_{\otimes}$.
\end{remark}
\begin{proof}
  Let $E$ be an effect with finite support and $(\rho_n)_{n \in \mathbb{N}}$ a sequence converging to $\rho$ with respect to $d_{\mathcal{M}}$. We first show that, under the assumption made in the remark, the sequence also converges with respect to $d_{\mathcal{M} \cup \{E, \mathbbm{1}-E\}}$. 

 To see this, take a sequence of effects $(E_n)_{n \in \mathbb{N}}$ in the span of the effect of $\mathcal{M}$ that converges to $E$ with respect to $\|\cdot\|_{\infty}$. For every $\varepsilon > 0$ there exists an $E_m = \sum_{i} a^{m}_i N^{m}_i$ with $N_i^m \in \mathcal{N}$ such that $\|E_m -E\|_{\infty} \leq \varepsilon$. So we find 
  \begin{align}
    \begin{split}
      \lim_{n \to \infty} |\tr(E(\rho_n - \rho))| &\leq \lim_{n \to \infty} \sum_i |a^{m}_i| \cdot \bigl|\tr(N_i(\rho_n - \rho))\bigr| + \bigl|\tr((E_m-E)(\rho_n - \rho))\bigr| \\
      &\leq \lim_{n \to \infty}\sum_i |a_i^k| \cdot \|\rho_n - \rho\|_{\mathcal{M}} + 2 \varepsilon \\
      &= 2 \varepsilon.
    \end{split}
  \end{align}
  As this holds for every $\varepsilon > 0$, we have $\lim_{n \to \infty} |\tr(E(\rho_n - \rho))| = 0$. This implies that the sequence $(\rho_n)_{n \in \mathbb{N}}$ converges to $\rho$ also with respect to $d_{\mathcal{M} \cup \{E, \mathbbm{1}-E\}}$. 
  
  We have thus established that every sequence $(\rho_n)_{n \in \mathbb{N}}$ that converges with respect to $d_{\mathcal{M}}$ also converges with respect to $d_{\mathcal{M} \cup \{E, \mathbbm{1}-E\}}$. Because, conversely, $d_{\mathcal{M} \cup \{E, \mathbbm{1}-E\}}$ dominates~$d_{\mathcal{M}}$, the topology induced by $d_{\mathcal{M}}$ is equal to the topology induced by $d_{\mathcal{M} \cup \{E, \mathbbm{1}-E\}}$. The stability condition thus holds. 

  Because product effects with finite support linearly span effects with finite support, the condition is met for $\mathcal{M}_{\otimes}$.
\end{proof}

\proptopo*

\begin{proof}
  We first show that, if the topologies induced by $d_{\mathcal{M}}$ and $\delta$ are identical, then $\mathcal{M}$ is not fragile. If these two topologies are identical, then the topology of $d_{\mathcal{M} \cup \{E, \mathbbm{1}-E\}}$ is finer than the topology of $\delta$. Furthermore, because $d_{\mathcal{N}} \leq \delta$ holds for arbitrary measurement sets $\mathcal{N}$, the topology induced by $d_{\mathcal{M} \cup \{E, \mathbbm{1}-E\}}$ is coarser than that of $\delta$.  Consequently, the topologies induced by $d_{\mathcal{M} \cup \{E, \mathbbm{1}-E\}}$ and $\delta$ are identical. 

  We now show that the stability of $\mathcal{M}$ implies that the topologies induced by $d_{\mathcal{M}}$ and~$\delta$ are identical. Stability means that, for any effect $E$ with finite support, the topologies induced by $d_{\mathcal{M} \cup \{E, \mathbbm{1}-E\}}$ and $d_{\mathcal{M}}$ are identical. Let $\{\ket{n}\}_{n \in \mathbb{N}}$ be an orthonormal basis of $\mathcal{H}$ and $P_N = \sum_{i = 0}^{N} \ketbra{i}$ projectors. For any measurements set $\mathcal{M}$ satisfying the conditions of the proposition and $N \in \mathbb{N}$, the map $\mathsf{P}_N: \rho \mapsto P_N \rho P_N$, where $\rho$ is a state, is continuous with respect to $\| \cdot \|_{\mathcal{M}}$. To see this, let $(\rho_n)_{n \in \mathbb{N}}$ be a sequence that converges to $\rho$ with respect to $\|\cdot\|_{\mathcal{M}}$ and let $\mathcal{E} = \{E_1, \dots E_{N^2}\}$ be a tomographically complete POVM on the support of~$\mathsf{P}_N$. We define the norm $\|\rho \|_{\mathcal{E}} \coloneq \sup_{E \in \mathcal{E}} |\tr(E A)|$ on the image of $\mathsf{P}_N$. Since the topology induced by $d_{\mathcal{M} \cup \{E, \mathbbm{1}-E\}}$ is identical to that of $d_{\mathcal{M}}$ it follows that 
  \begin{align}
    \begin{split}
      \forall E \in \mathcal{E}: \lim_{n \to \infty}|\tr(E P_N (\rho_n-\rho) P_N)| &=  \lim_{n \to \infty} |\tr(E (\rho_n-\rho))| \\ & \leq \lim_{n \to \infty} d_{\mathcal{M} \cup \{E, \mathbbm{1}-E\}}(\rho_n, \rho)  = 0.
    \end{split}
  \end{align}
  As $|\mathcal{E}| < \infty$, we find that $\mathsf{P}_N(\rho_n)$ converges to $\mathsf{P}_N(\rho)$ with respect to $\|\cdot\|_{\mathcal{E}}$, and because the image of $\mathsf{P}_N$ is finite-dimensional also with respect to $\|\cdot\|_{\mathcal{M}}$. As this holds for any converging sequence, we have established that the map $\mathsf{P}_N$ is continuous with respect to~$\| \cdot \|_{\mathcal{M}}$.

  Let $(\rho^{(n)})_{n \in \mathbb{N}}$ be a sequence of states such that $\lim_{n \to \infty} d_{\mathcal{M}}(\rho^{(n)}, \rho) = 0$. Then  
  \begin{align}
    \begin{split}
       0 &= \lim_{n \to \infty} 2 d_{\mathcal{M}}(\rho^{(n)}, \rho) \\
       &= \lim_{N \to \infty} \lim_{n \to \infty} \|\rho^{(n)}- \rho\|_{\mathcal{M}} +  \| \rho - P_N \rho P_N\|_{\mathcal{M}} \\
       &\geq \lim_{N \to \infty} \lim_{n \to \infty} \|\rho^{(n)}- P_N \rho P_N \|_{\mathcal{M}} \\
       &= \lim_{N \to \infty} \lim_{n \to \infty} \| \rho^{(n)}- P_N \rho P_N\|_{\mathcal{M}} +  \|P_N \rho^{(n)} P_N - P_N \rho P_N \|_{\mathcal{M}} \\
       &\geq \lim_{N \to \infty} \lim_{n \to \infty} \|\rho^{(n)} - P_N \rho^{(n)} P_N\|_{\mathcal{M}}\\
       &\geq \lim_{N \to \infty} \lim_{n \to \infty} \sum_{E_i \in M} |\tr(E_i (P_N\rho^{(n)}P_N - \rho^{(n)}))| \\
       &\geq \lim_{N \to \infty} \lim_{n \to \infty} \bigl|1 - \tr(P_N\rho^{(n)}P_N)\bigr| 
      \end{split}
  \end{align}
  where $M \in \mathcal{M}$ is a measurement with effects $\{E_i\}_i$, and we used both the continuity of $\mathsf{P}_N$ with respect to $\|\cdot\|_{\mathcal{M}}$ and that $\lim_{N \to \infty} \delta(P_N \rho P_N, \rho) = 0$. Using this result, we can calculate the limit with respect to $\| \cdot \|_1$,
  \begin{align}
    \begin{split}
      \lim_{n \to \infty} \|\rho^{(n)}- \rho\|_1 &= \lim_{N \to \infty} \lim_{n \to \infty} \bigl|\|\rho^{(n)}- \rho\|_1 - \|P_N \rho P_N - \rho\|_1 \bigr| \\
      &\leq  \lim_{N \to \infty}  \lim_{n \to \infty} \|\rho^{(n)} - P_N  \rho P_N\|_1 \\
      &= \lim_{N \to \infty}  \lim_{n \to \infty} \bigl|\|\rho^{(n)}- P_N\rho P_N\|_1 - \|P_N\rho P_N- P_N \rho^{(n)} P_N\|_1\bigr| \\
      &\leq \lim_{N \to \infty}  \lim_{n \to \infty} \|\rho^{(n)} - P_N \rho^{(n)} P_N\|_1 \\
      &\leq \lim_{N \to \infty}  \lim_{n \to \infty} 2 \sqrt{1-\tr(P_N \rho^{(n)} P_N)} = 0
    \end{split}
  \end{align}
  where in the last inequality we used the gentle measurement lemma \cite[Lemma 9.4.2]{Wilde_2017}. We have thus shown that, if a sequence $(\rho_n)_{n \in \mathbb{N}}$ converges with respect to $\| \cdot \|_{\mathcal{M}}$ then it also converges with respect to $\| \cdot \|_1$. This, together with the fact that $\|\cdot\|_1$ dominates $\|\cdot\|_{\mathcal{M}}$, proves that the topologies induced by $\delta$ and $d_{\mathcal{M}}$ are equal. 
\end{proof}

\begin{remark}\label{rem:topo_counterexample}
  There exists a tomographically complete measurement set $\mathcal{M}$ such that the topologies induced by $d_{\mathcal{M}}$ and $\delta$ are not the same on $\mathcal{D}(\mathcal{H})$.
\end{remark}

\begin{proof}
  Let $\{\ket{n}\}_{n \in \mathbb{N}}$ be a basis and $P = \sum_{n} e^{-n} \ketbra{n}$. We define 
  \begin{equation}
    \mathcal{M} = \bigl\{\{P E P, \mathbbm{1}-P E P\} | \bra{0}E \ket{0} = 0, 0 \leq E \leq \mathbbm{1}\bigr\} \cup \bigl\{\{\mathbbm{1}\}\bigr\}.
  \end{equation}
  It can be readily verified that $\mathcal{M}$ is tomographically complete. To show that the topology induced by $d_{\mathcal{M}}$ is different from the topology induced by $\delta$, we consider the sequence~$(\ketbra{n}{n})_{n \in \mathbb{N}}$. This sequence obviously does not converge with respect to $\delta$. However, it converges with respect to $d_{\mathcal{M}}$. To see the latter, note first that the effect $\mathbbm{1}$ cannot be used to distinguish $\ketbra{0}$ from $\ketbra{n}$. Furthermore, for any effect $E$ with $\bra{0}E\ket{0} = 0$, it holds that $\tr(P E P (\ketbra{n}{n}-\ketbra{0}{0})) = e^{-2n} \bra{n} E \ket{n} \leq e^{-2n}$. 
\end{proof}

 Before proceeding with the proof of our next main statements, we need a technical lemma. The lemma refers to norms that are defined on the space $\mathrm{span}(\mathcal{D}(\mathcal{H}))$, which contains all Hermitian trace-class operators on $\mathcal{H}$. We also remark that any such operator~$A$ can be written as a sum of a positive part $A_+$ and a negative part $A_-$. 

 \begin{lemma} \label{lem:sequenceineq}
 If the norms $\|\cdot\|_1$ and $\|\cdot \|_{\mathcal{M}}$ do not induce the same topology then there exists a sequence $(A^{(n)})_{n \in \mathbb{N}}$ of Hermitian trace-class operators from the set\footnote{The choice of $\frac{1}{11}$ is due to a preference for prime numbers of at least one of the authors.}  
   \begin{equation}
    \mathcal{O}_{\mathrm{diff}} \coloneq \{A:\, {\tr(A) = 0} \land \tr(A_+) \leq {\textstyle \frac{1}{11}}\},
  \end{equation}
  such that 
     \begin{equation} \label{eq:Asequence}
        \lim_{n \to \infty} \|A^{(n)}\|_{\mathcal{M}} = 0,
     \end{equation}   
     whereas there exists $\mu > 0$ such that 
     \begin{equation} \label{eq:Anonconvergence}   
       \forall n \in \mathbb{N}: \quad  \|A^{(n)}\|_{1} \geq \mu.
    \end{equation}
   \end{lemma}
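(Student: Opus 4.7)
The plan is to distill the hypothesis of inequivalent topologies into a witness sequence and then perform three successive normalizations. First, observe that $\|\cdot\|_{\mathcal{M}} \le \|\cdot\|_1$ on $\mathrm{span}(\mathcal{D}(\mathcal{H}))$: for the Jordan decomposition $B = B_+ - B_-$ and any POVM $\{E_i\}$, $\sum_i |\tr(E_i B)| \le \sum_i \tr(E_i (B_+ + B_-)) = \|B\|_1$. Hence the $\|\cdot\|_1$-topology is finer than the $\|\cdot\|_{\mathcal{M}}$-topology, and by hypothesis strictly so. This yields a sequence $B^{(n)} \in \mathrm{span}(\mathcal{D}(\mathcal{H}))$ with $\|B^{(n)}\|_{\mathcal{M}} \to 0$ while $\|B^{(n)}\|_1 \not\to 0$. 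Passing to a subsequence with $\|B^{(n)}\|_1 \ge \epsilon > 0$ and rescaling $C^{(n)} \coloneq B^{(n)}/\|B^{(n)}\|_1$ gives $\|C^{(n)}\|_1 = 1$ and $\|C^{(n)}\|_{\mathcal{M}} \le \|B^{(n)}\|_{\mathcal{M}}/\epsilon \to 0$.

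Next I enforce tracelessness. For any measurement $M = \{E_i\} \in \mathcal{M}$, the POVM normalization $\sum_i E_i = \mathbbm{1}$ implies $|\tr(C^{(n)})| = |\sum_i \tr(E_i C^{(n)})| \le \|P_M(C^{(n)})\|_1 \le \|C^{(n)}\|_{\mathcal{M}}$, so $\tr(C^{(n)}) \to 0$. Fixing any state $\rho_0$ and setting $D^{(n)} \coloneq C^{(n)} - \tr(C^{(n)}) \rho_0$, I get $\tr(D^{(n)}) = 0$, $\|D^{(n)}\|_1 \to 1$ by the reverse triangle inequality, and $\|D^{(n)}\|_{\mathcal{M}} \le \|C^{(n)}\|_{\mathcal{M}} + |\tr(C^{(n)})| \cdot \|\rho_0\|_{\mathcal{M}} \to 0$, using $\|\rho_0\|_{\mathcal{M}} = 1$.

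Finally I impose the positive-part bound. Since $D^{(n)}$ is traceless, $\tr(D^{(n)}_+) = \|D^{(n)}\|_1/2 \to 1/2$, so $\tr(D^{(n)}_+) > 0$ for $n$ large enough and $\lambda_n \coloneq (1/11)/\tr(D^{(n)}_+)$ is well-defined and bounded (with $\lambda_n \to 2/11$). Setting $A^{(n)} \coloneq \lambda_n D^{(n)}$ yields Hermitian trace-class operators with $\tr(A^{(n)}) = 0$, $\tr(A^{(n)}_+) = 1/11$, $\|A^{(n)}\|_1 = 2 \tr(A^{(n)}_+) = 2/11$, and $\|A^{(n)}\|_{\mathcal{M}} = \lambda_n \|D^{(n)}\|_{\mathcal{M}} \to 0$. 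Re-indexing to discard the finitely many initial terms, the resulting sequence lies in $\mathcal{O}_{\mathrm{diff}}$ and satisfies the conclusion with, e.g., $\mu = 1/11$.

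The single step calling for caution is the trace-correction in the second paragraph: subtracting $\tr(C^{(n)})\rho_0$ could in principle inflate the $\mathcal{M}$-norm and destroy the smallness established in the first paragraph. What saves the argument is the elementary bound $|\tr(\cdot)| \le \|\cdot\|_{\mathcal{M}}$ obtained from the identity decomposition of any POVM; this forces the correction to vanish in norm. All remaining steps are routine rescalings that make no use of the specific structure of $\mathcal{M}$.
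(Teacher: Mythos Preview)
Your proof is correct and follows the same overall strategy as the paper---extract a witness sequence from the topology mismatch, then normalize---but the order and mechanics of the normalizations differ. The paper first rescales by $1/\max(11\|A^{(n)}\|_1,1)$ to force $\tr(A_+^{(n)})\le\tfrac{1}{11}$, and only then enforces tracelessness by subtracting $\tr(A^{(n)})\ketbra{e_0}$ at a carefully chosen eigenvector of $A^{(n)}$ with eigenvalue in $[0,1/n]$ (whose existence uses that $\mathcal{H}$ is infinite-dimensional). You instead normalize $\|C^{(n)}\|_1=1$, subtract $\tr(C^{(n)})\rho_0$ at a \emph{fixed} state to get tracelessness, and finally rescale by $(1/11)/\tr(D_+^{(n)})$. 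Your ordering is arguably cleaner: rescaling trivially preserves tracelessness, whereas in the paper's order one must check that the eigenvector subtraction does not spoil the positive-part bound already obtained. Your use of a fixed $\rho_0$ is also simpler than the eigenvector construction and avoids the explicit appeal to infinite dimension. Both arguments hinge on the same key inequality $|\tr(\cdot)|\le\|\cdot\|_{\mathcal{M}}$, which you correctly identify as the point deserving care.
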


 \begin{proof} 
  The Hilbert space $\mathcal{H}$ is infinite-dimensional, as the norms $\|\cdot\|_1$ and $\|\cdot \|_{\mathcal{M}}$, do not induce the same topology. Furthermore, because $\|\cdot\|_1$ dominates $\|\cdot \|_{\mathcal{M}}$, the inequivalence of norms implies that there exists a sequence $(A^{(n)})_{n \in \mathbb{N}}$ of trace-class operators satisfying~\eqref{eq:Asequence}, while $\lim_{n \to \infty} \|A^{(n)}\|_1 \neq 0$. By restricting to a suitable subsequence, we can also ensure that~\eqref{eq:Anonconvergence} holds. 
  
  In the remainder of the proof, we will show that, by appropriately modifying this sequence, we can ensure that its elements lie in $\mathcal{O}_{\mathrm{diff}}$. 

  We first take care of the condition $\tr(A_+^{(n)}) \leq \frac{1}{11}$. For this, consider the sequence $(B^{(n)})_{n \in \mathbb{N}}$ defined by 
  \begin{align}
    B^{(n)} \coloneq \frac{1}{\max(11 \|A^{(n)}\|_{1}, 1)} A^{(n)}.
  \end{align}
  As $\forall n \in \mathbb{N}: \, \|B^{(n)}\|_{\mathcal{M}} \leq \|A^{(n)}\|_{\mathcal{M}}$, the sequence $(B^{(n)})_{n \in \mathbb{N}}$ still converges with respect to~$\|\cdot \|_{\mathcal{M}}$. Furthermore, for all $n$ it holds that $\|B^{(n)}\|_{1} \geq {\frac{\mu}{11}}$. As $\|\cdot\|_1$ is compatible with data processing, it follows that $(B^{(n)})_{n \in \mathbb{N}}$ has the desired property
    \begin{equation}
      \frac{1}{11} \geq \|B^{(n)}\|_{1} \geq |\tr(\Pi_{\pm} B^{(n)}\Pi_{\pm})| = |\tr(B_{\pm}^{(n)})|
    \end{equation}
    where $\Pi_{\pm}$ is the projector on the positive (negative) part of $B^{(n)}$. This shows that we can, without loss of generality, assume that $\tr(A_+^{(n)}) \leq \frac{1}{11}$.

    Next, we turn to the property $\tr(A^{(n)}) = 0$. Let $\ket{e_0}$ be an eigenvector of $A^{(n)}$ such that the corresponding eigenvalue $\lambda_0$ satisfies $0 \leq \lambda_0 \leq \frac{1}{n}$. Such an eigenvector exists as $\textstyle{\tr(\smash{A_+^{(n)}})} \leq 1$ and the Hilbert space $\mathcal{H}$ is infinite dimensional. Note that 
    the sequence $(B^{(n)})_{n \in \mathbb{N}}$ defined by $B^{(n)} = A^{(n)}- \tr(A^{(n)}) \ketbra{e_0}$ does not converge with respect to $\|\cdot\|_1$ because, for all $n \in \mathbb{N}$,
    \begin{equation}
      \|B^{(n)}\|_1 = \tr(A^{(n)}_+) -  \tr(A^{(n)}_-) -\lambda_0 + |\lambda_0 - \tr(A^{(n)})| \geq \mu - \frac{1}{n}.
    \end{equation}
    However, it still converges with respect to $\|\cdot\|_{\mathcal{M}}$ as
    \begin{equation}
      \|B^{(n)}\|_{\mathcal{M}} \leq \|A^{(n)}\|_{\mathcal{M}} + |\tr(A^{(n)})| \leq  2 \|A^{(n)}\|_{\mathcal{M}} 
    \end{equation}
    where we used that $|\tr(A)| \leq \|A\|_{\mathcal{M}}$ in the last inequality.

 \end{proof}

\topology*

\begin{proof}
    \ref{item:continuous}$\implies$\ref{item:robust}: Let $\Sigma$ be a subset of the state space and $(\rho^{(n)})_{n \in \mathbb{N}}$ a sequence such that $\lim_{n \to \infty} d_{\mathcal{M}}(\rho^{(n)}, \Sigma) = 0$. Then there exists a sequence $(\sigma^{(n)})_{n \in \mathbb{N}} \subset \Sigma$ such that $\lim_{n \to \infty} d_{\mathcal{M}}(\rho^{(n)}, \sigma^{(n)}) = \lim_{n \to \infty} \|\rho^{(n)} - \sigma^{(n)}\|_{\mathcal{M}} = 0$. As the norms induce the same topology, it follows that $\lim_{n \to \infty} \|\rho^{(n)} - \sigma^{(n)}\|_{1} = 0$, which implies $\lim_{n \to \infty} \delta(\rho^{(n)}, \Sigma) = 0$.     
    
    \ref{item:robust}$\implies$\ref{item:continuous}: Assume by contradiction that \ref{item:continuous} does not hold. Then the underlying Hilbert space must be infinite-dimensional, because all norms on a finite-dimensional vector space are equivalent. Let $\{\ket{n}\}_{n \in \mathbb{N}}$ be an orthonormal basis of this Hilbert space and $(A^{(n)})_{n \in \mathbb{N}}$ the sequence of operators in $\mathcal{O}_{\mathrm{diff}}$ as defined by \cref{lem:sequenceineq}. From this, we can build sequences $(\rho^{(n)})_{n \in \mathbb{N}}$ and $(\sigma^{(n)})_{n \in \mathbb{N}}$ of states by
    \begin{align}
        \rho^{(n)} &\coloneq \ketbra{n}{n} \left(1-\tr(A_{+}^{(n)})\right) + A_{+}^{(n)}, \quad \sigma^{(n)} \coloneq \ketbra{n}{n} \left(1+\tr(A_{-}^{(n)})\right) - A_{-}^{(n)}.
    \end{align}
    Note that, because all operators from $\mathcal{O}_{\mathrm{diff}}$ satisfy $\tr(A_+) \leq 1$, these are valid states.    

    
    We calculate the distance between any two states of the two sequences. For $n \neq m$, we have
    \begin{align}
      \begin{split}
        2\delta(\rho^{(n)}, \sigma^{(m)}) &= \|\rho^{(n)} - \sigma^{(m)}\|_{1} \\
        &= \bigl\|\ketbra{n}{n} \left(1-\tr(A_{+}^{(n)})\right) + A_{+}^{(n)} - \ketbra{m}{m} \left(1+\tr(A_{-}^{(m)})\right) + A_{-}^{(m)} \bigr\|_{1} \\
        &\geq \bigl|1-\tr(A_{+}^{(n)})+ \bra{n} A_{+}^{(n)}  \ket{n} +\bra{n} A_{-}^{(m)} \ket{n} \bigr|\\
        &\geq  1-\tr(A_{+}^{(n)}) + \tr(A_{-}^{(m)}) \\
        &\geq 1-\frac{2}{11}
      \end{split}
    \end{align}
    where, in the first inequality, we used that the $1$-norm is non-increasing under any trace non-increasing completely positive map. For $n = m$, we have $\rho^{(n)} - \sigma^{(n)} = A^{(n)}$. Therefore, $\forall n \in \mathbb{N}: \delta(\rho^{(n)}, \sigma^{(n)}) \geq \mu$. Define the set $\Sigma = \{\sigma^{(n)}| n \in \mathbb{N}\}$. By definition, $\lim_{n \to \infty} d_{\mathcal{M}}(\rho^{(n)}, \Sigma) \leq \lim_{n \to \infty} d_{\mathcal{M}}(\rho^{(n)}, \sigma^{(n)}) = 0$. Furthermore, for all $n \in \mathbb{N}$, the bound $\delta(\rho^{(n)}, \Sigma) \geq \min(1-\frac{2}{11}, \mu)$ holds. Therefore, $\mathbf{P}_{\mathcal{M}}$ is not robust.

    \ref{item:continuous} $\implies$ \ref{item:complete}: Assume that the topologies induced by $\| \cdot \|_{\mathcal{M}}$ and $\|\cdot\|_1$ are identical. Then there exists a constant $C > 0$ such that $C\|\cdot\|_1 \leq \|\cdot\|_{\mathcal{M}} \leq \|\cdot \|_1$. Therefore, a sequence is Cauchy or converges with respect to $\| \cdot \|_{\mathcal{M}}$ if and only if it is Cauchy or converges with respect to $\|\cdot\|_{1}$. The space $\mathrm{span}(\mathcal{D}(\mathcal{H}))$ is the set of Hermitian trace-class operators, which is complete with respect to $\|\cdot\|_{1}$ \cite[4.2.2. Corollary]{Murphy1990} and, thus, also with respect to $\| \cdot \|_{\mathcal{M}}$.

    \ref{item:complete} $\implies$ \ref{item:continuous}: Assume the space $\mathrm{span}(\mathcal{D}(\mathcal{H}))$ with the topology induced by$\| \cdot \|_{\mathcal{M}}$ is complete and, thus, a Banach space. Consider the identity map $(\mathrm{span}(\mathcal{D}(\mathcal{H})),\|\cdot\|_1) \to (\mathrm{span}(\mathcal{D}(\mathcal{H})),\|\cdot\|_{\mathcal{M}})$. This map is continuous as $\|\cdot\|_{\mathcal{M}} \leq \|\cdot\|_{1}$. By the open mapping theorem for continuous linear functions on Banach spaces \cite[Theorem III.11]{Reed1972}, the inverse of this map is also continuous. Thus, the topologies induced by $\| \cdot \|_{\mathcal{M}}$ and $\|\cdot\|_1$ are identical. 
\end{proof}

\begin{remark}
  The representation $\mathbf{P}_{\mathcal{M}}$ being robust is also equivalent to $\mathrm{span}(\mathcal{D}(\mathcal{H}))$ not being meagre\footnote{A topological space $X$ is meagre if it is the countable union of nowhere dense sets, i.e., sets whose closures have an empty interior.} with respect to the topology induced by $\| \cdot \|_{\mathcal{M}}$.
\end{remark}

\begin{proof}
  If the topologies induced by $\| \cdot \|_{\mathcal{M}}$ and $\| \cdot \|_{1}$ are identical then, by \cref{prop:topology}, $\mathrm{span}(\mathcal{D}(\mathcal{H}))$ equipped with $\| \cdot \|_{\mathcal{M}}$ is a Banach space. It follows directly from the Baire category theorem \cite[Theorem III.8]{Reed1972} that a Banach space is not meagre. 

  Conversely, assume the space $\mathrm{span}(\mathcal{D}(\mathcal{H}))$ with the topology induced by $\| \cdot \|_{\mathcal{M}}$ is not meagre. Consider the family of functionals $\mathcal{F} \coloneq \{A \in \mathrm{span}(\mathcal{D}(\mathcal{H})) \mapsto \tr(E A)| 0 \leq E \leq \mathbbm{1}\}$. This family of functionals is pointwise bounded on $\mathrm{span}(\mathcal{D}(\mathcal{H}))$: 
  \begin{equation}
    \forall A \in \mathrm{span}(\mathcal{D}(\mathcal{H})): \ \sup_{0 \leq E \leq \mathbbm{1}} \tr(EA) = \|A\|_+ < \infty,
  \end{equation}
  where $\|A\|_+ = \frac{1}{2} |\tr(A)| +  \frac{1}{2} \|A\|_1 \leq  \|A\|_1$ is the generalized trace distance. Note that this norm is equivalent to $\| \cdot \|_1$.
  As $\mathrm{span}(\mathcal{D}(\mathcal{H}))$ is not meagre with respect to the topology induced by $\|\cdot\|_{\mathcal{M}}$, we can apply Banach-Steinhaus \cite[Theorem 2.5]{Rudin1991}, which implies that the family of functionals $\mathcal{F}$ is pointwise equicontinous with respect to~$\|\cdot\|_{\mathcal{M}}$. Therefore, for every $\varepsilon > 0$, there is a $\delta > 0$ such that $\|A\|_{\mathcal{M}} \leq \delta \implies \|A\|_{+} \leq \varepsilon$. Thus, any sequence that converges with respect to~$\|\cdot\|_{\mathcal{M}}$ also converges with respect to~$\|\cdot\|_{+}$, and due to the equivalence of $\|\cdot\|_{+}$ with $\| \cdot \|_1$ also with respect to~$\| \cdot \|_1$. The reverse is also true since $\|\cdot\|_{\mathcal{M}} \leq \|\cdot\|_{1}$. Thus, the topologies induced by~$\| \cdot \|_{\mathcal{M}}$ and $\|\cdot\|_1$ are identical. 
\end{proof}

\section{Asymptotic entropy}\label{app:entropy}
Here we give the formal version of \cref{def:epsilon_decoded}. First, we define the allowed decoding operations. 

\begin{definition}
  Let $\mathcal{O}$ be a collection of positive operators and $M$ a POVM on a measurable space $(\mathbb{O},\Sigma_M)$. A \emph{decoding operation} $\mathcal{D}_{M}$ for $P_M$ maps lists of probabilities $(P_{E})_{E \in \mathcal{O}}$ to a function
  \begin{equation}
    \mathcal{D}_{M}((P_{E})_{E \in \mathcal{O}}): F \in \mathcal{S}\mapsto \sum_{E} p_E(F) P_{E} \in \mathbb{R}
  \end{equation}
  where the sum runs over finitely many elements of $\mathcal{O}$, $\{p_E\}_{E \in \mathcal{O}} \subset \mathbb{R}_{+}$, and $\mathcal{S}$ is a semi-algebra that generates $\Sigma_M$ and is closed under finite disjoint unions.
\end{definition}

To define $\varepsilon$-decoding, we use the following norm on functions $P: \mathcal{S} \to \mathbb{R}$:
\begin{equation}
  \|P\| \coloneq \sup_{F \in \mathcal{S}} |P(F)|.
\end{equation}

\begin{definition}\label{def:decoded_technical}
    Let $\mathcal{O}$ be a collection of positive operators. We say that $\mathbf{P}_\mathcal{M}$ can be \emph{$\varepsilon$-decoded} from $\mathcal{O}$ if for every $M \in \mathcal{M}$ there exists a sequence of decoding operations $(\mathcal{D}^{(k)}_{M})_{k \in \mathbb{N}}$ such that 
    \begin{equation}\label{eq:convergence}
     \forall \rho \in \mathcal{D}(\mathcal{H}): \quad \limsup_{k \to \infty} \sup_{M \in \mathcal{M}} \left\|\mathcal{D}^{(k)}_{M}\bigl(\bigl(\tr(E \rho)\bigr)_{E \in \mathcal{O}}\bigr) - P_M(\rho) \right\| < \varepsilon.
    \end{equation}
\end{definition}

\begin{lemma}\label{lem:recovery}
   Let $\mathbf{P}_\mathcal{M}$ be $\varepsilon$-decodable from a set of positive operators $\mathcal{O}$ and let $\rho$ and $\sigma$ be states. Then for every $M \in \mathcal{M}$ there exists a semi-algebra~$\mathcal{S}_M$ generating $\Sigma_M$ such that for every $F\in\mathcal{S}_M$ there exists an operator $N^{M}_F$ in the positive span of $\mathcal{O}$ satisfying
  \begin{align}
    \tr(N_{F}^{M} \rho) \leq 1+\varepsilon, \quad \tr(N_{F}^{M} \sigma) &\leq 1+\varepsilon, \\ 
    \left||\tr(M(F) (\rho -\sigma))| -  |\tr(N^{M}_F(\rho-\sigma))|\right| &\leq 2\epsilon.
  \end{align}
\end{lemma}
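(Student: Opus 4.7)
The plan is to unpack the definition of a decoding operation, which is manifestly linear and nonnegative in its input probabilities, and simply transport it back to the level of POVM elements. Concretely, for each $M \in \mathcal{M}$ and each $n$ we have $\mathcal{D}^{(n)}_M((P_E)_{E \in \mathcal{E}})(F) = \sum_E p^{(n)}_E(F)\,P_E$ with $p^{(n)}_E(F) \geq 0$ and only finitely many nonzero terms, defined on some semi-algebra $\mathcal{S}^{(n)}_M$ generating $\Sigma_M$. Substituting $P_E = \tr(E\rho)$ gives $\tr(N^{(n)}_{M,F}\,\rho)$ where $N^{(n)}_{M,F} \coloneq \sum_E p^{(n)}_E(F)\,E$ lies in the positive span of $\mathcal{E}$. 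This operator is the natural candidate for $N^M_F$.

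The next step is to choose a single index $n_*$ that is good for both $\rho$ and $\sigma$ simultaneously. Applying \cref{def:decoded_technical} to $\rho$ and to $\sigma$ individually, the two $\limsup$'s are strictly less than $\varepsilon$, so there exists $n_* \in \mathbb{N}$ (and indeed infinitely many) such that
\begin{equation}
  \sup_{M \in \mathcal{M}} \bigl\|\mathcal{D}^{(n_*)}_M\bigl((\tr(E\tau))_{E \in \mathcal{E}}\bigr) - P_M(\tau) \bigr\| \leq \varepsilon \qquad \text{for } \tau \in \{\rho, \sigma\}.
\end{equation}
Fix such an $n_*$, set $\mathcal{S}_M \coloneq \mathcal{S}^{(n_*)}_M$, and define $N^M_F \coloneq N^{(n_*)}_{M,F}$ for $F \in \mathcal{S}_M$.

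It remains to verify the three bounds, which is essentially the reverse triangle inequality. For every $F \in \mathcal{S}_M$, the identity $\tr(N^M_F\,\tau) = \mathcal{D}^{(n_*)}_M((\tr(E\tau))_{E \in \mathcal{E}})(F)$ together with the error bound above and the fact that $\tr(M(F)\tau) \leq 1$ immediately gives $\tr(N^M_F\,\rho) \leq 1+\varepsilon$ and $\tr(N^M_F\,\sigma) \leq 1+\varepsilon$. For the comparison of differences, observe that
\begin{equation}
  \bigl|\tr\bigl(N^M_F(\rho-\sigma)\bigr) - \tr\bigl(M(F)(\rho-\sigma)\bigr)\bigr| \leq \sum_{\tau \in \{\rho,\sigma\}} \bigl|\tr(N^M_F\,\tau) - \tr(M(F)\tau)\bigr| \leq 2\varepsilon,
\end{equation}
and then applying $||a|-|b|| \leq |a-b|$ yields the third inequality of the lemma.

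There is no real obstacle here; the only minor subtlety is that the semi-algebra $\mathcal{S}^{(n)}_M$ attached to $\mathcal{D}^{(n)}_M$ is a priori $n$-dependent, which forces one to fix a single $n_*$ that works uniformly in $M$ for both states before declaring $\mathcal{S}_M$. This is precisely why the uniform-in-$M$ formulation of $\varepsilon$-decodability in \cref{def:decoded_technical} is needed rather than only a pointwise-in-$M$ version.
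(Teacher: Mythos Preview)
Your proposal is correct and follows essentially the same route as the paper's proof: fix a single index $k$ (your $n_*$) at which the decoding error is at most $\varepsilon$ for both $\rho$ and $\sigma$ simultaneously, take $\mathcal{S}_M$ and $N^M_F = \sum_E p_E(F)\,E$ from $\mathcal{D}^{(k)}_M$, and then read off the three inequalities via the triangle and reverse triangle inequalities. Your closing remark about the need for a single $n_*$ uniform in $M$ (forced by the $n$-dependence of $\mathcal{S}^{(n)}_M$) makes explicit a point the paper's proof handles silently.
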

\begin{proof}
  Let $(\mathcal{D}^{(n)}_{M})_{n \in \mathbb{N}, M \in \mathcal{M}}$ be the decoding operations that exist because $\mathbf{P}_\mathcal{M}$ can be $\varepsilon$-decoded from~$\mathcal{O}$, and let $k \in \mathbb{N}$ be such that
  \begin{equation}
    \sup_{M \in \mathcal{M}} \left\|\mathcal{D}^{(k)}_{M}\bigl(\bigl(\tr(E \rho)\bigr)_{E \in \mathcal{O}}\bigr) - P_M(\rho)\right\| \leq \varepsilon \text{ and }\sup_{M \in \mathcal{M}} \left\|\mathcal{D}^{(k)}_{M}\bigl(\bigl(\tr(E \sigma)\bigr)_{E \in \mathcal{O}}\bigr) - P_M(\sigma) \right\| \leq \varepsilon.
  \end{equation}
  Then, for every $M \in \mathcal{M}$, consider the decoding operation $\mathcal{D}^{(k)}_{M}$. Let $\mathcal{S}_{M}$ be the semi-algebra specified by the decoding operation $\mathcal{D}^{(k)}_{M}$, and let $N_F^{M} = \sum_{E \in \mathcal{O}} p_E^{M}(F) E$, where $p_E^{M}(F)$ is also specified by the decoding operation.  Then, by \cref{def:decoded_technical}, it follows that for all $M \in \mathcal{M}$ and $F \in \mathcal{S}_{M}$
  \begin{align}
    \begin{split}
        \left||\tr(P_M(F) (\rho -\sigma))| -  |\tr(N_{F}^{M}(\rho-\sigma))|\right| &\leq |\tr(P_M(F)(\rho-\sigma)) - \tr(N_{F}^{M} (\rho-\sigma))| \\
        &\leq  |\tr(P_M(F)\rho)-\tr(N_{F}^{M} \rho)| \\
        & \qquad + |\tr(P_M(F)\sigma)-\tr(N_{F}^{M}  \sigma)| \\
        &\leq 2 \varepsilon.
    \end{split}
  \end{align}
  Furthermore, \cref{eq:convergence} ensures that for all $M \in \mathcal{M}$ and $F \in \mathcal{S}_{M}$
  \begin{equation}
    \tr(N_{F}^{M} \rho) \leq \tr(P_M(F) \rho) + \varepsilon \leq 1 + \varepsilon.
  \end{equation}
  The same argument also applies to $\sigma$.
\end{proof}

\subsection{Proof of \texorpdfstring{\cref{thm:entropy}}{}}
Let $\mathcal{H}$ be a $d$-dimensional Hilbert space, $\ket{\psi} \in \mathcal{H}$ and $\rho \in \mathcal{D}(\mathcal{H})$. For the proof of \cref{thm:entropy}, the following function is useful
\begin{align}
    \begin{split}\label{eq:def_D_psi}
      D^{\rho}_{\ket{\psi}}: U(\mathcal{H}) &\to \mathbb{R} \\
      U &\mapsto \bigl|\bra{\psi}U\rho U^{\dagger}\ket{\psi}-\frac{1}{d}\bigr|.
    \end{split}
\end{align}
Let us now prove some properties of this function.

\begin{lemma}\label{lem:lip_const_min}
  For every $\ket{\psi}\in\mathcal{H}$, the function $D^{\rho}_{\ket{\psi}}$ is $2^{-H_{\min}(\rho)+\frac{3}{2}}$-Lipschitz with respect to the $2$-norm on the group $U(\mathcal{H})$ of unitaries on $\mathcal{H}$.
\end{lemma}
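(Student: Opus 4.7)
The plan is to first reduce the task to bounding the Lipschitz constant of the simpler auxiliary function $f: U \mapsto \langle\psi| U \rho U^{\dagger} | \psi\rangle$. Indeed, by the reverse triangle inequality applied to the definition of $D^{\rho}_{\ket{\psi}}$,
\begin{equation}
    \bigl| D^{\rho}_{\ket{\psi}}(U) - D^{\rho}_{\ket{\psi}}(V) \bigr| = \Bigl| |f(U) - \tfrac{1}{d}| - |f(V) - \tfrac{1}{d}| \Bigr| \leq |f(U) - f(V)|,
\end{equation}
so any Lipschitz bound for $f$ transfers directly to $D^{\rho}_{\ket{\psi}}$. This removes the (non-smooth) absolute value and the additive constant $1/d$ from the analysis.

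Next, I would write $f(U) - f(V)$ in a telescoping form by adding and subtracting the mixed term $\langle\psi| V \rho U^{\dagger} |\psi\rangle$:
\begin{equation}
    f(U) - f(V) = \langle\psi| (U-V) \rho U^{\dagger} |\psi\rangle + \langle\psi| V \rho (U^{\dagger}-V^{\dagger}) |\psi\rangle.
\end{equation}
To each term I would apply Cauchy--Schwarz with the splitting $\rho = \rho^{1/2} \rho^{1/2}$, e.g.
\begin{equation}
    \bigl|\langle\psi|(U-V)\rho U^{\dagger}|\psi\rangle\bigr| \leq \bigl\| \rho^{1/2} (U-V)^{\dagger} |\psi\rangle \bigr\| \cdot \bigl\| \rho^{1/2} U^{\dagger} |\psi\rangle \bigr\|.
\end{equation}

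The two resulting factors are controlled separately. Because $U$ is unitary, the second factor is bounded by $\|\rho^{1/2}\|_{\infty} = \sqrt{\|\rho\|_{\infty}} = 2^{-H_{\min}(\rho)/2}$. For the first factor, I would submultiplicatively peel off $\rho^{1/2}$ to obtain $\sqrt{\|\rho\|_{\infty}} \cdot \|(U-V)^{\dagger}|\psi\rangle\|$, and then use $\|(U-V)^{\dagger}|\psi\rangle\| \leq \|U-V\|_{\infty} \leq \|U-V\|_{2}$ since $\||\psi\rangle\|=1$ and the Schatten $\infty$-norm is dominated by the Schatten $2$-norm. Each telescoping term is therefore at most $\|\rho\|_{\infty}\, \|U-V\|_{2} = 2^{-H_{\min}(\rho)} \|U-V\|_{2}$, and summing the two gives the Lipschitz constant $2 \cdot 2^{-H_{\min}(\rho)}$, which is bounded by the claimed $2^{-H_{\min}(\rho)+3/2}$ (with room to spare, since $2 \leq 2^{3/2}$).

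There is no genuine obstacle here; the argument is a straightforward Cauchy--Schwarz estimate once one notices the telescoping trick. The only point that requires a moment of care is making sure the norm on $U(\mathcal{H})$ is the Hilbert--Schmidt $2$-norm rather than the operator norm, so that the bound $\|(U-V)|\psi\rangle\| \leq \|U-V\|_{2}$ is applied in the correct direction.
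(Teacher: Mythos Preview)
Your argument is correct and in fact yields the sharper Lipschitz constant $2\cdot 2^{-H_{\min}(\rho)}$, strictly below the paper's $2^{3/2}\cdot 2^{-H_{\min}(\rho)}$. The route, however, is genuinely different from the paper's. The paper first reduces to the case $V=\mathbbm{1}$ via unitary invariance, then parametrizes $U\ket{\psi}=e^{i\varphi}(\cos\theta\,\ket{\psi}+\sin\theta\,\ket{\bar\psi})$ and expands $\bra{\psi}\rho\ket{\psi}-\bra{\psi}U\rho U^{\dagger}\ket{\psi}$ explicitly in $\theta$; the three matrix elements $\bra{\psi}\rho\ket{\psi}$, $\bra{\bar\psi}\rho\ket{\bar\psi}$, $|\mathrm{Re}\bra{\bar\psi}\rho\ket{\psi}|$ are each bounded by $2^{-H_{\min}(\rho)}$, and the resulting trigonometric expression $\sin^{2}\theta+|\sin\theta\cos\theta|$ is related back to $\|U-\mathbbm{1}\|_{2}$ through two further inequalities. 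Your telescoping plus Cauchy--Schwarz with the splitting $\rho=\rho^{1/2}\rho^{1/2}$ bypasses all of this geometry: it never introduces $\theta$ or $\ket{\bar\psi}$, it applies directly to arbitrary $U,V$ without the reduction step, and it uses only the operator inequality $\rho^{1/2}\leq\|\rho\|_{\infty}^{1/2}\mathbbm{1}$ together with $\|\cdot\|_{\infty}\leq\|\cdot\|_{2}$. The payoff is a shorter, more transparent proof with a better constant; the paper's approach, by contrast, makes the dependence on the overlap $|\bra{\psi}U\ket{\psi}|$ geometrically explicit, which is not needed for the lemma as stated but could be informative if one wanted finer control.
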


\begin{proof}
  First note that it suffices to show for every $\ket{\psi}$ that 
  \begin{equation}
    |{D^{\rho}_{\ket{\psi}}(U) -D^{\rho}_{\ket{\psi}}(\mathbbm{1})}| \leq 2^{-H_{\min}(\rho)+\frac{3}{2}} \|{U - \mathbbm{1}}\|_2,
  \end{equation}
  as then the lemma follows from 
  \begin{equation}
   |D^{\rho}_{\ket{\psi}}(U) -D^{\rho}_{\ket{\psi}}(U')| = |D^{\rho}_{(U')^{\dagger}\ket{\psi}}((U')^{\dagger}U) -D^{\rho}_{(U')^{\dagger} \ket{\psi}}(\mathbbm{1})|,
  \end{equation}
  and the fact that, for all $X \in \mathrm{Herm}(L)$ and $U \in U(L)$ it holds that $\|UX\|_2 = \|X\|_2$.

  We define $\theta \in [0, \pi)$ by
  \begin{equation}
    \cos(\theta) = |\bra{\psi}U\ket{\psi}|
  \end{equation}
  and a state $\ket*{\bar{\psi}}$ such that $\braket*{\bar{\psi}}{\psi} = 0$, as well as
  \begin{equation}
    U^{\dagger} \ket{\psi} = e^{i\varphi}(\cos(\theta) \ket{\psi} + \sin(\theta) \ket*{\bar{\psi}}).
  \end{equation}
  Then, we find that 
  \begin{align}
    \begin{split}
      |D^{\rho}_{\ket{\psi}}(U) -D^{\rho}_{\ket{\psi}}(\mathbbm{1})| &\leq |\bra{\psi}\rho\ket{\psi}-\bra{\psi}U\rho \, U^{\dagger}\ket{\psi}| \\
      &=  |(1-\cos(\theta)^2)\bra{\psi}\rho\ket{\psi} - \sin(\theta)^2 \bra*{\bar{\psi}}\rho\ket*{\bar{\psi}} - 2 \sin(\theta) \cos(\theta)\mathrm{Re}(\bra*{\bar{\psi}}\rho\ket{\psi}) | \\
      &\leq   \sin(\theta)^2(\bra{\psi}\rho\ket{\psi} + \bra*{\bar{\psi}}\rho\ket*{\bar{\psi}}) + |2 \sin(\theta) \cos(\theta)| |\mathrm{Re}(\bra*{\bar{\psi}}\rho\ket{\psi}) |
    \end{split}
  \end{align}

  We estimate $\bra{\psi}\rho\ket{\psi}$, $\bra{\bar{\psi}}\rho\ket{\bar{\psi}}$ and $|\mathrm{Re}(\bra*{\bar{\psi}}\rho\ket{\psi}) |$. By definition of the min-entropy, it holds that $\rho \leq 2^{-H_{\min}(\rho)}\mathbbm{1}$. Thus, we find that $\bra{\psi}\rho\ket{\psi} \leq 2^{-H_{\min}(\rho)}$ and  $\bra{\bar{\psi}}\rho\ket{\bar{\psi}} \leq 2^{-H_{\min}(\rho)}$. To bound $|\mathrm{Re}(\bra*{\bar{\psi}}\rho\ket{\psi})|$, we observe that 
  \begin{align}
    \begin{split}
      |\mathrm{Re}(\bra*{\bar{\psi}}\rho\ket{\psi})|^2  &\leq |\bra*{\bar{\psi}}\rho\ket{\psi}|^2 \\
      &= \bra*{\bar{\psi}}\rho\ketbra{\psi}\rho \ket{\bar{\psi}} \\
      &\leq \bra*{\bar{\psi}}\rho^2 \ket{\bar{\psi}} \\
      &\leq  2^{-2 H_{\min}(\rho)}.
    \end{split}
  \end{align}
  Thus, also $|\mathrm{Re}(\bra*{\bar{\psi}}\rho\ket{\psi})| \leq 2^{-H_{\min}(\rho)}$. Plugging this result into the previous calculation, we find that
  \begin{align}
    \begin{split}
    |D^{\rho}_{\ket{\psi}}(U) -D^{\rho}_{\ket{\psi}}(\mathbbm{1})| &\leq 2^{-H_{\min}(\rho)+1} \left(\sin(\theta)^2+ |\cos(\theta) \sin(\theta)|\right) \\
    &= 2^{-H_{\min}(\rho)+1} \sqrt{\sin(\theta)^4 + \cos(\theta)^2 \sin(\theta)^2 + 2\sin(\theta)^2|\cos(\theta) \sin(\theta)| } \\
    &\leq 2^{-H_{\min}(\rho)+1} \sqrt{\sin(\theta)^4 + \cos(\theta)^2 \sin(\theta)^2 + \sin(\theta)^2} \\
    &=2^{-H_{\min}(\rho)+\frac{3}{2}} \sqrt{\sin(\theta)^2} = 2^{-H_{\min}(\rho)+\frac{3}{2}} \sqrt{1-\cos(\theta)^2} \\
    \end{split}
  \end{align}

  Let us now relate this result to the $2$-norm. First, we observe that
  \begin{align}
    \begin{split}
      \|\ket{\psi}-U\ket{\psi}\|^2 &= 2- 2\, \mathrm{Re}(\bra{\psi} U \ket{\psi}) \\
      &= 2- 2\cos(\varphi)\cos(\theta) -(1-\cos(\theta)^2) +(1-\cos(\theta)^2) \\
      &\geq \cos(\varphi)^2- 2\cos(\varphi)\cos(\theta) + \cos(\theta)^2 +(1-\cos(\theta)^2) \\
      &= (\cos(\varphi)-\cos(\theta))^2 +(1-\cos(\theta)^2) \\
      &\geq 1-\cos(\theta)^2 \ .
    \end{split}
  \end{align}
  Furthermore, $\|\ket{\psi}-U\ket{\psi}\|^2$ can be bounded by $\|\mathbbm{1}-U\|_2^2$
  \begin{align}
    \begin{split}
      \|\ket{\psi}-U\ket{\psi}\|^2 &= 2- 2\, \mathrm{Re}(\bra{\psi} U \ket{\psi}) \\
      &= \tr(\ketbra{\psi} (2\mathbbm{1}-U - U^{\dagger}))\\
      &= \tr(\ketbra{\psi} (\mathbbm{1}-U)^{\dagger}(\mathbbm{1}-U))\\
      &\leq \tr((\mathbbm{1}-U)^{\dagger}(\mathbbm{1}-U)) = \|\mathbbm{1}-U\|^2_2
    \end{split}
  \end{align}
  where in the last inequality we used that $(\mathbbm{1}-U)^{\dagger}(\mathbbm{1}-U)$ is a positive operator.
  Thus, 
  \begin{equation}
    |D^{\rho}_{\ket{\psi}}(U) -D^{\rho}_{\ket{\psi}}(\mathbbm{1})| \leq 2^{-H_{\min}(\rho)+\frac{3}{2}} \|\ket{\psi}-U\ket{\psi}\| \leq 2^{-H_{\min}(\rho)+\frac{3}{2}} \|\mathbbm{1}-U\|^2_2.
  \end{equation}
\end{proof}

\begin{lemma}\label{lem:average}
  The average of $D^{\rho}_{\ket{\psi}}(U)$ over $U$ chosen according to the Haar measure satisfies
  \begin{equation}
    \langle D^{\rho}_{\ket{\psi}} \rangle = \int D^{\rho}_{\ket{\psi}}(U) \, dU\leq 2^{- \log(d) - \frac{1}{2} H_{\min}(\rho)}.
  \end{equation}
\end{lemma}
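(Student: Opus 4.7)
The plan is to bound the $L^{1}$-average by the $L^{2}$-average via Jensen's inequality, and then compute the resulting second moment using a standard Haar integral calculation.

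First I would exploit the unitary invariance of the Haar measure to identify the mean of the random variable $\bra{\psi} U \rho U^{\dagger}\ket{\psi}$. Since $\int U \rho U^{\dagger}\, dU = \tfrac{1}{d}\mathbbm{1}$, this mean equals $\tfrac{1}{d}$, which is precisely the centering appearing in the definition of $D^{\rho}_{\ket{\psi}}$. Thus $\langle D^{\rho}_{\ket{\psi}}\rangle$ is the mean absolute deviation of $\bra{\psi}U\rho U^{\dagger}\ket{\psi}$ from its mean. By Cauchy--Schwarz / Jensen,
\begin{equation}
    \langle D^{\rho}_{\ket{\psi}}\rangle \;\leq\; \sqrt{\int\bigl(\bra{\psi}U\rho U^{\dagger}\ket{\psi}-\tfrac{1}{d}\bigr)^{2}\,dU} \;=\; \sqrt{\int \bra{\psi}U\rho U^{\dagger}\ket{\psi}^{2}\,dU - \tfrac{1}{d^{2}}}.
\end{equation}

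Next I would compute the remaining second moment by rewriting it as a two-copy integral,
\begin{equation}
    \int \bra{\psi}U\rho U^{\dagger}\ket{\psi}^{2}\,dU \;=\; \bra{\psi}^{\otimes 2}\!\left(\int U^{\otimes 2}(\rho\otimes\rho)(U^{\dagger})^{\otimes 2}\,dU\right)\!\ket{\psi}^{\otimes 2},
\end{equation}
and applying the standard Schur--Weyl formula $\int U^{\otimes 2} A (U^\dagger)^{\otimes 2}\,dU = \alpha\mathbbm{1} + \beta F$ with $F$ the swap. Taking traces against $\mathbbm{1}$ and $F$ and using $\tr(\rho\otimes\rho)=1$, $\tr(F(\rho\otimes\rho))=\tr(\rho^{2})$ yields the coefficients and, using $\bra{\psi}^{\otimes 2}\mathbbm{1}\ket{\psi}^{\otimes 2}=\bra{\psi}^{\otimes 2}F\ket{\psi}^{\otimes 2}=1$, gives
\begin{equation}
    \int \bra{\psi}U\rho U^{\dagger}\ket{\psi}^{2}\,dU \;=\; \frac{1+\tr(\rho^{2})}{d(d+1)}.
\end{equation}

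Substituting back, the variance simplifies to $\tfrac{d\tr(\rho^{2})-1}{d^{2}(d+1)}$. Finally I would invoke the elementary bound $\tr(\rho^{2})\leq \|\rho\|_{\infty}\tr(\rho) = 2^{-H_{\min}(\rho)}$, so that the variance is at most $\tfrac{2^{-H_{\min}(\rho)}}{d(d+1)} \leq \tfrac{2^{-H_{\min}(\rho)}}{d^{2}}$. Taking the square root produces $\langle D^{\rho}_{\ket{\psi}}\rangle \leq d^{-1}\,2^{-H_{\min}(\rho)/2} = 2^{-\log d - \tfrac{1}{2}H_{\min}(\rho)}$, as claimed.

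The only nontrivial step is the Haar integral in the second moment; everything else is centering, Jensen, and a one-line min-entropy bound on $\tr(\rho^{2})$. The Weingarten/Schur--Weyl computation is standard but easy to mis-coefficient, so the main care is in solving the $2\times 2$ linear system for $\alpha,\beta$ and correctly evaluating the two matrix elements $\bra{\psi\psi}\mathbbm{1}\ket{\psi\psi}$ and $\bra{\psi\psi}F\ket{\psi\psi}$ (both equal to $1$ because $F$ fixes symmetric product vectors).
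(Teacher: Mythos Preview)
Your proof is correct. The paper's proof is much terser: it simply invokes \cite[Theorem~3.3]{Dupuis_2014}, which directly yields
\[
\int \bigl|\bra{\psi}U\rho U^{\dagger}\ket{\psi}-\tfrac{1}{d}\bigr|\,dU \;\leq\; 2^{-\frac{1}{2}(H_2(\rho)+H_2(\tau))}
\]
with $\tau=\tfrac{1}{d}\ketbra{\psi}$, and then uses $H_2(\rho)\geq H_{\min}(\rho)$ and $H_2(\tau)=2\log d$ to conclude. Your argument is in effect an explicit unpacking of that black box in the rank-one case: the cited decoupling bound is itself proved by Jensen (bounding the $L^1$ average by the $L^2$ average) followed by the two-copy Haar integral via Schur--Weyl, which is exactly what you do. So the two proofs are the same in spirit; yours is self-contained and does not need the external reference, while the paper's is a one-line citation. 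Your variance computation $\frac{d\tr(\rho^{2})-1}{d^{2}(d+1)}$ and the subsequent bound via $\tr(\rho^{2})\leq 2^{-H_{\min}(\rho)}$ are both correct.
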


\begin{proof}
  To calculate this average, we use \cite[Theorem 3.3]{Dupuis_2014}, from which it follows that
  \begin{equation}
    \int |\bra{\psi}U\rho U^{\dagger}\ket{\psi}-\frac{1}{d}| \, dU \leq 2^{-\frac{1}{2}(H_2(\rho)+H_2(\tau))}
  \end{equation}
  where $H_2(\rho) \coloneq - \log(\tr(\rho^2)) \leq H_{\min}(\rho)$ and $\tau = \frac{1}{d}\ketbra{\psi}$. Thus, we find 
  \begin{equation}
    \int |\bra{\psi}U\rho U^{\dagger}\ket{\psi}-\frac{1}{d}| \, dU \leq 2^{-\log(d)-\frac{1}{2}H_{\min}(\rho)} .
  \end{equation}
\end{proof}

\begin{lemma}\label{lem:airplane}
  Let $\mathcal{H}$ be an infinite-dimensional Hilbert space, $\{\ket*{n}\}_{n \in \mathbb{N}}$ an orthonormal basis of $\mathcal{H}$, and $(\rho^{(m)})_{m \in \mathbb{N}}, (\sigma^{(n)})_{n \in \mathbb{N}}$ two sequences of states defined by
  \begin{equation}
    \rho^{(m)} \coloneq \frac{1}{m} \sum_{k = 0}^{m-1} \ketbra{k}{k}, \, \sigma^{(n)} = \sum_{k = 0}^{n-1} \frac{2(n-k)}{n (n+1)} \ketbra{k},
  \end{equation}
  then there is an $N \in \mathbb{N}$ such for all $\forall m \in \mathbb{N}, n \geq N: \, \|\mathrm{spec}(\sigma^{(n)}) - \mathrm{spec}(\rho^{(m)})\|_1 \geq \frac{2}{11}$.
\end{lemma}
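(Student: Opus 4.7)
The plan is to view both spectra as probability distributions on $\mathbb{N}$, whence
\[
\|\mathrm{spec}(\sigma^{(n)}) - \mathrm{spec}(\rho^{(m)})\|_1 \;=\; 2\sup_{A \subseteq \mathbb{N}}|P(A) - Q(A)|,
\]
where $P$ assigns mass $p_k = \tfrac{2(n-k)}{n(n+1)}$ to $\{0,\dots,n-1\}$ and $Q$ assigns $q_k = \tfrac{1}{m}$ to $\{0,\dots,m-1\}$. Because $(p_k)$ is strictly decreasing while $(q_k)$ is a two-level step function, the excess-support set $A_{m,n} := \{k : p_k > q_k\}$ is a union of at most two intervals --- a prefix $\{0,\dots,k^\star-1\}$ with $k^\star = \lceil n - n(n+1)/(2m)\rceil$, together with the tail $\{m,\dots,n-1\}$ when $m<n$. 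I would lower-bound the supremum above by $|P(A_{m,n}) - Q(A_{m,n})|$, evaluated in closed form via the identity $\sum_{k<h} p_k = h(2n-h+1)/(n(n+1))$.

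The bulk of the work is a scaling-limit calculation. Writing $\alpha := m/n$ and rescaling $k = tn$, the measure $P$ approaches the triangular density $f(t) = 2(1-t)\mathbbm{1}_{[0,1]}(t)$ and $Q$ approaches the uniform density $g_\alpha(t) = (1/\alpha)\mathbbm{1}_{[0,\alpha]}(t)$. A routine piecewise integration yields
\[
L(\alpha) \;:=\; \int_0^\infty |f(t) - g_\alpha(t)|\,dt \;=\; \begin{cases} 2(1-\alpha)^2 & \alpha \leq \tfrac12,\\[2pt] 2\bigl(1 - \tfrac{1}{2\alpha}\bigr)^2 + 2(1-\alpha)^2 & \tfrac12 \leq \alpha \leq 1,\\[2pt] 2\bigl(1 - \tfrac{1}{2\alpha}\bigr)^2 & \alpha \geq 1. \end{cases}
\]
On $[\tfrac12,1]$ the stationarity condition reduces to $4\alpha^4 - 4\alpha^3 + 2\alpha - 1 = 0$, which factors as $(2\alpha^2-1)(2\alpha^2 - 2\alpha + 1)$ and has unique real root $\alpha^\star = 1/\sqrt{2}$ in $(0,1)$; at this point $L(\alpha^\star) = 4(1 - 1/\sqrt 2)^2 = 2(\sqrt 2 - 1)^2 \approx 0.343$. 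Since $L(\alpha)\to 2$ as $\alpha\to 0^+$ or $\alpha\to\infty$, this is the global infimum, so in particular $\inf_{\alpha > 0} L(\alpha) > \tfrac{2}{11}$.

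It then remains to transfer the continuum estimate back to the discrete setting uniformly in $m$. Evaluating $2|P(A_{m,n}) - Q(A_{m,n})|$ via the closed-form partial sums shows that this quantity equals $L(m/n) + O(1/n)$, the correction stemming only from replacing the real threshold $n - n(n+1)/(2m)$ by its integer value $k^\star$ and from $1/n$ boundary terms in the partial-sum formula. The main obstacle is to make this $O(1/n)$ uniform in $m$; for $\alpha = m/n$ in any bounded interval the bound is immediate, and in the extreme regimes a direct estimate suffices: when $m \leq (n+1)/2$ all $p_k \leq 1/m$ on $\{0,\dots,m-1\}$, giving the exact identity $\|\mathrm{spec}(\sigma^{(n)}) - \mathrm{spec}(\rho^{(m)})\|_1 = 2(n-m)(n-m+1)/(n(n+1)) \geq (n-1)/(2n)$, while for $m \geq 2n$ the disjoint tail $\{n,\dots,m-1\}$ alone contributes $2(m-n)/m \geq 1$. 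Choosing $N$ so that the $O(1/n)$ slack on the intermediate range $\alpha \in [\tfrac12, 2]$ is smaller than $2(\sqrt 2 - 1)^2 - \tfrac{2}{11}$ then yields the stated lower bound for all $n \geq N$ and all $m \in \mathbb{N}$.
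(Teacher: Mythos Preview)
Your plan is correct and lands on the same calculation as the paper: both identify the critical ratio $m/n=1/\sqrt{2}$ and the minimum value $2(\sqrt{2}-1)^2\approx 0.343>2/11$, and both dispose of the extreme regimes ($m\le (n+1)/2$ and $m$ much larger than $n$) by direct elementary bounds.

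The only real difference is organizational. The paper stays discrete throughout: it splits into the cases $m\ge n-1$, $(n+1)/2\le m<n-1$, and $m<(n+1)/2$, computes $\tfrac12\|\mathrm{spec}(\sigma^{(n)})-\mathrm{spec}(\rho^{(m)})\|_1$ as an explicit sum in each, and extracts the leading term $(1-m/n)^2+(1-n/(2m))^2+O(1/n)$ before minimizing over $x=n/m$. You instead pass to the scaling limit first, compute the total-variation distance $L(\alpha)$ between the triangular and uniform densities, locate the minimum, and then argue that the discrete quantity differs from $L(m/n)$ by a uniform $O(1/n)$ on the bounded range $\alpha\in[\tfrac12,2]$. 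Your factorization $4\alpha^4-4\alpha^3+2\alpha-1=(2\alpha^2-1)(2\alpha^2-2\alpha+1)$ is a tidy way to pin down $\alpha^\star=1/\sqrt2$; the paper simply writes the stationarity equation and observes that $x=\sqrt2$ solves it. Either route works; yours is marginally cleaner conceptually, while the paper's avoids having to justify the uniformity of the $O(1/n)$ remainder (which you correctly note is the only nontrivial step in the transfer, and which is indeed routine once $\alpha$ is confined to a compact interval).
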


\begin{proof}
  We distinguish different cases.

  \noindent
  \textbf{Case $m \geq n-1$:} In this case, we can write
  \begin{align}
      \|\mathrm{spec}(\sigma^{(n)}) - \mathrm{spec}(\rho^{(m)})\|_1  &= \left(\sum_{k = 0}^{n-1} \left|\frac{2 (n-k)}{n(n+1)} - \frac{1}{m}\right| + \left|0-(m-n)\frac{1}{m}\right| \right).
  \end{align}
  The sum can be divided into two parts of equal magnitude: the part where the terms in the absolute value are positive and one where they are negative. The former is the case when $k \in \{0, \dots, k_{\max}\}$ with $k_{\max} \coloneq n - \frac{n (n+1)}{2 m}$. Therefore, we find by a straightforward but tedious calculation\ifthenelse{\boolean{showcalc}}{}{\footnote{To see the tedious calculations in this proof, download the source code and enable the option ``showcalc''.}}
  \ifthenelse{\boolean{showcalc}}{
      \begin{align}
        \begin{split}
          \frac{1}{2}\|\mathrm{spec}(\sigma^{(n)}) - \mathrm{spec}(\rho^{(m)})\|_1 &= \sum_{k = 0}^{k_{\max}} \left|\frac{2 (n-k)}{n(n+1)} - \frac{1}{m}\right| \\
          &= (k_{\max} +1)\left(\frac{2}{(n+1)} - \frac{1}{m}\right) -\frac{2}{n(n+1)} \frac{k_{\max}(k_{\max} +1)}{2}\\
          &= \left(1 + n - \frac{n(n+1)}{2m}\right)\left(\frac{2}{n+1} - \frac{1}{m}\right) \\ 
          &\qquad - \frac{1}{n(n+1)} \left(n - \frac{n(n+1)}{2m}\right)\left(n - \frac{n(n+1)}{2m} +1\right) \\
          &= (n+1)\left(1 - \frac{n}{2m}\right)\left(\frac{2}{n+1} - \frac{1}{m}\right) - \left(1 - \frac{n}{2m}\right)\left(1 - \frac{n+1}{2m}\right) \\
          &= \left(1 - \frac{n}{2m}\right)\left(1 - \frac{n+1}{2m}\right) \\
          &\geq \frac{1}{4} + O\left(\frac{1}{n}\right).
        \end{split}
    \end{align}  
  }{
      \begin{align}
        \begin{split}
          \frac{1}{2}\|\mathrm{spec}(\sigma^{(n)}) - \mathrm{spec}(\rho^{(m)})\|_1  &= \sum_{k = 0}^{k_{\max}} \left|\frac{2 (n-k)}{n(n+1)} - \frac{1}{m}\right| + \left|0-(m-n)\frac{1}{m}\right| \\
          &\geq \frac{1}{4} + O\left(\frac{1}{n}\right).
        \end{split}
      \end{align} 
  }

  \noindent
  \textbf{Case $m < n-1$:} 
  In this case, we can write 
  \begin{align}
   \|\mathrm{spec}(\sigma^{(n)}) - \mathrm{spec}(\rho^{(m)})\|_1  &= \frac{1}{2} \sum_{k = 0}^{m} \left|\frac{2 (n-k)}{n(n+1)} - \frac{1}{m}\right| + \frac{1}{2} \sum_{k = m+1}^{n-1} \left|\frac{2 (n-k)}{n(n+1)} - 0\right|. 
  \end{align}
  We consider two subcases. 

  \textbf{Case $k_{\max} \geq 0$:}
  In this case $n - \frac{n(n+1)}{2m} \geq 0 \iff m \geq \frac{n+1}{2}$. As before, we divide the sum into two parts of equal magnitude: the part where the terms in the absolute value are positive and where they are negative. The latter is the case if $m \geq k \geq k_{\max}$.  Thus, we find by a straightforward but tedious calculation\ifthenelse{\boolean{showcalc}}{, where we indicate by \textcolor{Blue}{blue} a term that will be absorbed into $O\left(\frac{1}{n}\right)$ and by \textcolor{BrickRed}{red} a term that will contribute to the next inequality:}{:}
  \ifthenelse{\boolean{showcalc}}{
    \begin{align}
      \begin{split}\label{eq:long_calc}
        \frac{1}{2}\|\mathrm{spec}(\sigma^{(n)}) - \mathrm{spec}(\rho^{(m)})\|_1  &= \sum_{k = k_{\max}}^{m} \left|\frac{2 (n-k)}{n(n+1)} - \frac{1}{m}\right| \\
        &= (m-k_{\max} +1)\left(\frac{1}{m} - \frac{2}{n+1}\right) + \frac{1}{n(n+1)}\textcolor{BrickRed}{\left(m(m+1) - k_{\max}(k_{\max} +1)\right)} 
        \\
        &\,\textcolor{BrickRed}{\geq} \, (m-k_{\max} +1)\left(\frac{1}{m} - \frac{2}{n+1}\right) + \frac{1}{n(n+1)}\left(m^2 - (k_{\max} +1)^2\right) 
        \\
        &= \left(m- n + \frac{n (n+1)}{2 m} +1\right)\left(\frac{1}{m} - \frac{2}{n+1}\right)  \\
        &\qquad + \frac{1}{n(n+1)}\left(m^2 - \left( n+1 - \frac{n (n+1)}{2 m}\right)^2\right) 
        \\
        &= 1-\frac{2m}{n+1} -\frac{n}{m} +\frac{2n}{n\textcolor{Blue}{+1}} + \frac{n(n+1)}{2m^2} -\frac{n}{m} 
        + \frac{1}{m} \textcolor{Blue}{ - \frac{2}{n+1}} \\
        &\qquad +  \frac{m^2}{n(n\textcolor{Blue}{+1})} - \frac{n+1}{n}\left(1- \frac{n}{2m}\right)^2 \\
        &= 1-\textcolor{BrickRed}{\frac{2m}{n+1}} -\frac{n}{m} + 2 + \frac{n(n+1)}{2m^2} -\frac{n}{m} 
        + \frac{1}{m}  \\
        &\qquad + \frac{m^2}{n^2}  - \frac{n+1}{n}\left(1- \frac{n}{2m}\right)^2 + \textcolor{Blue}{ O\left(\frac{1}{n}\right)}
        \\
        &\,\textcolor{BrickRed}{\geq}\, 1-\frac{2m}{n} -\frac{2n}{m} +2 + \frac{n(n+1)}{2m^2}
        + \frac{1}{m} + \frac{m^2}{n^2} - \frac{n+1}{n} \left(1- \frac{n}{2m}\right)^2  + \textcolor{Blue}{ O\left(\frac{1}{n}\right)}
        \\
        &= 1-\frac{2m}{n} + \frac{m^2}{n^2} +2 -\frac{2n}{m}  + \frac{n(n+1)}{2m^2}
        + \frac{1}{m} - \frac{n+1}{n}\left(1- \frac{n}{2m}\right)^2  + \textcolor{Blue}{ O\left(\frac{1}{n}\right)}
        \\
        &= \left(1-\frac{m}{n}\right)^2 + 2  -\frac{2n}{m} + \frac{n^2}{2m^2} + \frac{n}{2m^2}
        + \frac{1}{m} - \frac{n+1}{n} \left(1- \frac{n}{2m}\right)^2  + \textcolor{Blue}{ O\left(\frac{1}{n}\right)}
        \\
        &= \left(1-\frac{m}{n}\right)^2 + 2\left(1- \frac{n}{2m}\right)^2 + \frac{n}{2m^2}
        + \frac{1}{m} - \frac{n+1}{n} \left(1- \frac{n}{2m}\right)^2  + \textcolor{Blue}{ O\left(\frac{1}{n}\right)}
        \\
        &= \left(1-\frac{m}{n}\right)^2 +\left(1- \frac{n}{2m}\right)^2 +  \textcolor{Blue}{\frac{n}{2m^2}
        + \frac{1}{m} - \frac{1}{n} \left(1- \frac{n}{2m}\right)^2}  + \textcolor{Blue}{ O\left(\frac{1}{n}\right)}
        \\
        &= \left(1-\frac{m}{n}\right)^2 + \left(1- \frac{n}{2m}\right)^2  + \textcolor{Blue}{ O\left(\frac{1}{n}\right)}.
      \end{split}
    \end{align}
    
  }{
    \begin{align}
      \begin{split}\label{eq:long_calc}
        \frac{1}{2}\|\mathrm{spec}(\sigma^{(n)}) - \mathrm{spec}(\rho^{(m)})\|_1  &= \sum_{k = k_{\max}}^{m} \left|\frac{2 (n-k)}{n(n+1)} - \frac{1}{m}\right| \\
        &\geq \left(1-\frac{m}{n}\right)^2 + \left(1- \frac{n}{2m}\right)^2  + O\left(\frac{1}{n}\right).
      \end{split}
    \end{align}
  }

  Consider the function $f(x) \coloneq  (1-\frac{1}{x})^2 + (1- \frac{x}{2})^2$. To find the minimum of this function, we set its derivative to zero
  \begin{equation}
    2\frac{1}{x^2}\Big(1-\frac{1}{x}\Big) - \Big(1- \frac{x}{2}\Big) = 0.
  \end{equation}
  This equation is solved by $x = \sqrt{2}$. Therefore, the minimum of this function is $f(\sqrt{2}) = 2(1-\frac{1}{\sqrt{2}})^2 \approx 0.17 > \frac{1}{11}$, which puts a lower bound on \cref{eq:long_calc}.

  \textbf{Case $k_{\max} < 0$:}
  In this case $n - \frac{n(n+1)}{2m} < 0 \iff m < \frac{n+1}{2}$ and we find by another tedious calculation
  \ifthenelse{\boolean{showcalc}}{
    \begin{align} \nonumber
         \frac{1}{2} \|\mathrm{spec}(\sigma^{(n)}) - \mathrm{spec}(\rho^{(m)})\|_1  &= \sum_{k = 0}^{m} \left|\frac{2 (n-k)}{n(n+1)} - \frac{1}{m}\right| \\ \nonumber
        &= (m +1)\left(\frac{1}{m} - \frac{2}{n+1}\right) + \frac{m(m+1)}{n(n+1)}
        \\ \nonumber
        &= \frac{m+1}{m} - \frac{2(m+1)}{n+1} + \frac{m(m+1)}{n(n+1)}
        \\\nonumber
        &= \frac{(m+1)}{m} + \frac{m(m+1)-2(m+1)n}{n(n+1)} \\
        &= \frac{(m+1)}{m} + \frac{m^2+m-2mn-2n}{n(n+1)} \\ \nonumber
        &\geq \frac{(m+1)}{m} - \frac{m(2n-m)}{n^2} + O\left(\frac{1}{n}\right)\\\nonumber
        &= \frac{(m+1)n^2 - m^2(2n-m)}{m n^2} + O\left(\frac{1}{n}\right)\\\nonumber
        &= \frac{mn^2+n^2 - 2n m^2 + m^3}{m n^2} + O\left(\frac{1}{n}\right)\\\nonumber
        &= \frac{m(n-m)^2 +n^2}{m n^2} + O\left(\frac{1}{n}\right)\\\nonumber
        &= \frac{(n-m)^2}{n^2} + \frac{1}{m} + O\left(\frac{1}{n}\right)\\\nonumber
        &= \left(1-\frac{m}{n}\right)^2 + \frac{1}{m} + O\left(\frac{1}{n}\right)\\\nonumber
        &\geq \frac{1}{4} + O\left(\frac{1}{n}\right)
    \end{align}
  }{
    \begin{align}
      \begin{split}
        \frac{1}{2}\|\mathrm{spec}(\sigma^{(n)}) - \mathrm{spec}(\rho^{(m)})\|_1  &= \sum_{k = 0}^{m} \left|\frac{2 (n-k)}{n(n+1)} - \frac{1}{m}\right| \\ 
        &\geq \frac{1}{4} + O\left(\frac{1}{n}\right)
      \end{split}
    \end{align}
  }   
\end{proof}

\minentropy*

\begin{proof}
  Let $\{\Pi_{2^n}\}_{n \in \mathbb{N}}$ be the family of projectors, $\{\mathcal{O}_n\}_{n \in \mathbb{N}}$ the sequence of sets of effects, and $(\eta_n)_{n \in \mathbb{N}}$ the zero-sequence that yield the asymptotic entropy at most~$(\varepsilon_n 2^n)_{n \in \mathbb{N}}$. The family of projectors $\{\Pi_{2^n}\}_{n \in \mathbb{N}}$ commutes. Therefore, there exists a basis $\{\ket*{n}\}_{n \in \mathbb{N}}$ such that the span of the first $2^n$ basis elements is contained in the support of $\Pi_{2^n}$. Using this basis we define the projectors $P_{n} = \sum_{i = 0}^{n-1} \ketbra{i}$, the sequence
  \begin{equation}
    \sigma^{(n)} \coloneq \sum_{k = 0}^{2^n-1} \frac{2(2^n-k)}{2^n (2^n+1)} \ketbra{k},
  \end{equation}
  and the set
  \begin{equation}
    \Sigma \coloneq \bigcup_{n = 0}^{\infty} \mathcal{B}^{\frac{1}{13}}\left(\frac{P_n}{n} \right),
  \end{equation}
  where $\mathcal{B}^{\varepsilon}(\rho) \coloneq \{\sigma \in \mathcal{D}(\mathcal{H})| \delta(\sigma,\rho) < \varepsilon \}$. Furthermore, we denote by $\mathcal{H}_n$ the Hilbert space spanned by $\{\ket{0}, \dots, \ket{n}\}$.
  
  Next, we show, that there is an $N \in \mathbb{N}$ such that for all $n > N$ and unitaries $U$ on $\mathcal{H}_{2^n}$ it holds that $U \sigma^{(n)} U^{\dagger} \notin \Sigma$. To show this, we use \cite[Lemma IV.3.1]{Bhatia1997} 
  \begin{equation}
    \forall \rho,\sigma: \, \|\rho - \sigma\|_1 \geq \|\mathrm{spec}(\sigma) - \mathrm{spec}(\rho)\|_1
  \end{equation}
  where $\mathrm{spec}(\rho)$ is the probability distribution defined by the ordered list of eigenvalues of~$\rho$. As the spectrum of an operator is invariant under unitary operations, it follows from \cref{lem:airplane} that there is an $N \in \mathbb{N}$ such that $\forall U \in U(\mathcal{H}_{2^n}),  n > N: \, \delta(U \sigma^{(n)} U^{\dagger}, \Sigma) \geq \frac{1}{11}$. Therefore, $\forall U \in U(\mathcal{H}_{2^n}): \, \lim_{n \to \infty} \delta(U \sigma^{(n)} U^{\dagger},\Sigma) \neq 0$.

  Our goal is now to show that there exists a sequence of unitaries $\{U^{(n)}\}_{n \in \mathbb{N}}$ such that~$U^{(n)}$ has support on $\mathcal{H}_{2^n}$ and $\lim_{n \to \infty} d_{\mathcal{M}}(U^{(n)}\sigma^{(n)} (U^{(n)})^{\dagger},  2^{-n} P_{2^n})  = 0$. If such a sequence of unitaries exists, this implies  that $\lim_{n \to \infty} d_{\mathcal{M}}(U^{(n)}\sigma^{(n)} (U^{(n)})^{\dagger}, \Sigma)  = 0$ and, by the above result about $\delta$, the representation $\mathbf{P}_{\mathcal{M}}$ is not robust. 
  
  Because the support of $P_{2^n}$ and $\sigma^{(n)}$ is contained in the support of $\Pi_{2^n}$, for any measurement $M \in \mathcal{M}$ and unitary $U \in U(\mathcal{H}_{2^n})$ it holds that
  \begin{equation}
    \|P_{M}(U \sigma^{(n)} U^{\dagger})- P_{M}(2^{-n} P_{2^n})\|_1 = \|P_{M|_{\Pi_{2^n}}}(U \sigma^{(n)} U^{\dagger})- P_{M|_{\Pi_{2^{n}}}}(2^{-n} P_{2^n})\|_1.
  \end{equation}
  Therefore, it suffices to consider measurements in $\mathcal{M}|_{\Pi_{2^n}}$, which, by assumption of the theorem, are such that $\mathbf{P}_{\mathcal{M}|_{\Pi_{2^n}}}$ can be $\eta_n$-decododed from~$\mathcal{O}_{n}$. Therefore, we can apply \cref{lem:recovery} to the states~$U \sigma^{(n)} U^{\dagger}$ and $2^{-n} P_{2^n}$. Let $\mathcal{S}_M$ be the semi-algebra and let~$N_F^{M}$ be the operator from \cref{lem:recovery} associated to $M \in \mathcal{M}$ and $F \in \mathcal{S}_{M}$, then 
  \begin{align}\nonumber
        d_{\mathcal{M}}(U \sigma^{(n)} U^{\dagger},&2^{-n} P_{2^n}) = \sup_{M \in \mathcal{M}} \sup_{F \in \mathcal{S}_{M}} |\tr(M(F) (2^{-n} P_{2^n}-U \sigma^{(n)} U^{\dagger}))|\\ \nonumber
        &\leq  2\eta_n + \sup_{M \in \mathcal{M}}  \sup_{F \in \mathcal{S}_{M}} |\tr(N_F^{M} (2^{-n} P_{2^n}-U \sigma^{(n)} U^{\dagger}))|\\  \nonumber
        &=  2\eta_n + \sup_{M \in \mathcal{M}} \sup_{F \in \mathcal{S}_{M}} |\tr(P_{2^n}N_F^{M} P_{2^n} (2^{-n} P_{2^n}-U \sigma^{(n)} U^{\dagger}))| \\ \label{eq:estimate} 
        &\leq 2\eta_n + \sup_{M \in \mathcal{M}}  \sup_{F \in \mathcal{S}_{M}} \sum_{E\in \mathcal{O}} p_E^{M}(F) |\tr(P_{2^n} E P_{2^n} (2^{-n} P_{2^n}-U \sigma^{(n)} U^{\dagger}))|, 
  \end{align} 
  where the sum over $E$ goes over finitely many $p_E^{M}(F) > 0$. For every $\bar{E} = \sum_{j} p_j \ketbra{\psi_j}{\psi_j} \in P_{2^n}\mathcal{O}_n P_{2^n}$ we define a function
  \begin{equation}
    D_{\bar{E}}: U \to \sum_j p_j \, D^{\sigma^{(n)}}_{\ket*{\psi_j}}(U)
  \end{equation}
  where $D^{\sigma^{(n)}}_{\ket*{\psi_j}}(U)$ is given by \cref{eq:def_D_psi}.
  Diagonalizing $P_{2^n} E P_{2^n}$ and applying the triangle inequality for the absolute value yields
  \begin{equation}\label{eq:D_estimate}
      d_{\mathcal{M}}(U \sigma^{(n)} U^{\dagger},2^{-n} P_{2^n}) \leq 2\eta_n + \sup_{M \in \mathcal{M}}  \sup_{F \in \mathcal{S}_{M}} \sum_{E\in \mathcal{O}} p_E^{M}(F) D_{P_{2^n} E P_{2^n}}(U).
  \end{equation}

  We now turn to finding upper bounds on~$D_{\bar{E}}(U)$. To do so, the following two observations are useful. The function $D_{\bar{E}}$ is $\smash{2^{-n+\frac{5}{2}}\tr(\bar{E})}$-Lipschitz continuous with respect to the $2$-norm on~$\smash{U(\mathcal{H}_{2^n})}$, as
  \begin{align}
    \begin{split}
      |D_{\bar{E}}(U)-D_{\bar{E}}(U')| &\leq   \sum_j p_j |D^{\sigma^{(n)}}_{\ket*{\psi_j}}(U)- D^{\sigma{(n)}}_{\ket*{\psi_j}}(U')| \\
      &\leq  \sum_j p_j \frac{4 \sqrt{2}}{2^n +1} \|U-U'\|_2\\
      &\leq 2^{-n+\frac{5}{2}}\tr(\bar{E})\|U-U'\|_2
    \end{split}
  \end{align}
  where for the second inequality we used \cref{lem:lip_const_min} and that $\smash{H_{\min}(\sigma^{(n)}) = -\log(\frac{2}{2^n +1})}$. Furthermore, from \cref{lem:average}, it follows that for a unitary $U$ chosen according to the Haar measure $\langle D_{\bar{E}}(U)\rangle \leq 2^{-\frac{3}{2}n+\frac{1}{2}}\tr(\bar{E})$. 
  
  These two observations about $D_{\bar{E}}(U)$ allow us to apply \cite[Theorem 5.16 and Theorem 5.9]{Meckes_2019}, which states that, for any $d$-dimensional Hilbert space $\mathcal{H}$, for any function $f : U(\mathcal{H}) \to \mathbb{R}$ that is $\kappa$-Lipschitz with respect to the $2$-norm, and for a unitary $U$ chosen according to the Haar measure we have
  \begin{equation}
    \mathrm{Pr}\left(f(U) \geq \langle f \rangle + \varepsilon\right) \leq e^{-\frac{\varepsilon^2d}{24 \kappa^2}}.
  \end{equation}
  Applying this theorem yields
  \begin{equation}
    \mathrm{Pr}\left(D_{\bar{E}}(U) \geq \Delta(\bar{E}) \right) \leq e^{-\frac{\delta_n^2 2^n}{24 \left(4 \sqrt{2}\right)^2}}
  \end{equation}
  where $\Delta(\bar{E}) \coloneq 2^{-n}\tr(\bar{E}) \left(2^{-\frac{n}{2} +\frac{1}{2}} + \delta_n\right)$ and $\delta_n^2 \coloneq \frac{24 \left(4 \sqrt{2}\right)^2}{2^n}(1+\ln(|\mathcal{O}_n|))$. Combined with the union bound this yields
  \begin{align}
    \mathrm{Pr}\big(\exists \bar{E}  \in P_{2^n}\mathcal{O}_n P_{2^n}: D_{\bar{E}}(U) \geq  \Delta(\bar{E})\big) \leq |P_{2^n}\mathcal{O}_n P_{2^n} | e^{-\frac{\delta_n^2 2^n}{24 \left(4 \sqrt{2}\right)^2}} \leq |\mathcal{O}_n | e^{-\frac{\delta_n^2 2^n}{24 \left(4 \sqrt{2}\right)^2}}.
  \end{align}
  As, $|\mathcal{O}_n| \, \mathrm{exp}(-\textstyle{\frac{\delta_n^2 2^n}{24 \left(4 \sqrt{2}\right)^2}}) = e^{-1} < 1$, we have that 
  \begin{equation}
    \mathrm{Pr}(\forall \bar{E}  \in P_{2^n}\mathcal{O}_n P_{2^n}: D_{\bar{E}}(U) \leq \Delta(\bar{E})) > 0
  \end{equation}
  and, thus, there exists a unitary such that $\forall \bar{E}  \in P_{2^n}\mathcal{O}_n P_{2^n}: D_{\bar{E}}(U) \leq \Delta(\bar{E})$. We now define $U^{(n)}$ to be a unitary with this property. From this definition and \cref{eq:D_estimate}, we find
  \begin{align}
    \begin{split}
      d_{\mathcal{M}}(U^{(n)} \sigma^{(n)} (U^{(n)})^{\dagger},2^{-n} P_{2^n})  &\leq  2\eta_n + \sup_{M \in \mathcal{M}}  \sup_{F \in \mathcal{S}_{M}} \sum_{E\in \mathcal{O}} p_E^{M}(F) \Delta(P_{2^n}E P_{2^n}) \\
      &\leq  2\eta_n + \left(\sqrt{2}\times 2^{-\frac{n}{2}} + \delta_n\right) \sup_{M \in \mathcal{M}}  \sup_{F \in \mathcal{S}_{M}} \tr( 2^{-n} P_{2^n} N_F^{M}) \\
      &\leq  3\eta_n + \left(\sqrt{2}\times 2^{-\frac{n}{2}} + \delta_n\right) 
    \end{split}
  \end{align}
  By assumption of the theorem, we know that $\lim_{n \to \infty} \frac{\ln(|\mathcal{O}_n|)}{2^n} = \ln(2)\lim_{n \to \infty} \varepsilon_n = 0$, which implies that $\lim_{n \to \infty} \delta_n = 0$.
  Therefore, $\lim_{n \to \infty} d_{\mathcal{M}}(U^{(n)}\sigma^{(n)} (U^{(n)})^{\dagger}, 2^{-n}P_{2^n}) = 0$, which was what remained to prove the theorem.
\end{proof}

\subsection{Proof of \texorpdfstring{\cref{rem:upper_bound}}{}}
We recall the definition of $\varepsilon$-nets and one of their properties.
\begin{definition}[$\varepsilon$-net of states]
  Let $\mathcal{H}$ be a Hilbert space. Then we call a set of states $\mathcal{N} \subset \mathcal{H}$ an \emph{$\varepsilon$-net} if 
  \begin{equation}\label{eq:net}
  \forall \ket{\varphi} \in \mathcal{H} \ \exists \ket*{\bar{\varphi}} \in N\!:  \|\ket{\varphi} - \ket*{\bar{\varphi}}\| \leq \varepsilon.
  \end{equation}
\end{definition}

\begin{lemma}[Size of $\varepsilon$-Net]\label{lem:epsilon_net}
  Let $\mathcal{H}$ be a $d$-dimensional Hilbert space, then there exists a $\varepsilon$-net $\mathcal{N}$ on $\mathcal{H}$ such that 
  \begin{equation}
    |\mathcal{N}| \leq  \left(1+\frac{2}{\varepsilon}\right)^{2d}.
  \end{equation}
\end{lemma}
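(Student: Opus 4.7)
The plan is to use the standard volume-packing argument on the unit sphere in $\mathcal{H} \cong \mathbb{C}^d$, viewed as a subset of the real inner-product space $\mathbb{R}^{2d}$ of real dimension $2d$.

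First, I would construct $\mathcal{N}$ by taking a maximal $\varepsilon$-separated set of unit vectors, i.e., a maximal collection $\mathcal{N}$ of states such that $\|\ket{\varphi}-\ket{\varphi'}\| > \varepsilon$ for all distinct $\ket{\varphi},\ket{\varphi'} \in \mathcal{N}$. Such a maximal set exists by Zorn's lemma (or simply by a greedy construction, since we will bound its cardinality in the next step). By maximality, any unit vector $\ket{\varphi}$ must lie within distance $\varepsilon$ of some element of $\mathcal{N}$; otherwise $\mathcal{N} \cup \{\ket{\varphi}\}$ would still be $\varepsilon$-separated, contradicting maximality. Hence $\mathcal{N}$ satisfies \eqref{eq:net}.

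Next, I would bound $|\mathcal{N}|$ by a volume packing argument in $\mathbb{R}^{2d}$. Because the points of $\mathcal{N}$ are pairwise at distance strictly greater than $\varepsilon$, the open balls $B(\ket{\varphi},\varepsilon/2)$ of radius $\varepsilon/2$ centred at the elements $\ket{\varphi} \in \mathcal{N}$ are pairwise disjoint. Moreover, since each $\ket{\varphi}$ lies on the unit sphere, each of these balls is contained in the ball $B(0,1+\varepsilon/2)$ around the origin. Comparing $2d$-dimensional Lebesgue volumes and using the scaling $\mathrm{vol}(B(0,r)) = r^{2d}\,\mathrm{vol}(B(0,1))$, one obtains
\begin{equation}
    |\mathcal{N}| \cdot \left(\tfrac{\varepsilon}{2}\right)^{2d} \leq \left(1+\tfrac{\varepsilon}{2}\right)^{2d},
\end{equation}
which, after dividing by $(\varepsilon/2)^{2d}$, gives the desired bound $|\mathcal{N}| \leq (1+2/\varepsilon)^{2d}$.

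There is no real obstacle here, as this is textbook metric-entropy; the only mild subtlety is the implicit identification of the complex Hilbert space $\mathcal{H}$ with the real Euclidean space $\mathbb{R}^{2d}$, which is what produces the exponent $2d$ rather than $d$.
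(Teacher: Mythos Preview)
Your argument is correct and is precisely the standard volume-packing proof; the paper itself does not give an independent argument but simply cites \cite[Theorem~1.8]{Watrous2018}, whose proof is exactly this maximal $\varepsilon$-separated set plus volume comparison in $\mathbb{R}^{2d}$. So you have essentially reproduced the referenced proof rather than taken a different route.
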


\begin{proof}
  \cite[Theorem 1.8]{Watrous2018}

\end{proof}

\begin{lemma}\label{lem:norm_conversion}
  Let $\mathcal{H}$ be a Hilbert space, then for any two pure states $\ket{\phi}, \ket{\psi} \in \mathcal{H}$ it holds that 
  \begin{equation}
     \|\ketbra{\phi} - \ketbra{\psi}\|_1 \leq 2\|\ket{\phi} - \ket{\psi}\|.
  \end{equation}
\end{lemma}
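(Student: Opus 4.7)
The plan is to use the operator identity
\begin{equation}
    \ketbra{\phi} - \ketbra{\psi} = \ket{\phi}\bigl(\bra{\phi}-\bra{\psi}\bigr) + \bigl(\ket{\phi}-\ket{\psi}\bigr)\bra{\psi},
\end{equation}
which splits the difference of rank-one projectors into two rank-one operators, each of which carries the vector difference $\ket{\phi}-\ket{\psi}$ on one side.

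Applying the triangle inequality for the trace norm then yields
\begin{equation}
    \|\ketbra{\phi}-\ketbra{\psi}\|_1 \leq \bigl\|\ket{\phi}(\bra{\phi}-\bra{\psi})\bigr\|_1 + \bigl\|(\ket{\phi}-\ket{\psi})\bra{\psi}\bigr\|_1.
\end{equation}
The key observation, which I would invoke next, is that the trace norm of any rank-one operator $\ket{a}\bra{b}$ equals $\|\ket{a}\|\cdot\|\ket{b}\|$ (its only non-zero singular value is $\|\ket{a}\|\cdot\|\ket{b}\|$). Using this together with $\|\ket{\phi}\|=\|\ket{\psi}\|=1$ gives that each of the two terms above is exactly $\|\ket{\phi}-\ket{\psi}\|$, and the bound follows.

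There is no real obstacle here: the only step that requires mild care is the identification of the trace norm of a rank-one operator, but this is an immediate consequence of the singular-value decomposition. An alternative would be to compute both sides explicitly using $\|\ketbra{\phi}-\ketbra{\psi}\|_1 = 2\sqrt{1-|\braket{\phi}{\psi}|^2}$ and $\|\ket{\phi}-\ket{\psi}\|^2 = 2-2\operatorname{Re}\braket{\phi}{\psi}$, and then reduce the inequality to $1+|\braket{\phi}{\psi}|^2 \geq 2\operatorname{Re}\braket{\phi}{\psi}$, which follows from $2|\braket{\phi}{\psi}| \leq 1+|\braket{\phi}{\psi}|^2$ (AM-GM) combined with $\operatorname{Re}\braket{\phi}{\psi}\leq|\braket{\phi}{\psi}|$. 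I would prefer the operator-theoretic argument, however, because it does not rely on the spectral form of the trace distance and extends verbatim to the infinite-dimensional setting used throughout the paper.
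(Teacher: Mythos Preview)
Your argument is correct and, in fact, more direct than the paper's. The operator identity you write down is valid, the triangle inequality applies, and the trace norm of a rank-one operator $\ket{a}\bra{b}$ is indeed $\|\ket{a}\|\,\|\ket{b}\|$, so each term contributes exactly $\|\ket{\phi}-\ket{\psi}\|$.

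The paper takes a different route: it first bounds the Hilbert--Schmidt norm via
\[
\|\ket{\psi}-\ket{\phi}\|^2 \;\geq\; 1-|\braket{\psi}{\phi}|^2 \;=\; \tfrac{1}{2}\|\ketbra{\psi}-\ketbra{\phi}\|_2^2,
\]
obtained by expanding $\|\ket{\psi}-\ket{\phi}\|^2=2-2\operatorname{Re}\braket{\psi}{\phi}$ and completing the square, and then passes from the $2$-norm to the $1$-norm using that $\ketbra{\psi}-\ketbra{\phi}$ has rank at most two, so $\|\cdot\|_1\leq\sqrt{2}\,\|\cdot\|_2$ by Cauchy--Schwarz on the eigenvalue vector. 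Your alternative sketch (reducing to $1+|\braket{\phi}{\psi}|^2\geq 2\operatorname{Re}\braket{\phi}{\psi}$) is close in spirit to this, but the paper does not invoke the closed form $\|\ketbra{\phi}-\ketbra{\psi}\|_1=2\sqrt{1-|\braket{\phi}{\psi}|^2}$ directly. Your primary approach avoids both the detour through $\|\cdot\|_2$ and any explicit use of the rank-$2$ structure, and works verbatim in infinite dimensions; the paper's approach has the minor advantage of making the connection to the fidelity $|\braket{\phi}{\psi}|^2$ explicit along the way.
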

\begin{proof}
  First note that
  \begin{align}
    \begin{split}
      \|\ket{\psi}- \ket*{\phi}\|^2 &= 2-2\mathrm{Re}(\braket{\psi}{\phi}) \\
      &= 2-2\mathrm{Re}(\braket{\psi}{\phi}) -1 +|\braket{\psi}{\phi}|^2 + (1-|\braket{\psi}{\phi}|^2) \\
      &\geq 2-2\sqrt{\mathrm{Re}(\braket{\psi}{\phi})^2 + \mathrm{Im}(\braket{\psi}{\phi})^2 }  -1 +|\braket{\psi}{\phi}|^2 + (1-|\braket{\psi}{\phi}|^2) \\
      &\geq 1-2|\braket{\psi}{\phi}| + |\braket{\psi}{\phi}|^2  + (1-|\braket{\psi}{\phi}|^2) \\
      &=(1-|\braket{\psi}{\phi}|)^2  + (1-|\braket{\psi}{\phi}|^2) \\
      &\geq 1-|\braket{\psi}{\phi}|^2 \\
      &= \frac{1}{2} \|\ketbra{\psi}- \ketbra*{\phi} \|^2_2.
    \end{split}
  \end{align}

  Note that $\ketbra{\psi}-\ketbra*{\phi}$ is a Hermitian rank-$2$ operator, thus there exist states $\ket{e_1},\ket{e_2}$ such that 
  \begin{equation}
   \ketbra{\psi}-\ketbra*{\phi} = a_1 \ketbra{e_1} + a_2\ketbra{e_2}
  \end{equation}
  and, we find that 
  \begin{align}
    \begin{split}
      \| \ketbra{\psi}-\ketbra*{\phi}\|_1 &= |a_1| + |a_2| \\
      &= \langle (1,1),(|a_1|, |a_2|)\rangle \\
      &\leq \sqrt{2}(|a_1|^2 + |a_2|^2)^{1/2} \\
      &= \sqrt{2} \| \ketbra{\psi}-\ketbra*{\phi}\|_2.
    \end{split}
  \end{align}
  Combining this result with the above, we find
   \begin{equation}
     \|\ketbra{\phi} - \ketbra{\psi}\|_1 \leq 2\|\ket{\phi} - \ket{\psi}\|.
  \end{equation}
\end{proof}

\begin{lemma}\label{lem:entropy_all}
  Let $\mathcal{M}_{\mathrm{all}}$ be the set of all measurements. Then $\mathbf{P}_{\mathcal{M}_{\mathrm{all}}}$ has asymptotic ent\-ropy at most $(2^{2^{n+\log(n)+3}})_{n \in \mathbb{N}}$. 
\end{lemma}

\begin{proof}
  Let $\{\ket{n}\}_{n \in \mathcal{N}}$ be a basis,~$\mathcal{H}_{2^n}$ the span of the first $2^n$ basis elements, $\Pi_{2^n} = \sum_{n = 0}^{2^n-1} \ketbra{n}$, $M$ a measurement, $M|_{\Pi_{2^n}}$ the restriction of $M$ to the support of $\Pi_{2^n}$, and~$\mathcal{N}_n$ a $(2^{-2n})$-net on~$\mathcal{H}_{2^n}$. 
  
  For any measurable set $F \in \Sigma_{M}$, consider $M|_{\Pi_{2^n}}(F)$, which has support only on~$\mathcal{H}_{2^n}$. We decompose~$M|_{\Pi_{2^n}}(F)$ into its eigenbasis $M|_{\Pi_{2^n}}(F) = \sum_i p_i \ketbra{\psi_i}$. Let $\ket{\phi_i} \in \mathcal{N}_n$ be such that $\|\ket{\phi_i} - \ket{\psi_i}\| \leq 2^{-2n}$ and define $E_F^{M} = \sum_i p_i \ketbra{\phi_i}$. The corresponding decoding operation~$\mathcal{D}_{M}$ is defined in the obvious way.
  We find
  \begin{align}
    \begin{split}
      |\tr((M|_{\Pi_{2^n}}(F)-E_F^{M})\rho)| &\leq \|M|_{\Pi_{2^n}}(F)-E_F^{M}\|_1 \\
      &\leq \sum_i p_i \| \ketbra{\psi_i} - \ketbra{\phi_i}\|_1 \\
      &\leq 2^{-2n+1}  \tr(M|_{\Pi_{2^n}}(F)) \leq 2^{-n +1}
    \end{split}
  \end{align}
  where we used \cref{lem:norm_conversion} in the second to last inequality. 
  
  The same argument can be applied for every measurement $M$ and every $F \in \Sigma_M$. Thus, using for every measurement $M$ the sequence of decoding operations such that $\mathcal{D}^{(k)}_{M} = \mathcal{D}_{M}$, the representation~$\mathbf{P}_{\mathcal{M}_{\mathrm{all}}|_{\Pi_{2^n}}}$ can be $2^{-n+2}$-decoded from~$\mathcal{N}_n$, where one understands the elements of $\mathcal{N}_n$ as the corresponding rank-$1$ projectors. From \cref{lem:epsilon_net} it follows that there is a $2^{-2n}$-net~$\mathcal{N}_n$ such that 
  \begin{equation}
    |\mathcal{N}_{n}| = (1+2^{2n+1})^{2^{n+1}} \leq 2^{(2n+2)2^{n+1}}  = 2^{2^{n+\log(n)+3}}.
  \end{equation}
  Therefore, the asymptotic entropy of $\mathbf{P}_{\mathcal{M}_{\mathrm{all}}}$ is at most $(2^{2^{n+\log(n)+3}})_{n \in \mathbb{N}}$.
\end{proof}

\begin{restatable}{corollary}{maxentropy}\label{rem:upper_bound}
    The asymptotic entropy of any representation $\mathbf{P}_\mathcal{M}$ is at most $(2^{n+\log(n)+3})_{n \in \mathbb{N}}$.
\end{restatable}

\subsection{Proof of \texorpdfstring{\cref{cor:prod_conv}}{}}

\prodConv*

\begin{proof}
  Denote the two subsystems relative to which $\mathcal{M}_{\otimes}$ is product to by $A$ and $B$. Furthermore, let $\{\ket{n}_A\}_{n \in \mathbb{N}}, \{\ket{n}_B\}_{n \in \mathbb{N}}$ be orthonormal bases of $\mathcal{H}_A$ and $\mathcal{H}_B$ respectively. We denote by $\mathcal{H}_{N,A}$ the span of the first $N$ basis vectors and by $\Pi_{N,A}$ the projector onto~$\mathcal{H}_{N,A}$, and analogously for $\mathcal{H}_{N,B}$ and $\Pi_{N,B}$. On the joint system $AB$, we define the family of projectors~$\{\Pi_{2^n}\}_{n \in \mathbb{N}}$ as $\Pi_{2^n} = \Pi_{(2^{n/2}),A} \otimes \Pi_{(2^{n/2}),B}$ if $n$ is even and $\Pi_{2^n} = \Pi_{2^{(n+1)/2},A} \otimes \Pi_{(2^{(n-1)/2}),B}$ if~$n$ is odd. Let $\smash{\mathcal{N}^{(n)}_A}$ and $\smash{\mathcal{N}^{(n)}_B}$ be $2^{-2n}$-nets on the supports of the corresonding subspaces. Using these nets, we define the set of positive operators
  \begin{equation}
    \mathcal{O}_n \coloneq \bigl\{\ketbra{\psi}_A\otimes \ketbra{\phi}_B|\ket{\psi}_A \in \mathcal{N}^{(n)}_A, \ket{\phi}_B \in \mathcal{N}^{(n)}_B \bigr\}.
  \end{equation}
  
  For any measurement $M = M_A \otimes M_B \in \mathcal{M}_{\otimes}$, the associated $\sigma$-algebra, $\Sigma_{M_A \otimes M_B}$, is the product $\sigma$-algebra of the $\sigma$-algebras associated to the measurements $M_A$ and $M_B$, i.e., $\Sigma_{M_A \otimes M_B} = \Sigma_{M_A} \otimes \Sigma_{M_B}$. A product $\sigma$-algebra admits a semi-algebra of rectangles $\mathcal{S}_{AB} \coloneq \{ A\times B| A \in \Sigma_{M_A}, B \in \Sigma_{M_B}\}$. Denote by $\mathcal{S}_{AB}^{*}$ the semi-algebra which corresponds to the closure of $\mathcal{S}_{AB}$ under finite disjoint unions. Any $F \in \mathcal{S}_{AB}^{*}$ is the finite disjoint union of rectangles, i.e., $F = \bigsqcup_{i = 0}^{k} A_i \times B_i$. Therefore, the effect $M|_{\Pi_{2^n}}(F)$ can be written as 
  \begin{equation}
    M|_{\Pi_{2^n}}(F) = \sum_{k = 1}^{\ell}E_{A,k} \otimes E_{B,k},
  \end{equation}
  with $E_{A,k}$ and $E_{B,k}$ effects, such that the support of $E_{A,k} \otimes E_{B,k}$ is contained in the support of $\Pi_{2^n}$. Based on their diagonalization $E_{A,k} \otimes E_{B,k} = \sum_i p_{i,k} \ketbra{\psi_i}_{A,k} \otimes \sum_j q_{j,k} \ketbra{\phi_i}_{B,k}$, with $p_{i,k} ,q_{i,k} \geq 0$, we define
  \begin{equation}
    N_{F}^{M} = \sum_{k = 1}^{\ell} \sum_i p_{i,k} \ketbra*{\bar{\psi_i}}_{A,k} \otimes \sum_j q_{j,k} \ketbra*{\bar{\phi_i}}_{B,k}
  \end{equation}
  where $\ketbra*{\bar{\psi_i}}_{A,k} \in \mathcal{N}^{(n)}_A$ and $\ketbra*{\bar{\phi_i}}_{B,k} \in \mathcal{N}^{(n)}_B$ such that $\|\ket*{\bar{\psi_i}}_{A,k} - \ket{\psi_i}\| \leq 2^{-2n}$ and analogously for $\ket*{\bar{\phi}_i}_{B,k}$. The collection of operators $\{N^{M}_F\}_{F \in \mathcal{S}_{AB}^{*}}$ defines a decoding operation $\mathcal{D}_M((P_{E})_{E \in \mathcal{O}_n})$ in the obvious way. 
  Then we find that for all $F \in \mathcal{S}_{AB}^{*}$
  \begin{align}
    \begin{split}
      |\tr((M|_{\Pi_{2^n}}(F)- N_{F}^{M})\rho)| &\leq \|M|_{\Pi_{2^n}}(F)- N_F^{M}\|_1 \\
      &\leq \sum_{i,j,k} p_{i,k} q_{j,k} \|\ketbra{\psi_i}_{A,k}
      \otimes \ketbra{\phi_i}_{B,k} - \ketbra*{\bar{\psi_i}}_{A,k} \otimes \ketbra*{\bar{\phi_i}}_{A,k}\|_1 \\ 
      &\leq \sum_{i,j,k} p_{i,k} q_{j,k} \big(\|\ketbra{\psi_i}_{A,k}
      \otimes \big(\ketbra{\phi_j}_{B,k} - \ketbra*{\bar{\phi_j}}_{B,k}) \|_1  \\[-2ex]
      &\qquad \qquad \quad \,  + \|(\ketbra{\psi_i}_{A,k} - \ketbra*{\bar{\psi_i}}_{A,k}) \otimes \ketbra*{\bar{\phi_j}}_{B,k}\|_1 \big)\\
      &\leq 2^{-2n+2} \tr(M|_{\Pi_{2^n}}(F)) \leq 2^{-n+2}.
    \end{split}
  \end{align}
  As $M \in \mathcal{M}_{\otimes}$ was arbitrary, and for each $M$, choosing the corresponding sequence of decoding operations to consist only of $\mathcal{D}_M((P_{E})_{E \in \mathcal{O}_n})$, it follows that $\mathbf{P}_{\mathcal{M}_{\otimes}|\Pi_{2^n}}$ can be $2^{-n+3}$-decoded from $\mathcal{O}_n$.

  By \cref{lem:epsilon_net} there exist nets $\mathcal{N}^{(n)}_A, \mathcal{N}^{(n)}_B$ such that
  \begin{equation}
    |\mathcal{O}_{n}| \leq 2^{(2n +2)2^{\frac{n+1}{2}+2}} = 2^{\left((2n +2) 2^{-\frac{n-1}{2}+2}\right)2^{n}}.
  \end{equation}
  This implies that the asymptotic entropy of $\mathbf{P}_{\mathcal{M}_{\otimes}}$ is at most $(2^{((2n +2) 2^{-\frac{n-1}{2}+2})2^{n}})_{n \in \mathbb{N}}$. Thus, by \cref{thm:entropy}, the representation $\mathbf{P}_{\mathcal{M}_{\otimes}}$ is not robust.
\end{proof}

\begin{remark}
  Note that an analogous proof also shows that the set $\mathcal{M}_{\mathrm{sep}}$ of measurements whose effects are separable does not lead to a robust representation. In particular, this conclusion holds for any subset of $\mathcal{M}_{\mathrm{sep}}$, e.g, the set of measurements that can be implemented with local operations and classical communication (LOCC).
\end{remark}

\subsection{Related results}
\Cref{thm:entropy} characterized the robustness of a representation $\mathbf{P}_\mathcal{M}$ by how compressible it is. Using \cite[Theorem 1]{Aubrun2015} one can, additionally, find a characterization that is based on the asymptotic number of measurements in $\mathcal{M}$.

\begin{definition}\label{def:asymptotic_size}
    Let $\{\Pi_{2^n}\}_{n \in \mathbb{N}}$ be a nested family of projectors with $\mathrm{rank}(\Pi_{2^n}) \geq 2^n$, then the \emph{asymptotic size of $\mathcal{M}$ is at most} $(|\mathcal{M}|_{\Pi_{2^n}}|)_{n \in \mathbb{N}}$. 
\end{definition}

\begin{theorem}\label{thm:size}
    If there exists a zero-sequence $(\varepsilon_n)_{n \in \mathbb{N}}$ such that $\mathcal{M}$ has asymptotic size at most~$(2^{\varepsilon_n 2^{2 n}})_{n \in \mathbb{N}}$, then $\mathbf{P}_\mathcal{M}$ is not robust.
\end{theorem}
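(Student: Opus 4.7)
The plan is to follow the template of \cref{thm:entropy}, but replace the Haar-measure concentration argument with a direct appeal to the data-hiding bound \cite[Theorem 1]{Aubrun2015}. In the form needed here, that reference asserts that for any finite collection $\mathcal{N}$ of POVMs on a $d$-dimensional Hilbert space there exist two orthogonal states $\rho, \sigma$ such that $d_{\mathcal{N}}(\rho, \sigma) = O(\sqrt{\log|\mathcal{N}|}/d)$. Under the hypothesis $\log|\mathcal{M}|_{\Pi_{2^n}}| \leq \varepsilon_n 2^{2n}$ together with $\mathrm{rank}(\Pi_{2^n}) \geq 2^n$, this upper bound becomes $O(\sqrt{\varepsilon_n})$, which vanishes as $n \to \infty$.

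Concretely, I would fix the nested family $\{\Pi_{2^n}\}_{n \in \mathbb{N}}$ witnessing the asymptotic-size hypothesis, set $\mathcal{H}_n \coloneq \mathrm{range}(\Pi_{2^n})$, and consider the shell subspace $\mathcal{K}_n \coloneq \mathcal{H}_n \ominus \mathcal{H}_{n-1}$, whose dimension is at least $2^{n-1}$ by nesting. Applying the data-hiding result to $\mathcal{M}|_{\Pi_{2^n}}$ viewed as POVMs on $\mathcal{K}_n$ (further restricting can only merge distinct POVMs, so the cardinality bound survives and the dimension we divide by is still polynomially comparable to $2^n$) produces, for each $n$, orthogonal states $\rho^{(n)}, \sigma^{(n)} \in \mathcal{D}(\mathcal{K}_n)$ with $d_{\mathcal{M}|_{\Pi_{2^n}}}(\rho^{(n)}, \sigma^{(n)}) \to 0$. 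Since both states are supported inside $\mathcal{H}_n$, we have $P_M(\rho^{(n)}) = P_{M|_{\Pi_{2^n}}}(\rho^{(n)})$ for every $M \in \mathcal{M}$, and analogously for $\sigma^{(n)}$; hence $d_{\mathcal{M}}(\rho^{(n)}, \sigma^{(n)}) \to 0$ as well.

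Setting $\Sigma \coloneq \{\sigma^{(n)} : n \in \mathbb{N}\}$, it follows immediately that $\lim_{n \to \infty} d_{\mathcal{M}}(\rho^{(n)}, \Sigma) = 0$. Because the shells $\mathcal{K}_n$ are pairwise orthogonal, the supports of $\rho^{(n)}$ and $\sigma^{(m)}$ are orthogonal whenever $n \neq m$, giving $\delta(\rho^{(n)}, \sigma^{(m)}) = 1$. For $n = m$ the pair is orthogonal by construction, so $\delta(\rho^{(n)}, \sigma^{(n)}) = 1$ as well. Consequently $\delta(\rho^{(n)}, \Sigma) = 1$ for every $n$, contradicting the implication in \cref{def:preservation} and showing that $\mathbf{P}_{\mathcal{M}}$ is not robust.

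The main technical hurdle will be pinning down the exact form of \cite[Theorem 1]{Aubrun2015} needed here: one has to verify both that the relevant scaling is stated in terms of the number of distinct measurements (rather than, e.g., the total number of POVM elements or outcomes) and that the orthogonal pair supplied by the construction can be taken inside a prescribed subspace such as $\mathcal{K}_n$. If the theorem as stated handles only two-outcome POVMs, a standard dichotomization reduction should suffice; and if the scaling is phrased instead as $\log|\mathcal{N}|/d^2 \to 0$, it coincides with the condition $\varepsilon_n \to 0$ arising naturally here, so the asymptotics match. All remaining steps are essentially bookkeeping.
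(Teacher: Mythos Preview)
Your approach is correct in outline but takes a longer path than the paper. The paper's proof is three lines: \cite[Theorem~1]{Aubrun2015} is invoked directly in its ``Hermitian operator'' form to produce, for each $n$, a trace-class $A^{(n)}$ supported under $\Pi_{2^n}$ with $\|A^{(n)}\|_{\mathcal{M}} \leq \Delta_n \|A^{(n)}\|_1$ and $\Delta_n \to 0$; this witnesses that $\|\cdot\|_{\mathcal{M}}$ and $\|\cdot\|_1$ do not induce the same topology on $\mathrm{span}(\mathcal{D}(\mathcal{H}))$; then \cref{prop:topology} immediately gives non-robustness. Your shell construction and the explicit set $\Sigma = \{\sigma^{(n)}\}$ essentially reprove, by hand, the implication \ref{item:robust}$\Rightarrow$\ref{item:continuous} of \cref{prop:topology} for this particular sequence, which the paper has already established in full generality. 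So what you gain is a self-contained argument that bypasses \cref{prop:topology}; what you lose is brevity, and you have to work in the ``orthogonal states'' rather than the ``Hermitian operator'' formulation of the data-hiding bound, which adds the bookkeeping you flag in your last paragraph.

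One concrete gap: the claim $\dim \mathcal{K}_n \geq 2^{n-1}$ does \emph{not} follow from nesting alone. The definition only requires $\mathrm{rank}(\Pi_{2^n}) \geq 2^n$, with no upper bound, so one can have $\mathrm{rank}(\Pi_{2^{n-1}}) = \mathrm{rank}(\Pi_{2^n})$ for many $n$, making the corresponding shells trivial. This is easily repaired by passing to a subsequence $(n_k)$ along which the ranks at least double, but as stated the argument is incomplete.
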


\begin{proof}
  The assumption $|\mathcal{M}_{\Pi_{2^n}}| \leq 2^{\varepsilon_n 2^{2 n}}$ implies that $|\mathcal{M}_{\Pi_{2^n}}| \leq e^{\ln(2)\varepsilon_n 2^{2n}}$. From \cite[Theorem 1]{Aubrun2015} it follows that there are constants $C,c > 0$ such that there exists a Hermitian trace-class operator $A^{(n)}$ with $\|A^{(n)}\|_{\mathcal{M}} \leq \Delta_n \|A^{(n)}\|_{1}$ where $\smash{\Delta_n = \max(\sqrt{\frac{\ln(2) \varepsilon_n}{c}},C 2^{- \frac{n}{2}})}$. Therefore, the norms $\|\cdot \|_{\mathcal{M}}$ and $\| \cdot \|_1$ are not equivalent. From \cref{prop:topology}, it follows that~$\mathbf{P}_{\mathcal{M}}$ is not robust. 
\end{proof}

Combined with \cref{rem:upper_bound}, this theorem shows that a robust representation $\mathbf{P}_{\mathcal{M}}$ is highly compressible: the number of measurements in $\mathcal{M}_{\Pi_{2^n}}$, and thus also the number of entries in $\mathbf{P}_{\mathcal{M}}$, needs to scale at least as $2^{2^{2n}}$, but there exists an encoding into a set $\mathcal{O}$ with only of the order of $2^{n 2^n}$ elements. 

\section{Proof of \texorpdfstring{\cref{thm:GPT}}{}}\label{app:GPT}
Before we go on to prove \cref{thm:GPT}, we review the necessary aspects of GPTs for this paper.
\subsection{GPTs}
\paragraph{States and measurements.}
Operationally, a state of a GPT system is understood as a particular preparation procedure. Consequently, the state space is the set of all preparation procedures. It is assumed that the state space is convex. The convex mixture $p\rho + (1-p) \sigma$ of two states $\rho, \sigma$ is operationally understood as the procedure that prepares $\rho$ with probability $p$ and $\sigma$ with probability~$(1-p)$.\footnote{An important assumption here is that the randomness that determines whether $\sigma$ or $\rho$ is prepared is independent of any randomness that is used in the preparation procedure of $\rho$ or $\sigma$.} Furthermore, two preparation procedures that lead to the same probability distributions for all measurements that can be performed on this system, are treated as the same state. This motivates the following definitions. 

\begin{definition}\label{def:GPT_state_space}
    The \emph{state space} of a system $A$ is a convex subset $\mathcal{S}_A \subset V_A$ of an $\mathbb{R}$-vector space $V_A$\footnote{The dimension of this vector space may be unbounded.} such that there exists a linear function $\mathbf{1}_A: V_A \to \mathbb{R}$ with the property $\mathbf{1}_A(\mathcal{S}_A) = 1$.

    The set of \emph{effects} $\mathcal{E}_A$ of system $A$ are a subset of the dual space $\mathcal{E}_A \subseteq V_A^{*}$ such that $\forall E \in \mathcal{E}_A: E(\mathcal{S}_A) \subseteq [0,1]$ and 
    \begin{equation}
      \forall \omega_1, \omega_2 \in \mathcal{S}_A: \left(\forall E \in  \mathcal{E}_A: \ E(\omega_1) = E(\omega_2)\right) \implies \omega_1 = \omega_2
    \end{equation}
    as well as 
    \begin{equation}
      \forall E_1, E_2 \in \mathcal{E}_A: \left(\forall \omega \in  \mathcal{S}_A: \ E_1(\omega) = E_2(\omega)\right) \implies E_1 = E_2.
    \end{equation}
    We call a set of effects $\{E_{i}\}_{i = 1}^{N} \subseteq \mathcal{E}_A$ a \emph{measurement} if 
    \begin{equation}
        \sum_{i = 1}^{N} E_{i} = \mathbf{1}_A.
    \end{equation}
\end{definition}

As in quantum theory, we define a tomographically complete set of measurements.
\begin{definition}
  A set of measurements $\mathcal{M}$ is \emph{tomographically complete} if the map 
  \begin{equation}
    v \in \mathcal{S}_A \mapsto (P_{M}(v))_{M \in \mathcal{M}}
  \end{equation}
  is injective, where $P_M(\rho)$ is the probability distribution of the measurement $\mathcal{M}$.
\end{definition}

\paragraph{System composition.}
There is no single rule how the state spaces of two systems $A$ and $B$ compose that applies to any GPT. The only requirement is that states can be independently composed. Technically, this means that for any two systems $A,B$ there exists a bilinear map
\begin{align}
  \begin{split}
    \imath: V_A \times V_B &\to V_{AB}, \\
    (\omega_A,\omega_B) &\mapsto \omega_A\omega_B,
  \end{split}
\end{align}
such that $\mathrm{Im}(\imath|_{\mathcal{S}_A \times \mathcal{S}_B}) \subset \mathcal{S}_{AB}$ and for any two effects $E_{A}\in\mathcal{E}_A, E_{B}\in\mathcal{E}_B$ there is an effect $E_{A} E_{B}\in\mathcal{E}_{AB}$ such that
\begin{equation}
    (E_{A} E_{B})(\omega_A\omega_B) = E_{A}(\omega_A)\,E_{B}(\omega_B).
\end{equation}
Often one additionally requires that the composition rule satisfies the so-called \emph{local tomography} assumption.
\begin{definition}\namedlabel{ass:local_tomography}{local tomography} 
  A GPT satisfies the \emph{local tomography} assumption if, for any two systems $A$ and $B$, the set of local measurements $\mathcal{M}_{\otimes}$ is tomographically complete.
\end{definition}

\paragraph{Metric.} In most treatments of GPTs there is no explicit metric defined on the state space. Motivated by the trace distance, we use a metric defined in analogy to the trace distance. A similar metric was already introduced in \cite{Chiribella2010,Chiribella2011,DAriano_2017} for operational probabilistic theories, a close relative to GPTs. 
\begin{definition}
  The \emph{trace distance} $\delta$ on the state space $\mathcal{S}_A$ is defined by
  \begin{equation}
    \delta(\rho,\sigma) \coloneq \frac{1}{2}\sup_{M \in \mathcal{M}_A} \|P_M(\rho)-P_M(\sigma)\|_1.
  \end{equation}
\end{definition}
\noindent

 This metric $\delta$ satisfies the composability criterion~\eqref{eq:stability} if the system $A$ has the property that for all systems $B$
\begin{equation} \label{eq:compprod}
  \forall \rho_B \in \mathcal{S}_B, M \in \mathcal{M}_{AB}: \ M(\cdot \otimes \rho_B) \in \mathcal{M}_{A}.
\end{equation}
As we did in quantum theory, we can also define a metric associated to a tomographically complete measurement set $\mathcal{M}$.
\begin{definition}
  For any tomographically complete set of measurements $\mathcal{M}$ we define the metric
  \begin{equation}
    d_{\mathcal{M}}(\rho,\sigma) \coloneq \frac{1}{2} \sup_{M \in \mathcal{M}} \|P_M(\rho)-P_M(\sigma)\|_1.
  \end{equation}
\end{definition}
\noindent
As in quantum theory, we define the notion of a fragile measurement set~$\mathcal{M}$.
\begin{definition}
  A measurement set $\mathcal{M}$ on a system $A$, is called \emph{fragile} there is an effects $M \in \mathcal{E}_A$ such that the topologies $d_{\mathcal{M}}$ and $d_{\mathcal{M} \cup \{(E, \mathbf{1}_A - E)\}}$ are different. 
\end{definition}

\subsection{Constructing the GPT for \texorpdfstring{\cref{thm:GPT}}{}}
We now construct the GPT that proves \cref{thm:GPT}. This GPT has two elementary types of systems: keys and locks. All other types of systems are obtained by composing key and lock systems. We start by introducing the elementary systems.

\paragraph{Lock systems.} Before we give the formal definition, we give the intuition behind a lock system: A lock system acts like a lock that takes a bit string as input and opens if this bit string matches an internally stored one. More technically, we model a lock as a system where for every\footnote{We denote by $\{0,1\}^*$ the set of all bit strings, including the empty bit string.} bit string $s \in \{0,1\}^*$, corresponding to the input to the lock, there is a measurement consisting of two effects $\EL{s}{\yes}$ and $\EL{s}{\no}$, corresponding to the outcome that the lock opens or stays closed, respectively. Furthermore, for every bit string~$k$ there is a state of the lock $\SL{k}$ where the lock opens upon the input~$k$. Graphically, we depict a measurement with input $s$ on this state by 
\begin{equation}
  \EL{s}{\yes}(\SL{k}) = \lock{k}{s}{\yes} = 1.
\end{equation}
To fully specify the state $\SL{k}$, we also need to define the behaviour of the lock when the input $s$ is different from $k$. If the input bit string $s$ is longer than $k$, the lock opens if the first $|k|$ bits of $s$ match with those of $k$ --- the lock simply ignores the superfluous input. If $s$ is shorter than $k$, then the lock just randomly generates more bits, appends them to $s$, and checks if this new bit string agrees with $k$. In summary, 
\begin{equation}
 \EL{s}{\yes}(\SL{k}) = \lock{k}{s}{\yes} = \begin{cases}
    \delta_{s,k} & \text{if } |s| \geq |k|  ,\\[4pt]
    \frac{\delta_{s,k}}{2^{|k|-|s|}}
     & \text{else }
  \end{cases}
\end{equation}
where $|\cdot|$ denotes the length of a bit string and $\delta_{s,k}=1$ if deleting the trailing bits of the longer bit string results in two identical bit strings, and $\delta_{s,k}=0$ otherwise. Sometimes locks can be stubborn, and they do not open no matter what you do.\footnote{Maybe an angle grinder would open it, but we do not want to harm our object of study.} We model this behaviour by a state $\SL{\bot}$ that does not open for any input, i.e., 
\begin{equation}
  \forall s \in \{0,1\}^*: \, \EL{s}{\yes}(\SL{\bot}) = \lock{\bot}{s}{\yes} = 0.
\end{equation} 
The state space of a lock system is then the convex hull of $\{\SL{k}\}_{k \in \{0,1\}^* \cup \{\bot\}}$. 

Before we state the formal definition, we must introduce some notation. We denote by~$\chi_{I}$ is the characteristic function of the set $I$. For a bit string $s \in \{0,1\}^n$, we define the interval $I_s = [0.s,0.s+2^{-n}]$ where $0.s$ understood as the rational number $r$ with $0.s$ as its binary expansion.
\begin{definition} The state space is of a \emph{lock system} is 
    \begin{equation}
      \mathcal{S}_L = \left\{(f,1)| f\in \mathrm{conv}(\{\chi_{I_s}| s \in \{0,1\}^*\} \cup \{0\}) \right\} \subset V_L
    \end{equation}
    with  $V_L = L^{1}([0,1]) \oplus \mathbb{R}$ and the $\mathbf{1}_L$-effect is given by $\mathbf{1}_L(f,c) = c$.
   
    For every $s \in \{0,1\}^*$ and $r \in \{\yes, \no\}$ there is an effect $\EL{s}{r}$. The set of effects $\mathcal{E}_L$ is given by all convex combination of these effects. The action of the effect $\EL{s}{r}$ on an element of $V_L$ is
    \begin{align}
      \EL{s}{\yes}(f,c) &= \frac{1}{|I_s|}\int_{I_s} f(r) \, dr \\
      \EL{s}{\no} &= \mathbf{1}_L - \EL{s}{\yes}
    \end{align}
\end{definition}
It is easy to see that this combination of states and effects satisfies \cref{def:GPT_state_space}. The states $\SL{k}$ and $\SL{\bot}$ we intuitively introduced before, are formally given by
\begin{align}
  \SL{k} &\coloneq  (\chi_{I_k}, 1) \\
  \SL{\bot} &\coloneq (0, 1).
\end{align}
Indeed, these states have the desired behaviour when measured, as
\begin{align}
  \EL{s}{\yes}(\SL{\bot}) &= 0 \\
  \EL{s}{\yes}(\SK{k}) &= \frac{1}{|I_s|} \int_{I_s} \chi_{k} \, dx = \frac{|I_s \cap I_k|}{|I_s|} = \begin{cases}
    \delta_{s,k} \text{ if } |s| \geq |k|  \\
    \frac{\delta_{s,k}}{2^{|k|-|s|}} \text{ else}
  \end{cases}.
\end{align}

\paragraph{Key systems.}
Intuitively, a key system is a system that has an internally stored bit string, the key. The system can be queried to output the first $n$ bits of the key, where $n$ is any natural number (including $0$). More technically speaking, there is a measurement consisting of $2^n$ effects $\{\EK{n}{s}\}_{s \in \{0,1\}^n}$, corresponding to the $2^n$ possibilities for the first $n$ bits of the key. For every bit string $k \in \{0,1\}^{*}$, there is a state of the key $\SK{k}$, such that if the first $n$ bits of the key are measured, the output is the first $n$ bits of $k$. In particular, if $n = |k|$, we have
\begin{equation}
  \EK{|k|}{k}(\SK{k}) = \key{k}{|k|}{k} = 1.
\end{equation}
If $n > |k|$, the key system randomly generates bits and appends them to $k$ until the resulting bit string has length $n$. In particular, if a key system has no key stored, i.e., it is in state~$\SK{\emptyset}$, measuring the first $n$ bits of the key yields a uniform distribution over all $n$-bit strings. We summarize the behaviour of these states
\begin{align}
  \EK{n}{s}(\SK{k}) = \key{k}{n}{s} = \begin{cases}
    \delta_{s,k} &\text{ if } |k| \geq n \\
    \frac{\delta_{s,k}}{2^{n-|k|}} &\text{ else }
  \end{cases}.
\end{align}
The state space of a key system is then defined as the convex hull of $\{\SK{k}\}_{k \in \{0,1\}^*}$. To make this a valid state space, we need to ensure that states which cannot be distinguished by two measurements are identical. The following formal definition takes care of this.
\begin{definition}[Key systems $K$]
   The state space $\mathcal{S}_K$ is 
    \begin{equation}
      \mathcal{S}_K = \mathrm{conv}\left(\left\{\frac{\chi_{I_s}}{|I_s|} \Big| s \in \{0,1\}^*\right\}\right) \subset L^{1}([0,1])
    \end{equation}
    The unit effect $\mathbf{1}_K$ is $\mathbf{1}_K(f) = \int_{[0,1]} f(x) \, dx$. The set of effects $\mathcal{E}_K$ is the convex hull of effects of the form~$\sum_{s \in R \subset \{0,1\}^n} \EK{n}{s}$. The action of an effect~$\EK{n}{s}$ on $f \in L^{1}([0,1])$ is
    \begin{equation}
      \EK{n}{s}(f) = \int_{I_s} f(x) \, dx.
    \end{equation}
\end{definition}

It is easy to see that, \cref{def:GPT_state_space} is satisfied by this state space and set of effects. The states $\SK{k}$ and $\SK{\emptyset}$ we have intuitively introduced are formally given by
\begin{align}
  \SK{k} \coloneq \frac{1}{|I_k|} \chi_{I_k}
\end{align}
where for $k = \emptyset, I_k = [0,1]$. These states indeed have the desired behaviour when measured
\begin{align}
  \EK{n}{s}(\SK{k}) &= \int_{I_s} \chi_{I_k}(x) \, dx =\frac{|I_s \cap I_k|}{|I_k|} = \begin{cases}
    \delta_{s,k} &\text{ if } |k| \geq n \\
    \frac{\delta_{s,k}}{2^{n-|k|}} &\text{ else}.
  \end{cases}
\end{align}

 \paragraph{Composition of keys and locks.} To define the composition of key and lock systems, we consider the smallest possible composed state space that still satisfies the requirements of the GPT framework. This means that we only allow for mixtures of product states. On the measurement side, we allow only local measurements and measurements that can be implemented by first measuring one system and then determine the measurement on the next system based on this outcome. 
 \begin{definition}[Composition rule]\label{def:composition} 
  When composing a key and a lock system the function~$\imath$ is the tensor product $\otimes: V_K \times V_L \to V_K \otimes V_L$. The state space of the composed system is defined as the convex hull of $\mathcal{S}_{KL} \coloneq \mathrm{conv}(\mathcal{S}_{K} \otimes \mathcal{S}_L)$. The identity effect is $\mathbf{1}_{KL} \coloneq \mathbf{1}_{K} \otimes \mathbf{1}_{L}$. The set of effects $\mathcal{E}_{KL}$ are all linear functionals $E$ such that $E(\mathcal{S}_{KL}) \subseteq [0,1]$ and 
  \begin{align}
    \begin{split}
      \exists n \in \mathbb{N},\;  \{p_i\}_{i\in \{1,\dots,n\}} \in \mathbb{R}_{+},\; \{E_{K,i}\}_{i\in \{1,\dots,n\}} \subset \mathcal{E}_K,\; &\{E_{L,i}\}_{i\in \{1,\dots,n\}} \subset \mathcal{E}_L: \\
      E &= \sum_{i= 1}^{n}p_i E_{K,i} \otimes E_{L,i} \\  \text{ or } \quad E &= \mathbf{1}_{KL} - \sum_{i=1}^np_i E_{K,i} \otimes E_{L,i}
    \end{split}
  \end{align}
  The composition of multiple keys and of multiple lock systems is defined analogously.
\end{definition}

One important measurement of a key-lock system is the measurement that uses the output of a measurement on the $K$ system to determine the input on the $L$ system. It essentially measures whether the input that opens the lock is the key that is stored in the key system. Mathematically, this measurement is given by the effects 
\begin{align}
  \begin{split} \label{eq:jointeffect}
\EKL{n}{\yes} & \coloneq \sum_{s\in\{0,1\}^n} \EK{n}{s}\otimes \EL{s}{\yes} \\
\EKL{n}{\no} & \coloneq \mathbf{1}_{KL} - \EKL{n}{\yes}.
  \end{split}
\end{align}
 On a state $\SK{k}\otimes \SL{k'}$ with $k,k' \in \{0,1\}^n$ this measurement acts as
\begin{equation}
  \EKL{n}{\yes}(\SK{k}\otimes \SL{k'})  = \wireKtoL{k}{n}{s}{k'}{\yes} = \delta_{k,k'}.
\end{equation}

\begin{remark}
  The metric $\delta$ satisfies the composability property of \cref{eq:stability}. This follows from~\eqref{eq:compprod} because any measurement is a sum of product effects with positive coefficients. 
\end{remark}

The following lemma shows that this GPT satisfies \ref{ass:local_tomography}.
\begin{lemma}\label{lem:local_tomography}
  Consider a GPT where for every bipartite system $AB$ the joint state space is given by linear combinations of product states, i.e., 
  \begin{equation}
    S_{AB} \subset \mathrm{span}(\imath(\mathcal{S}_A, \mathcal{S}_B)),
  \end{equation}
  then this GPT satisfies local tomography.
\end{lemma}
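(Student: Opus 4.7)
The plan is to take two states $\rho, \sigma \in \mathcal{S}_{AB}$ that give the same outcome statistics under every product measurement and argue that they must coincide. First, I would translate the hypothesis into linear algebra: every product effect $E_A E_B$ with $E_A \in \mathcal{E}_A$ and $E_B \in \mathcal{E}_B$ appears as an outcome of the product measurement $\{E_A, \mathbf{1}_A - E_A\} \otimes \{E_B, \mathbf{1}_B - E_B\}$, so the tomographic assumption forces $(E_A E_B)(\rho - \sigma) = 0$ for all such $E_A, E_B$. Using the hypothesis $\mathcal{S}_{AB} \subset \mathrm{span}(\imath(\mathcal{S}_A, \mathcal{S}_B))$, I can write $x := \rho - \sigma = \sum_{i=1}^N \lambda_i\, \imath(\omega_i, \tau_i)$ as a finite real combination of product states, reducing the task to showing $x = 0$ in $V_{AB}$.

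Second, I would perform a basis reduction. The subspaces $U := \mathrm{span}\{\omega_1, \ldots, \omega_N\} \subset V_A$ and $W := \mathrm{span}\{\tau_1, \ldots, \tau_N\} \subset V_B$ are finite dimensional, so I would pick bases $\{e_\alpha\}$ of $U$ and $\{f_\beta\}$ of $W$ and use bilinearity of $\imath$ to re-express $x = \sum_{\alpha,\beta} c_{\alpha\beta}\, \imath(e_\alpha, f_\beta)$. The identity $(E_A E_B)(\imath(\omega, \tau)) = E_A(\omega) E_B(\tau)$ holds on product states, and since both sides are bilinear in $(\omega, \tau)$ it extends to arbitrary $v \in V_A$, $w \in V_B$. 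The vanishing condition therefore becomes
\begin{equation*}
\sum_{\alpha, \beta} c_{\alpha\beta}\, E_A(e_\alpha)\, E_B(f_\beta) = 0 \quad \text{for all } E_A \in \mathcal{E}_A,\ E_B \in \mathcal{E}_B.
\end{equation*}

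The last ingredient I would need is a short separation sublemma: the restrictions $\{E_A|_U : E_A \in \mathcal{E}_A\}$ span $U^*$, and likewise for $W^*$. I would obtain this by showing that effects separate points of $\mathrm{span}(\mathcal{S}_A)$: if $u \in \mathrm{span}(\mathcal{S}_A)$ is annihilated by every effect, then in particular $\mathbf{1}_A(u) = 0$, and regrouping a representation $u = \sum_i a_i \omega_i$ into positive and negative parts yields $u = S(\omega^+ - \omega^-)$ with $\omega^\pm \in \mathcal{S}_A$ and $S \geq 0$; the state-separation axiom of Definition~\ref{def:GPT_state_space} then forces $\omega^+ = \omega^-$ and hence $u = 0$. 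Granted this, for each $(\alpha, \beta)$ I can realize the coordinate functionals $e_{\alpha'} \mapsto \delta_{\alpha\alpha'}$ and $f_{\beta'} \mapsto \delta_{\beta\beta'}$ as $\mathbb{R}$-linear combinations of effects, substitute them into the displayed identity, and extract $c_{\alpha\beta} = 0$. This gives $x = 0$, hence $\rho = \sigma$, proving local tomography.

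The only step requiring genuine thought is the separation sublemma, since Definition~\ref{def:GPT_state_space} postulates separation only of actual states, not of arbitrary vectors in $V_A$; the normalization functional $\mathbf{1}_A$ is what makes the bridge work, via the positive/negative decomposition above.
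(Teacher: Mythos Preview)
Your argument is correct. Both your proof and the paper's rest on the same key fact---that effects separate points of $\mathrm{span}(\mathcal{S}_A)$---but the packaging differs. The paper asserts this as ``the linear map $M_A: v \mapsto (E(v))_{E \in \mathcal{E}_A}$ is invertible'' without justification, and then builds an explicit inverse of the bipartite tomography map via the tensor product $M_A^{-1} \otimes M_B^{-1}$ and the universal property of $\otimes$. You instead reduce to a finite-dimensional situation (the span of the finitely many product states appearing in $\rho - \sigma$), prove the separation sublemma outright using the $\mathbf{1}_A$-normalisation and the state-separation axiom, and then extract each coefficient $c_{\alpha\beta}$ by bilinearity. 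Your route is more elementary and more self-contained; the paper's is slicker but leaves the injectivity of $M_A$ (your sublemma) to the reader. Both are fine proofs of the same statement.

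One cosmetic remark: in your sublemma you invoke $\mathbf{1}_A(u) = 0$ as following from ``$E(u)=0$ for all effects.'' Strictly, Definition~\ref{def:GPT_state_space} does not assert $\mathbf{1}_A \in \mathcal{E}_A$, but since any measurement $\{E_i\}$ satisfies $\sum_i E_i = \mathbf{1}_A$, the conclusion $\mathbf{1}_A(u)=0$ still follows. It would do no harm to say this explicitly.
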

\begin{proof}
  For every system $A$ the linear map
  \begin{equation}
    M_A: v \in \mathrm{span}(\mathcal{S}_A) \mapsto (E(v))_{E \in \mathcal{E}_A}
  \end{equation}
  is invertible. We define the map $M_AM_B$ on the joint state space by
  \begin{equation}
    M_AM_B\imath(\sigma_A,\sigma_B) = (E_A(\sigma_A)E_B(\sigma_B))_{E_A \in \mathcal{E}_A, E_B \in \mathcal{E}_B}
  \end{equation}
  and demanding linearity.
  Note that the vector space spanned by $(E_A(\sigma_A)E_B(\sigma_B))_{E_A \in \mathcal{E}_A, E_B \in \mathcal{E}_B}$ is isomorphic to the vector space spanned by $(E_A(\sigma_A))_{E_A \in \mathcal{E}_A} \otimes (E_B(\sigma_B))_{E_B \in \mathcal{E}_B}$. Let us denote the isomorphism by $\Phi$. We then find that 
  \begin{equation}
    M_A^{-1} \otimes M_B^{-1} \circ \Phi \circ M_AM_B\imath(\sigma_A,\sigma_B) = \sigma_A \otimes \sigma_B.
  \end{equation}
  By definition of the tensor product there exists a linear map $h$ such that $h(\sigma_A \otimes \sigma_B) = \imath(\sigma_A,\sigma_B)$. Thus, as all maps are linear, $h \circ M_A^{-1} \otimes M_B^{-1} \circ \Phi$ is the inverse of  $M_AM_B$. Hence, a state of the system $AB$ is uniquely determined by the statistics of local measurements. As the systems $AB$ were generic, the theory is locally tomographic.
\end{proof}

\paragraph{The topology of keys and locks.}
We are now ready to show that for this GPT the topology induced by $d_{\mathcal{M}_\otimes}$ on the state space is different form the topology induced by $\delta$. The intuition behind this proof to consider the sequence of states $(\rho_{KL}^{(n)})_{n \in \mathbb{N}}$ given by 
\begin{equation}
  \rho_{KL}^{(n)} \coloneq \sum_{k \in \{0,1\}^n} \key{k}{}{} \otimes \lock{k}{}{}\,. 
\end{equation}
Using only local measurements this state is close to the state $\SK{\emptyset} \otimes \SL{\bot}$. Intuitively, to distinguish these two states, one tries to provoke an opening of the lock. Using only local measurements, one needs to guess the input that opens the lock. This guess is correct only with probability $2^{-n}$. Therefore, the distance between $\rho_{KL}^{(n)}$ and ${\SK{\emptyset} \otimes \SL{\bot}}$ measured by $d_{\mathcal{M}_{\otimes}}$ vanishes as $n \to \infty$. However, if any measurement can be used, then the measurement that reads out the first $n$ bits of the key and uses this as an input to the lock opens the lock with certainty. Therefore, for any $n$ the distance between $\rho_{LK}^{(n)}$ and $\SK{\emptyset} \otimes \SL{\bot}$ measured by $\delta$ is $1$. We have thus constructed a sequence that converges with respect to $d_{\mathcal{M}_{\otimes}}$, but not with respect to $\delta$. This immediately implies that these two metrics induce different topologies. We formalize this intuitive argument by the following proposition. \Cref{thm:GPT} then follows from \cref{prop:GPT_topo,lem:GPT_stability}.

One may wonder what part of the proof of \cref{prop:topo} does not generalize to an arbitrary GPT. The proof uses a nested family of $d_{\mathcal{M}}$-continuous and finite-dimensional projectors $\{\Pi^{(n)}\}_{n \in \mathbb{N}}$ such that for any state $\rho$ we have that $\lim_{n \to \infty} \delta(\Pi^{(n)} \rho \Pi^{(n)}, \rho) = 0$. In quantum theory, the basis of the Hilbert space provides such a family, while this is unclear for a general GPT.

\begin{proposition}\label{prop:GPT_topo}
  The metrics $d_{\mathcal{M}_\otimes}$ and $\delta$ do not induce the same topology on $\mathcal{S}_{KL}$.
\end{proposition}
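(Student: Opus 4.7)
The plan is to follow the intuition sketched after the proposition and take
\begin{equation*}
\rho^{(n)}_{KL}\coloneq \frac{1}{2^n}\sum_{k\in\{0,1\}^n}\SK{k}\otimes\SL{k}, \qquad \sigma\coloneq\SK{\emptyset}\otimes\SL{\bot},
\end{equation*}
where $\rho^{(n)}_{KL}\in\mathcal{S}_{KL}$ by \cref{def:composition} as a convex mixture of product states. First I would establish the lower bound $\delta(\rho^{(n)}_{KL},\sigma)\geq 1$ using the joint lock-and-key effect $\EKL{n}{\yes}$ of \cref{eq:jointeffect}. A direct computation using $\EK{n}{s}(\SK{k})=\delta_{s,k}$ and $\EL{s}{\yes}(\SL{k})=\delta_{s,k}$ for $s,k\in\{0,1\}^n$ yields $\EKL{n}{\yes}(\rho^{(n)}_{KL})=1$, while $\EL{s}{\yes}(\SL{\bot})=0$ for every $s$ forces $\EKL{n}{\yes}(\sigma)=0$; the two-outcome measurement $\{\EKL{n}{\yes},\mathbf{1}_{KL}-\EKL{n}{\yes}\}$ then certifies $\delta(\rho^{(n)}_{KL},\sigma)\geq 1$ for every $n$.

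For the upper bound $d_{\mathcal{M}_{\otimes}}(\rho^{(n)}_{KL},\sigma)\to 0$ I would exploit the two identities $\SK{\emptyset}=\frac{1}{2^n}\sum_k\SK{k}$ and $\frac{1}{2^n}\sum_k\SL{k}=\frac{1}{2^n}\SL{\emptyset}+(1-\frac{1}{2^n})\SL{\bot}$ (both at fixed key-length $n$). The first rewrites the difference as $\rho^{(n)}_{KL}-\sigma=\frac{1}{2^n}\sum_k\SK{k}\otimes(\SL{k}-\SL{\bot})$. Any subset-sum effect coming from a product measurement on $KL$ has the form $E=\sum_a E_{K,a}\otimes F_{L,a}$, where $\{E_{K,a}\}$ is a measurement on $K$ and each $F_{L,a}=\sum_{b\in B_a}E_{L,b}$ is a coarse-graining of a single measurement $M_L=\{E_{L,b}\}_b$ on $L$. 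Applying such $E$ then produces
\begin{equation*}
E(\rho^{(n)}_{KL}-\sigma) = \frac{1}{2^n}\sum_{a,k}E_{K,a}(\SK{k})\,[F_{L,a}(\SL{k})-F_{L,a}(\SL{\bot})],
\end{equation*}
and the second identity shows that the $k$-average of the bracketed quantity equals $\frac{1}{2^n}[F_{L,a}(\SL{\emptyset})-F_{L,a}(\SL{\bot})]$, i.e.\ is uniformly $O(2^{-n})$.

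The main obstacle is that the $F_{L,a}$ may depend on the $K$-outcome $a$, encoding a ``guess the key and try to unlock'' strategy; the averaging above does not apply verbatim once $E_{K,a}(\SK{k})$ is correlated with the choice of $F_{L,a}$. I plan to overcome this by using the structural constraint that all $F_{L,a}$ are coarse-grainings of one and the same $M_L$: writing $\sum_a F_{L,a}=\sum_b|A_b|\,E_{L,b}$ with $A_b=\{a:(a,b)\in S\}$ and combining with $\sum_bE_{L,b}=\mathbf{1}_L$ caps the total lock-input ``weight'' that can be redistributed across $a$'s. This is precisely the mechanism already visible around \cref{eq:jointeffect}: the idealised key-matching effect $\EKL{n}{\yes}$ can only be realised inside a valid product measurement as the shrunken $\frac{1}{2^n}\EKL{n}{\yes}$, producing exactly the $2^{-n}$ scaling needed to conclude that $\rho^{(n)}_{KL}\to\sigma$ in $d_{\mathcal{M}_{\otimes}}$ while staying at unit trace distance in $\delta$, which contradicts topological equivalence on $\mathcal{S}_{KL}$.
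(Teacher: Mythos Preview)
Your sequence $\rho^{(n)}_{KL}$ and target $\sigma=\SK{\emptyset}\otimes\SL{\bot}$, and your lower bound $\delta(\rho^{(n)}_{KL},\sigma)=1$ via the joint effect $\EKL{n}{\yes}$, are exactly the paper's.

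For the upper bound, however, your argument is incomplete. The obstacle you name is genuine for the subset-sum formulation you adopt, but your proposed resolution (``capping the total lock-input weight'') is a gesture, not a bound; and your final remark that $\EKL{n}{\yes}$ ``can only be realised inside a valid product measurement as the shrunken $\frac{1}{2^n}\EKL{n}{\yes}$'' is incorrect---no scalar multiple of $\EKL{n}{\yes}$ is a subset-sum effect of any product measurement. What you are missing is the structural reduction the paper uses first: the extreme lock measurements are the two-outcome families $\{\EL{s}{\yes},\EL{s}{\no}\}$ for a \emph{single fixed} input $s$, and by convexity it suffices to treat those (with $|s|=n$ after a further easy reduction). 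Once $s$ is fixed, the lock side of a product measurement tries exactly one key, independently of the $K$-outcome, and the adaptive strategy you worry about simply cannot occur in $\mathcal{M}_\otimes$. The paper then computes the full $\ell_1$-distance directly: using $\EK{n}{\ell}\otimes\mathbf{1}_L(\rho^{(n)}_{KL}-\sigma)=0$ (your first identity) and $\EL{s}{\yes}(\SL{\bot})=0$,
\[
\|P_{M_K\otimes M_L}(\rho^{(n)}_{KL})-P_{M_K\otimes M_L}(\sigma)\|_1
= 2\sum_{\ell}\EK{n}{\ell}\otimes\EL{s}{\yes}(\rho^{(n)}_{KL})
= \frac{2}{2^n}\sum_{\ell}\EK{n}{\ell}(\SK{s})=2^{-n+1}.
\]
Your subset-sum route can also be completed after this reduction (each $F_{L,a}\in\{0,\EL{s}{\yes},\EL{s}{\no},\mathbf{1}_L\}$, and $|F_{L,a}(\SL{k})-F_{L,a}(\SL{\bot})|\le\delta_{s,k}$), but not without it.
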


\begin{proof}
  Consider the sequence of states $(\rho_{KL}^{(n)})_{n \in \mathbb{N}}$ given by 
  \begin{align}
    \rho_{KL}^{(n)} \coloneq  \frac{1}{2^n} \sum_{k \in \{0,1\}^n} \SK{k} \otimes \SL{k}. 
  \end{align}
  First note that, $\forall n \in \mathbb{N}: \delta(\rho^{(n)}_{KL}, \SK{\emptyset} \otimes \SL{\bot}) = 1$. To see this, consider the measurement given by $\{\EKL{n}{\yes}, \mathbf{1}_{KL}- \EKL{n}{\yes}\}$; see~\eqref{eq:jointeffect}.
  For this measurement, we find that 
  \begin{align}
    \begin{split}
      \EKL{n}{\yes}(\rho_{KL}^{(n)}) &= \frac{1}{2^n} \sum_{k,s \in \{0,1\}^{n}}  \EK{n}{s}(\SK{k}) \EL{s}{\yes}(\SL{k}) \\
      &= \frac{1}{2^n} \sum_{k,s \in \{0,1\}^{n}} \delta_{s,k} \EL{s}{\yes}(\SL{k}) = 1
    \end{split}
  \end{align}
  whereas 
  \begin{align}
    \EKL{n}{\yes}(\SK{\emptyset} \otimes \SL{\bot})  = \sum_{s \in \{0,1\}^{n}}  \EK{n}{s}(\SK{\emptyset})\EL{s}{\yes}(\SL{\bot}) = 0.
  \end{align}
  Thus, $\delta(\rho_{KL}^{(n)}, \SK{\emptyset} \otimes \SL{\bot}) = 1$.
  
  Let us now show that with respect to $d_{\mathcal{M}_\otimes}$ the sequence $\rho_{KL}^{(n)}$ converges to $\SK{\emptyset} \otimes \SL{\bot}$. Local measurements of key and lock systems are convex mixtures of measurements of the form $M_K \otimes M_L = \{\sum_{\ell \in J_i \subset \{0,1\}^n}\EK{n}{\ell} \otimes \EL{s}{r}\}_{i,r}$ where  $\cup_i J_i = \{0,1\}^n$ and $s \in \{0,1\}^*$. Therefore, it suffices to consider these measurements to calculate $d_{\mathcal{M}_\otimes}(\rho_{KL}^{(n)},\SK{\emptyset} \otimes \SL{\bot})$. Furthermore, note that when $|s| < n$, the measurement $\{\sum_{\ell \in I_j \subset \{0,1\}^n}\EK{n}{\ell} \otimes \EL{s}{r}\}_{j,r}$ is a convex mixture of measurements with $|s| = n$. Moreover, when a measurement is applied to the states $\rho_{KL}^{(n)}, \SK{\emptyset} \otimes \SL{\bot}$ and $|s| > n$, the resulting probability distribution is the same as when $s$ is replaced by the bit string $s'$ given by the first $n$ bits of $s$. Therefore, we only need to consider bit strings $s$ of length $n$.
  It is also useful to note that
  \begin{equation}\label{eq:identity}
      |\EK{n}{s} \otimes \mathbf{1}_L (\rho_{KL}^{(n)}-\SK{\emptyset} \otimes \SL{\bot})| 
      = |2^{-n} \sum_{k \in \{0,1\}^n}  \EK{n}{s}(\SK{k} - \SK{\emptyset})| = 0.
  \end{equation}
  Thus, for any such measurement $M_K \otimes M_L$
  \begin{align}
    \begin{split}
       \|P_{M_K \otimes M_L}(\rho_{KL}^{(n)}) - &P_{M_K \otimes M_L}(\SK{\emptyset} \otimes \SL{\bot})\|_1 \\
       &\leq \sum_{\ell \in  \{0,1\}^n} \sum_{r \in \{\yes, \no \}} \big| \EK{n}{\ell} \otimes \EL{s}{r} (\rho_{KL}^{(n)} - \SK{\emptyset} \otimes \SL{\bot})\big| \\
       &= 2 \sum_{\ell \in  \{0,1\}^n}  \EK{n}{\ell} \otimes \EL{s}{\yes} (\rho_{KL}^{(n)}) \\
        &\leq  2^{-n+1} \sum_{\ell \in  \{0,1\}^n, k \in \{0,1\}^n}  \EK{n}{\ell} \otimes \EL{s}{\yes} (\SK{k} \otimes \SL{k}) \\
        &=  2^{-n+1} \sum_{\ell \in  \{0,1\}^n}  \EK{n}{\ell} (\SK{s}) = 2^{-n+1}
    \end{split}
  \end{align}
  where in the first equality we used \cref{eq:identity}, $\EL{s}{\no} = \mathbf{1}_L -\EL{s}{\yes}$ and $\EL{s}{\yes}(\SK{\emptyset} \otimes \SL{\bot})=0$. Thus, $\lim_{n \to \infty} d_{\mathcal{M}_\otimes}(\rho_{KL}^{(n)},\SK{\emptyset} \otimes \SL{\bot}) = 0$. 
  
  In summary, we have found a sequence $(\rho_{KL}^{(n)})_{n \in \mathbb{N}}$ of states that converges with respect to~$d_{\mathcal{M}_\otimes}$, but not with respect to $\delta$. So, the topologies induced by these two metrics are not identical on the state space.
\end{proof}

\begin{lemma}\label{lem:GPT_stability}
  The measurement set $\mathcal{M}_\otimes$ is not fragile.
\end{lemma}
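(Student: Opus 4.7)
The plan is to mirror the quantum argument of \cref{rem:stable}, exploiting the fact that in this GPT the effects on the composite key--lock system have an especially simple algebraic form. By \cref{def:composition}, every $E \in \mathcal{E}_{KL}$ can be written either as $E = \sum_{i=1}^{n} p_i\, E_{K,i}\otimes E_{L,i}$ with $p_i \geq 0$, or as $\mathbf{1}_{KL}$ minus such a finite sum. In particular, unlike in the quantum case, no closure/limit is needed: every admissible effect on $KL$ lies \emph{exactly} (not merely approximately) in the real linear span of product effects. This is the structural fact that will do all the work.

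Fix an arbitrary $E \in \mathcal{E}_{KL}$. The inclusion $\mathcal{M}_\otimes \subseteq \mathcal{M}_\otimes \cup \{(E,\mathbf{1}_{KL}-E)\}$ gives $d_{\mathcal{M}_\otimes} \leq d_{\mathcal{M}_\otimes \cup \{(E,\mathbf{1}_{KL}-E)\}}$ for free, hence one of the two topology inclusions is immediate. For the reverse, I will take a sequence $(\rho_n)$ converging to $\rho$ in $d_{\mathcal{M}_\otimes}$ and bound $|E(\rho_n-\rho)|$. Since $|\mathbf{1}_{KL}(\rho_n-\rho)| = 0$ (both are states), both forms of $E$ reduce to the same task: controlling $\big|\sum_i p_i\, E_{K,i} \otimes E_{L,i}(\rho_n - \rho)\big|$ in terms of product-measurement statistics.

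The key auxiliary observation will be that for any $E_K \in \mathcal{E}_K$ and $E_L \in \mathcal{E}_L$, the complements $\mathbf{1}_K - E_K$ and $\mathbf{1}_L - E_L$ are again valid effects. On the key side, a convex combination of set-indicator effects $\sum_{s\in R}\EK{n}{s}$ has as complement the convex combination of $\sum_{s\in R^c}\EK{n}{s}$, which is again of the same form; on the lock side, the same closure is built into the definition via the pairing of $\EL{s}{\yes}$ with $\EL{s}{\no}$. Consequently, the binary product measurement $\{E_{K,i}, \mathbf{1}_K - E_{K,i}\}\otimes\{E_{L,i}, \mathbf{1}_L - E_{L,i}\}$ lies in $\mathcal{M}_\otimes$, so convergence in $d_{\mathcal{M}_\otimes}$ delivers $|E_{K,i}\otimes E_{L,i}(\rho_n-\rho)| \to 0$ summand by summand. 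The triangle inequality then yields $|E(\rho_n-\rho)| \to 0$, and applied to the binary measurement $\{E,\mathbf{1}_{KL}-E\}$ this gives $d_{\mathcal{M}_\otimes \cup \{(E,\mathbf{1}_{KL}-E)\}}(\rho_n,\rho) \to 0$.

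I do not anticipate a serious obstacle: because every composite effect is a \emph{finite} positive linear combination of product effects, the technically delicate approximation-in-operator-norm step that is required in the quantum proof of \cref{rem:stable} simply disappears. The only point that deserves care in the writeup is the verification that $\mathbf{1}_K - E_K$ and $\mathbf{1}_L - E_L$ remain in $\mathcal{E}_K$ and $\mathcal{E}_L$ respectively, since these sets were defined as convex hulls of restricted families rather than by a general positivity condition.
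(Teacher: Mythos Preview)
Your proposal is correct and follows essentially the same route as the paper: write $E$ (or its complement) as a finite positive combination $\sum_i p_i\, E_{K,i}\otimes E_{L,i}$, observe that each product effect appears in the product measurement $\{E_{K,i},\mathbf{1}_K-E_{K,i}\}\otimes\{E_{L,i},\mathbf{1}_L-E_{L,i}\}\in\mathcal{M}_\otimes$, and conclude $|E(\rho_n-\rho)|\to 0$ via the triangle inequality. Your extra care in verifying that $\mathbf{1}_K-E_K\in\mathcal{E}_K$ and $\mathbf{1}_L-E_L\in\mathcal{E}_L$ is a welcome addition that the paper leaves implicit.
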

\begin{proof}
  For any effect $E \in \mathcal{E}_{SK}$ there exists an $m \in \mathbb{N}$ such that $E = \sum_{i = 1}^{m} p_i E_{K,i} \otimes E_{L,i}$. Let $(\rho_n)_{n \in \mathbb{N}}$ be a sequence that converges to $\rho$ with respect to $d_{\mathcal{M}_\otimes}$. Then this sequence also converges with respect to $d_{\mathcal{M}_\otimes \cup \{E, \mathbf{1}_{SK}-E\}}$, as 
  \begin{align}
    \begin{split}
      \lim_{n \to \infty} d_{\mathcal{M}_\otimes \cup \{E, \mathbf{1}_{SK}-E\}}(\rho_n,\rho) &\leq \lim_{n \to \infty} d_{\mathcal{M}_\otimes}(\rho_n,\rho) + \lim_{n \to \infty} |E(\rho_n-\rho)| \\
      &\leq \lim_{n \to \infty} \sum_{i = 1}^m p_i (E_{K,i} \otimes E_{L,i})(\rho_n) \\
      &\leq \sum_{i = 1}^{m} p_i  \lim_{n \to \infty} d_{\mathcal{M}_{\otimes}}(\rho_n,\rho) = 0,
    \end{split}
  \end{align}
  where we used in the last inequality that for any product effect $E_{K} \otimes E_{L}$ the measurement $M = \{E_{K} \otimes E_{L}, (\mathbf{1}_K-E_{K}) \otimes E_{L}, E_{K} \otimes (\mathbf{1}_L-E_{L}), (\mathbf{1}_K-E_{K}) \otimes (\mathbf{1}_L-E_{L})\}$ is a local measurement.
\end{proof}

\end{document}